\theoremstyle{definition}
\newtheorem{defn}[theorem]{Definition}
\title{Quantitative Hennessy-Milner Theorems via Notions of Density}
\author{Jonas Forster, Sergey Goncharov}{Friedrich-Alexander-Universität Erlangen-Nürnberg,
  Germany}{}{}{}
\author
{Dirk Hofmann}{Center for Research and Development in Mathematics and Applications, University of Aveiro, Portugal}{dirk@ua.pt}{}{}
\author{Pedro Nora, Lutz Schr{\"o}der, Paul Wild}{Friedrich-Alexander-Universität Erlangen-Nürnberg,
  Germany}{}{}{}
\authorrunning{J. Forster, S. Goncharov, D. Hofmann, P. Nora, L. Schr{\"o}der, P. Wild}
\keywords{behavioural distances; characteristic modal logics; closure operators; density; Hennessy-Milner theorems; quantale-enriched categories; Stone-Weierstra{\ss} theorems.}
\newcommand*\dif{\mathop{}\!\mathrm{d}}
\newcommand{\sem}[1]{\llbracket #1 \rrbracket}
\newcommand{\weighted}{\mathcal{W}}
\newcommand{\kantorovich}{\mathcal{K}}
\newcommand{\paraP}{\mathcal{B}}
\newcommand{\pow}{\mathcal{P}}
\def\gettexliveversion#1(#2 #3 #4#5#6#7#8)#9\relax{#4#5#6#7}
\edef\texliveversion{\expandafter\gettexliveversion\pdftexbanner\relax}
\newlist{tfae}{enumerate}{1}
\setlist[tfae,1]{label=(\roman*)}
\def\nlabel#1#2{\begingroup #2
  \def\@currentlabel{#2}
  \phantomsection\label{#1}\endgroup
}
\newlist{conditions}{description}{1}
\setlist[conditions]{font=\normalfont\space,labelindent=\parindent}
\newcommand{\calI}{\mathcal{I}}
\newcommand{\calJ}{\mathcal{J}}
\newcommand{\calL}{\mathcal{L}}
\newcommand{\cId}{\mathbf{Id}}
\newcommand{\cL}{\mathbf{L}}
\newcommand{\cFun}{\mathbf{Fun}}
\newcommand{\cC}{\mathbf{C}}
\newcommand{\cInf}{\mathbf{Inf}}
\newcommand{\ccInf}{\mathbf{cInf^\vee}}
\DeclareSymbolFont{symbolsA}{U}{txsya}{m}{n}
\DeclareSymbolFont{symbolsC}{U}{txsyc}{m}{n}
\DeclareMathSymbol{\multimapdot}{\mathrel}{symbolsC}{20}
\DeclareMathSymbol{\multimapdotinv}{\mathrel}{symbolsC}{21}
\DeclareMathSymbol{\multimap}{\mathrel}{symbolsA}{40}
\DeclareMathSymbol{\multimapinv}{\mathrel}{symbolsC}{18}
\newcommand{\SET}{\catfont{Set}}
\newcommand{\ORD}{\catfont{Ord}}
\newcommand{\EQ}{\catfont{Equ}}
\newcommand{\POST}{\catfont{Pos}}
\newcommand{\MET}{\catfont{Met}}
\newcommand{\UMET}{\catfont{UMet}}
\newcommand{\BMET}{\catfont{BMet}}
\newcommand{\Cats}[1]{#1\text{-}\catfont{Cat}}
\newcommand{\Coalg}[1]{\catfont{CoAlg}(#1)}
\newcommand{\ar}{\mathrm{ar}}
\newcommand{\sep}{\mathsf{sep}}
\newcommand{\sym}{\mathsf{sym}}
\newcommand{\two}{\catfont{2}}
\newcommand{\V}{\mathcal{V}}
\newcommand{\catfont}[1]{\mathsf{#1}}
\def\slashedarrowfill@#1#2#3#4#5{
  $\m@th\thickmuskip0mu\medmuskip\thickmuskip\thinmuskip\thickmuskip
   \relax#5#1\mkern-7mu
   \cleaders\hbox{$#5\mkern-2mu#2\mkern-2mu$}\hfill
   \mathclap{#3}\mathclap{#2}
   \cleaders\hbox{$#5\mkern-2mu#2\mkern-2mu$}\hfill
   \mkern-7mu#4$
}
\newcommand*{\rightrelarrowfill@}{\slashedarrowfill@\relbar\relbar{\raisebox{0pc}{$\mapstochar$}}\rightarrow}
\newcommand*{\xrelto}[2][]{\ext@arrow 0055{\rightrelarrowfill@}{\;#1\;}{\;#2\;}}
\newcommand*{\rightmodarrowfill@}{\slashedarrowfill@\relbar\relbar{\raisebox{0pc}{$\hspace{1pt}\circ$}}\rightarrow}
\newcommand*{\xmodto}[2][]{\ext@arrow 0055{\rightmodarrowfill@}{\;#1\;}{\;#2\;}}
\newcommand*{\rightkrelarrowfill@}{\slashedarrowfill@\relbar\relbar{\raisebox{0pc}{$\hspace{1pt}\mapstochar$}}\rightharpoonup}
\newcommand*{\xkrelto}[2][]{\ext@arrow 0055{\rightkrelarrowfill@}{\;#1\;}{\;#2\;}}
\newcommand*{\rightkmodarrowfill@}{\slashedarrowfill@\relbar\relbar{\raisebox{0pc}{$\hspace{1pt}\circ$}}\rightharpoonup}
\newcommand*{\xkmodto}[2][]{\ext@arrow 0055{\rightkmodarrowfill@}{\;#1\;}{\;#2\;}}
\newcommand{\ftF}{\functorfont{F}}
\newcommand{\ftG}{\functorfont{G}}
\DeclareMathOperator{\bd}{bd}
\newcommand{\ftII}[1]{{\catfont{|}{#1}\catfont{|}}}
\newcommand{\ftbF}{\functorfont{\overline{F}}}
\newcommand{\functorfont}{\mathsf}
\DeclareMathAlphabet{\mathpzc}{OT1}{pzc}{m}{it}
\newcommand{\df}[1]{\emph{\textbf{#1}}}
\tikzset{
  relational/.style={
    outer sep=3pt,
    decoration={
      markings,
      mark=at position 0.5 with {\node[transform shape] (tempnode) {\tiny $\rvert$};},
    },
    postaction={decorate},
  },
}
\tikzset{
  distrib/.style={
    outer sep=3pt,
    decoration={
      markings,
      mark=at position 0.5 with {\node[transform shape] (tempnode) {\tiny
          \rmfamily o};},
    },
    postaction={decorate},
  },
}
\tikzset{
  symbol/.style={
    draw=none, every to/.append style={
      edge node={node [sloped, allow upside down,auto=false]{$#1$}}} } }
\newcommand{\Sema}[1]{\llbracket{#1}\rrbracket_\alpha}
\newcommand{\Semb}[1]{\llbracket{#1}\rrbracket_\beta}
\begin{document}

\maketitle

\begin{abstract}
  The classical \emph{Hennessy-Milner theorem} is an important tool in
  the analysis of concurrent processes; it guarantees that any two
  non-bisimilar states in finitely branching labelled transition
  systems can be distinguished by a modal formula.  Numerous variants
  of this theorem have since been established for a wide range of logics and
  system types, including quantitative versions where lower bounds on
  behavioural distance (e.g.~in weighted, metric, or probabilistic
  transition systems) are witnessed by quantitative modal
  formulas. Both the qualitative and the quantitative versions have
  been accommodated within the framework of \emph{coalgebraic logic},
  with distances taking values in quantales, subject to certain
  restrictions, such as being so-called \emph{value quantales}. While
  previous quantitative coalgebraic Hennessy-Milner theorems apply
  only to liftings of set functors to (pseudo-)metric spaces, in the
  present work we provide a quantitative coalgebraic Hennessy-Milner
  theorem that applies more widely to functors native to metric spaces;
  notably, we thus cover, for the first time, the well-known
  Hennessy-Milner theorem for continuous probabilistic transition
  systems, where transitions are given by Borel measures on metric
  spaces, as an instance. In the process, we also relax the
  restrictions imposed on the quantale, and additionally parametrize
  the technical account over notions of \emph{closure} and, hence,
  \emph{density}, providing associated variants of the Stone-Weierstraß theorem; this allows us to cover, for instance, behavioural
  ultrametrics.

\end{abstract}

\section{Introduction}

Modal logic in general is an established tool in the analysis of
concurrent systems. One of its uses is as a means to distinguish
non-equivalent states; for instance, the classical Hennessy-Milner
theorem~\cite{HennessyMilner85} guarantees that any two non-bisimilar
states in finitely branching labelled transition systems can be
distinguished by a formula in a modal logic naturally associated to
labelled transition systems. Similar theorems have subsequently
proliferated, having been established, for instance, for probabilistic
transition systems~\cite{DBLP:journals/iandc/LarsenS91}, neighbourhood
structures~\cite{HansenEA09}, and open bisimilarity in the
$\pi$-calculus~\cite{AhnEA17}. As a recent example application, the
counterproof for unlinkability in the ICAO 9303 standard for
e-passports~\cite{FilimonovEA19} is based on providing a
distinguishing modal formula in an intuitionistic modal logic.

For systems featuring quantitative data, such as probabilistic or
weighted systems or metric transition systems, \emph{behavioural
  distance} provides a more fine-grained measure of agreement between
systems than two-valued bisimilarity
(e.g.~\cite{GiacaloneEA90,DBLP:journals/tcs/BreugelW05,DBLP:journals/tcs/BreugelHMW07,AlfaroEA09}). In
analogy to the classical Hennessy-Milner theorem, behavioural
distances can often be characterized by quantitative modal logics, in
the sense that the behavioural distance of any two states can be
approximated by the difference in value of quantitative modal formulae
on these states (that is, for states with distance $>r$ one can find a
quantitative modal formula on which the states disagree by at
least~$r$). Such theorems, to which we refer as \emph{quantitative
  Hennessy-Milner theorems}, have been proved, e.g., for probabilistic
transition
systems~\cite{DBLP:journals/tcs/BreugelW05,DBLP:journals/tcs/BreugelHMW07}
and for metric transition systems~\cite{AlfaroEA09}.

\emph{Universal coalgebra}~\cite{Rutten00} serves as a generic
framework for concurrent systems, based on the key abstraction of
encapsulating the system type in a functor, whose coalgebras then
correspond to the systems of interest. Both two-valued and
quantitative Hennessy-Milner theorems have been established at the
level of generality offered by \emph{coalgebraic modal logic}. The
two-valued coalgebraic Hennessy-Milner theorem~\cite{Pat04,Sch08}
covers all coalgebraic system types, under the assumption of having a
\emph{separating} set of modalities; instances include the mentioned
Hennessy-Milner theorems for probabilistic systems~\cite{DBLP:journals/iandc/LarsenS91}
and neighbourhood structures~\cite{HansenEA09}. Various quantitative
coalgebraic Hennessy-Milner theorems have been established fairly
recently~\cite{KonigMikaMichalski18,WS20,WS21,KKK+21}. These existing
theorems are tied to considering liftings of set functors to metric
spaces (or in fact more general topological categories~\cite{KKK+21});
our contribution in the present work is to complement these theorems
by a result that instead applies to unrestricted functors on metric
spaces (we give a more detailed comparison in the related work
section). In particular, our result covers, for the first time, the
original expressiveness result for probabilistic modal logic on
continuous probabilistic transition systems (where `continuous' refers
to the structure of the state
space)~\cite{DBLP:journals/tcs/BreugelW05,DBLP:journals/tcs/BreugelHMW07}
as an instance. We work not only in coalgebraic generality but also
parametrize the development over the choice of a \emph{quantale}~$\V$,
in which distances and truth values are taken; this covers the case of
standard bounded real-valued distances by taking~$\V$ to be the unit
interval, and the classical two-valued case by taking~$\V$ to be the
set of Boolean truth values. Previous work on quantalic
distances~\cite{WS21} needed to restrict to so-called value
quantales~\cite{Flagg97}; we relax this assumption, covering, for
instance, all finite quantales (such as the four-valued quantale used
in some paraconsistent logics; see,
e.g.,~\cite{RivieccioEA17}), and the square of the unit interval.

Technically, our results are additionally parametrized over the choice
of \emph{closure} operators on sets of $\V$-valued predicates, which
induce a notion of \emph{density}. The notion of density is the key
ingredient that lets our results apply beyond discrete state spaces
(e.g.\ to the mentioned continuous probabilistic transition systems);
by varying the notion of closure, we cover, for instance, both
standard metric spaces and ultrametric spaces (which in turn are
induced by different quantale structures on the unit interval).

Proofs are mostly omitted from the main body of the paper but can be
found in the appendix.

\subparagraph*{Related work}

As indicated above, quantatitive coalgebraic Hennessy-Milner theorems
exist in previous work~\cite{KonigMikaMichalski18,WS20,WS21,KKK+21},
from which our present work is distinguished in that it applies to
functors that live natively on metric spaces (such as tight Borel
distributions) rather than only to liftings of set functors (such as
finitely supported distributions). We detach the technical development
from both lax extensions~\cite{WS20,WS21} and fixpoint
induction~\cite{KonigMikaMichalski18,WS20,WS21}, which work only for
monotone modalities; we thereby cover also systems requiring
non-monotone modalities, such as weighted transition systems with
negative weights. A recent general framework for Hennessy-Milner
theorems based on Galois connections between real-valued predicates
and (pseudo-)metrics is aimed primarily at generality over a
linear-time/branching-time spectrum~\cite{BeoharEA22}.

The framework of codensity liftings developed by Komorida et
al.~\cite{KKK+21} works at a very high level of generality, and in
fact applies to topological categories (or
$\mathsf{CLat}_{\sqcap}$-fibrations, in the terminology of
\emph{op.~cit.}) beyond metric spaces, such as uniform spaces. Our
present framework is on the one hand more general in that we do not
restrict to functors lifted from the category of sets (\emph{fibred}
functors, in the terminology of fibrations), but on the other hand
less general in that we cover only (quantalic) behavioural
distances. In terms of the main technical result, we provide a
coalgebraic quantitative Hennessy-Milner theorem that is stated in
fairly simple terms, and can be instantiated to concrete logics and
systems by just verifying a few fairly straightforward conditions that
concern only the functor and the modalities. In particular, we have no
conditions requiring that certain sets of formula evaluations on a
given coalgebra are \emph{approximating} (cf.~\cite[Theorems~IV.5,
IV.7]{KKK+21}); instead, we prove similar properties as \emph{lemmas}
along the way, using the key notions of closure and density. In fact,
one of the conditions of our Hennessy-Milner theorem can be seen as a
form of Stone-Weierstraß property, and in particular concerns density of
sets of functions closed under suitable propositional combinations;
this property depends only on the quantale, not on the functor or the
modalities, and we give general Stone-Weierstraß-type theorems for
several classes of quantales.

\section{Preliminaries}

Basic familiarity with category theory will be assumed~\cite{AHS90,Awodey10}.
More specifically, we make an extensive use of topological categories and some
background results concerning them (see e.g.~\citep{AHS90}).
Let us recall some preliminaries, relevant for our coalgebraic modelling.

For an endofunctor \(\ftF\) on a category \(\mathcal{C}\), an
\df{\(\ftF\)-coalgebra} consists of an object~\(X\) of
\(\mathcal{C}\), thought of as an object of
\emph{states},
and a morphism $\alpha \colon X \rightarrow \ftF X$, thought of as
assigning structured collections (set, distributions, etc.)  of
successors to states. A \df{coalgebra morphism} from \((X, \alpha)\)
to \((Y, \beta)\) is a morphism $f \in \mathcal{C}(X, Y)$ such that
\(\beta \circ f = \ftF f \circ \alpha\).

A \df{concrete category} over \(\SET\) comes equipped with a faithful functor $\ftII{-}\colon \mathcal{C} \rightarrow \SET$,
which allows us to speak about individual \emph{states}, as elements of $\ftII{X}$.

Given a coalgebra \((X, \alpha)\) and states \(x,y \in \ftII{X}\), we say that \(x\) and \(y\) are \df{behaviourally equivalent}
if there are a coalgebra \((Z, \gamma)\) and a coalgebra morphism \(f\colon X \rightarrow Z\) such that \(\ftII{f}(x) = \ftII{f}(y)\). (For brevity, we restrict the treatment of both behavioural equivalence and behavioural distances to states in the same coalgebra; in all our examples the extension to states in different coalgebras can be accommodated by taking coproducts.)
	The notion of behavioural equivalence is strictly two-valued, meaning that different states are either behaviourally equivalent or not.
	The downside of this notion is thus that in systems dealing with quantitative information, any slight change can render two states behaviourally distinct, even though they may be virtually indistinguishable in any practical context. The rest of this paper is concerned with quantifying the degree to which states differ from each other, as well as with
 logics to witness these degrees.
\begin{example}\label{exa:main}
	\begin{enumerate}
		\item\label{main:lts} Labelled transition systems w.r.t.\ a set of actions \(A\) are coalgebras for the \(\SET\)-functor \(\mathcal{P}(A \times {-})\). Behavioural equivalence coincides with the classical notion of (strong) bisimilarity.

		\item\label{main:para} Let $\paraP$ be the \emph{$\rotatebox{45}{\scalebox{.75}{$\Box$}}$-valued powerset functor}, i.e.\ the functor that sends a set~$X$ to the set of maps $X\to \{\bot,\mathsf{N},\mathsf{B},\top\}$. Every such map $q$ can be seen as a
		sort of a fuzzy set, which for every element $x$ tells either that $x$ is in the set ($q(x)=\top$),
		or it is not in the set ($q(x)=\bot$), or there is evidence that $x$ is in the set
		and that it is not ($q(x)=\mathsf{B}$), or nothing is known ($q(x)=\mathsf{N}$). This is
		indeed a functor, due to the lattice structure on $\{\bot,\mathsf{N},\mathsf{B},\top\}$,
		which is diamond-like: $\bot$ and $\top$ are the bottom and the top and $\mathsf{N},\mathsf{B}$ are
		the left and the right corners. $\paraP$-coalgebras have been used for
		defining a Kripke-style semantics of paraconsistent modal logics~\cite{RivieccioEA17}.

		\item\label{main:prob} We denote by \(\mathcal{D}\) the functor that maps a set X to the set of finitely supported probability distributions on $X$. Coalgebras for the functor \(FX = (1 + \mathcal{D}X)^A\), for a finite set of actions \(A\), are probabilistic transition systems~\cite{DBLP:journals/iandc/LarsenS91,DBLP:journals/iandc/DesharnaisEP02}. In this context, behavioural equivalence instantiates to probabilistic bisimilarity.

		\item\label{main:weight} We consider weighted transition systems with possibly negative weights (e.g.~\cite{BouyerEA08}):
Let \(\weighted\)  be the functor on 1-bounded metric spaces that maps every set \(X\) to the set of finite \([-1,1]\)-weighted sets over \(X\). That is, the elements of \(\weighted X\) are functions \(t\colon X \to [-1,1]\) such that \(t(x) = 0\) for all but finitely many \(x\), and \(\sum_{x\in A}t(x) \in [-1, 1]\) for all \(A \subseteq X\).
		On morphisms, \(\weighted\) acts by summing up preimages, so for
\(g\colon X\rightarrow Y\), \(t\in \weighted X\) we have \(\weighted g(t)(y) = \sum_{x \in g^{-1}(y)} t(x)\).
\noindent The distance of two elements $s, t \in \weighted X$ is given by
			\(d(s, t) = \frac{1}{2}\bigvee_{f} \sum_{x\in X} f(x)s(x) - f(x)t(x)\)
			where the join ranges over all nonexpansive functions \(f \colon X \rightarrow [0,1]\).
			Then \((\weighted{-})^A\) coalgebras are \(([-1,1],+, 0)\)-weighted $A$-labelled transition systems. Behavioural equivalence instantiates to
                        weighted bisimilarity~\cite{DBLP:conf/birthday/Klin09}.

		\item\label{main:kant} We consider a variation of the Kantorovich functor $\kantorovich$~\cite{DBLP:journals/tcs/BreugelHMW07, DBLP:journals/jcss/AdamekMMU15}.
	We say that a probability measure $\mu$ on the Borel $\sigma$-algebra of $X$  is \emph{tight} if for every $\epsilon > 0$, there is a totally bounded subset $Y \subseteq X$ such that $\mu(X \setminus Y) < \epsilon$. The Kantorovich functor $\kantorovich$ maps a metric space  $(X,d)$ to the set of tight probability measures on $X$, equipped with the Kantorovich metric, defined as

	\(d_{\kantorovich X}(\mu, \nu) = \sup_{f} \left\{\int f \dif\mu - \int f \dif\nu \right\}\) for $\mu, \nu \in \kantorovich X$,
			where again \(f\) ranges over all nonexpansive maps \(X \rightarrow [0, 1]\).
			On morphisms, \(\kantorovich\) acts by measuring preimages, i.e. for $f \colon  X \rightarrow Y$ we have $\kantorovich f(\mu)(Y') = \mu(f^{-1}(Y'))$ for all $Y' \in \kantorovich(Y, d_Y)$.
			Continuous probabilistic transition systems are $\kantorovich(1 + {-})^A$-coalgebras for a finite set \(A\) of actions \cite{DBLP:journals/tcs/BreugelHMW07,DBLP:journals/tcs/BreugelW05} (so the term \emph{continuous} applies to the state space, not the system evolution), and behavioural equivalence instantiates to  probabilistic bisimilarity of continuous systems.

	\end{enumerate}
\end{example}

\begin{figure}[t]
\centering
\begin{tikzpicture}[node distance=2cm, auto]
				\node at (2.5, 1) (A)[circle, fill,inner sep=2pt] {};
				\node at (1.5, 0) (B)[circle, fill,inner sep=2pt] {};
				\node at (3.5, 0) (C)[circle, fill,inner sep=2pt] {};

				\node at (8.5, 1) (A1)[circle, fill,inner sep=2pt] {};
				\node at (7.5, 0) (B1)[circle, fill,inner sep=2pt] {};
				\node at (9.5, 0) (C1)[circle, fill,inner sep=2pt] {};

				\path[->]
					(A) edge node[left=2ex] {\(\frac{1}{2}\)} (B)
						edge node[right=2ex] {\(\frac{1}{2}\)} (C)
					(C) edge[loop, right, in=45,out=-45, min distance=4mm,looseness=10] node {1} (C)

					(A1) edge node[left=2ex] {\(\frac{1}{2} + \epsilon\)} (B1)
						 edge node[right=2ex] {\(\frac{1}{2} - \epsilon\)} (C1)

					(C1) edge[loop right, in=45,out=-45,min distance=4mm,looseness=10] node {1} (C1);
\end{tikzpicture}
 \caption{Probabilistic transition systems with behaviourally inequivalent root states.}
\label{fig:notBeq}
\end{figure}
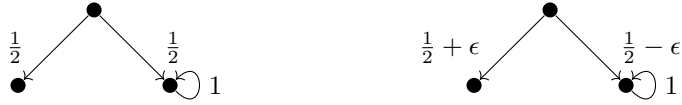

\noindent Consider the probabilistic transition systems depicted in Figure~\ref{fig:notBeq}.
If \(\epsilon > 0\), then the root states are not probabilistically bisimilar, as they have different probabilities of reaching a deadlock state. Still one would like to say that their difference in behaviour is small if~$\epsilon$ is small.  We will review formal definitions of such concepts  in Section~\ref{sec:v-ml}.

\section{Quantales and Quantale-Enriched Categories}

A central notion of our development is the notion of quantale, which is a joint generalization of
truth values and distances.
A \df{quantale}~\((\V,\bigvee,\otimes,k)\), more precisely a commutative and unital quantale, is a
complete lattice $\V$ that carries the structure of a commutative monoid
\((\V,\otimes,k)\), with~$\otimes$ called \df{tensor} and~$k$ called \df{unit}, such that for every $u \in \V$, the map $u \otimes - \colon
\V\to \V$ preserves suprema, which entails that

every $u \otimes - \colon \V \to \V$ has a right
adjoint $\hom(u,-) \colon \V \to \V$, characterized by the property
\(
  u \otimes v \le w \iff v \le \hom(u,w).
\)

We denote by $\top$ and $\bot$ the greatest and the least
element of a quantale respectively.
A quantale is \df{non-trivial} if $\bot\neq\top$, and \df{integral} if
\(\top = k\).

\begin{example}
\label{p:30}
  \begin{enumerate}
	\item\label{item:frame} Every frame (i.e.\ a complete lattice in which binary meets distribute over
	infinite joins) is a quantale with \(\otimes = \wedge\) and \(k = \top\).
	In particular, every finite distributive lattice is a quantale, prominently~\(\two\), the two-element lattice $\{\bot,\top\}$.
  \item Every left continuous \(t\)-norm \citep{AFS06} defines a quantale on the unit interval equipped with its natural order.
	\item \label{ex:metric-q} The previous clause (up to isomorphism) further specializes as follows:
	  \begin{enumerate}
	    \item The quantale \([0,\infty]_+ = ([0,\infty], \inf, +, 0)\) of non-negative real numbers with infinity, ordered by the greater or equal relation, and with tensor given by addition.

	    \item The quantale \([0,\infty]_{\max} = ([0,\infty], \inf, \max, 0)\) of non-negative real numbers with infinity, ordered by the greater or equal relation, and with tensor given by maximum.
	    \item \label{p:4} The quantale \([0,1]_\oplus = ([0,1], \inf, \oplus, 0)\) of the unit interval, ordered by the greater or equal order, and with tensor given by truncated addition.
	  \end{enumerate}
	  (Note that the quantalic order here is dual to the standard numeric order).

   \item Every commutative monoid \((M, \cdot, e)\) generates a quantale structure on \((\pow M, \bigcup)\), the free quantale on~\(M\).
    The tensor \(\otimes\) on \(\pow M\) is defined by
    \(
      A \otimes B = \{ a \cdot b \mid a \in A \text{ and } b \in B\},
    \)
    for all \(A,B \subseteq M\).
    The unit of this multiplication is the set \(\{e\}\).

   \item \label{p:29} For every quantale \(\V\) and every partially ordered set
  \(X\), the set of monotone maps \(\POST(X,\V)\) ordered pointwise becomes
  a quantale with tensor defined pointwise. For instance,

  \(\POST(2,[0,1]_\oplus)\) with discrete $2$ yields the quantale \([0,1]_\oplus^2\),
  and by replacing $2$ with the two-element chain \(0 \geq 1\) we obtain the quantale \(\calI([0,1]_\oplus)\) of
  non-empty closed subintervals of \([0,1]\) \citep{WS21}.
	\end{enumerate}
\end{example}

Category theory highlights preordered sets as $\two$-enriched categories. By replacing $\two$
with a quantale \(\V\), we enrich the relevant preorders with a quantitative extent:
A \df{\(\V\)-category} is pair \((X,a)\) consisting of a set $X$ and a map $a \colon X \times X \to \V$ that for all $x,y,z \in X$ satisfies the inequalities \(k \le a(x,x)\) and  \(a(x,y) \otimes a(y,z) \leq a(x,z)\).
A \(\V\)-category \((X,a)\) is \df{symmetric} if \(a(x,y) = a(y,x)\), for all \(x,y \in X\).

Every $\V$-category \((X,a)\) carries a \df{natural order} defined by
\(
  x \le y \text{ whenever } k \le a(x,y),
\)
which induces a faithful functor $\Cats{\V} \to \ORD$.
A \(\V\)-category is \df{separated} if its natural order is antisymmetric.

A \df{$\V$-functor} $f \colon (X,a) \to (Y,b)$ is a map
$f \colon X \to Y$ such that, for all $x,y \in X$,
\( a(x,y) \le b(f(x),f(y)).  \) \(\V\)-categories and \(\V\)-functors
form the category \(\Cats{\V}\); we denote by \(\Cats{\V}_\sym\)
and~\(\Cats{\V}_{\sym,\sep}\) the full subcategories of \(\Cats{\V}\)
determined by the symmetric and the symmetric separated
\(\V\)-categories, respectively.

\begin{example}

  \begin{enumerate}
  \item
      The category $\Cats{\two}$ is equivalent to the category $\ORD$ of \df{preordered sets} and monotone maps.
    \item  Metric, ultrametric and bounded metric spaces \`a la Lawvere \cite{Law73} can be seen as quantale-enriched categories:
      \begin{enumerate}
		  \item The category \(\Cats{[0,\infty]_+}_{\sym,\sep}\) is equivalent to the category \(\MET\) of generalized \df{metric spaces} and non-expansive maps.
		  \item The category \(\Cats{[0,\infty]_{\max}}_{\sym,\sep}\) is equivalent to the category $\UMET$ of generalized \df{ultrametric spaces} and non-expansive maps.
		  \item The category $\Cats{[0,1]_{\oplus}}_{\sym,\sep}$ is equivalent to the category $\BMET$ of \df{bounded-by-$1$ metric spaces} and non-expansive maps.
      \end{enumerate}

    \item Categories enriched in a free quantale \(\pow M\) on a monoid \(M\) (such as $M=\Sigma^*$ for some alphabet~$\Sigma$) can be interpreted as \df{labelled transition systems} with labels in~$M$: in a \(\pow M\)-category \((X,a)\), the objects represent the states of the system, and we can read \(m\in a(x,y)\) as an $m$-labelled transition from~\(x\) to~\(y\).

    \item Categories enriched in the quantale \(\calI([0,1]_\oplus)\) can be interpreted
    as spaces where to each pair of points a distance range is assigned.

  \end{enumerate}
\end{example}

\begin{table}[t]
\begin{center}
\begin{tabular}{l|>{\raggedright\arraybackslash}p{4cm}|>{\raggedright\arraybackslash}p{5cm}}
	General $\V$ & Qualitative ($\V=2$) & Quantitative ($\V=[0,1]_\oplus$) \\\hline
	$\V$-category & preorder & bounded-by-1 hemi-metric space\\
	symmetric $\V$-category & equivalence & bounded-by-1 pseudo-metric space\\
	$\V$-functor &  monotone map & non-expansive map\\
	initial $\V$-functor & order-reflecting\newline monotone  map & isometry\\
	L-dense $\V$-functor & monotone map that is surjective up to the induced equivalence & non-expansive map with dense \mbox{image}\\
	L-closure & closure under the induced equivalence & topological closure\\
\end{tabular}
\end{center}
\caption{$\V$-categorical notions in the qualitative and the
  quantitative setting. The prefix `pseudo' refers to absence of
  separatedness, and the prefix `hemi' additionally indicates absence
  of symmetry.}
\label{fig:sep}
\end{table}

The examples $\V=\two$ and $\V=[0,1]_\oplus$ are particularly
instructive, as they represent the most established qualitative and
quantitative aspects quantales aim to generalize. Table~\ref{fig:sep}
provides some instances of generic quantale-based concepts (either
introduced above or to be introduced presently) in these two cases,
for further reference.

There are several ways of constructing \(\V\)-categories from existing
ones.  Of special importance is the \(\V\)-category composed by a set
\(X\) equipped with the \df{initial structure}
\(a \colon X \times X \to \V\) w.r.t.\ a \df{structured cone}, i.e.\ a
family \((f_i \colon X \to \ftII{(X_i,a_i)})_{i\in I}\) of maps; that
is,

\(
  a(x,y) = \bigwedge_{i\in I}a_i(f_i(x),f_i(y)),
\)
for all $x,y\in X$.
A cone \(((X,a) \to (X_i,a_i))_{i \in I}\) is said to be \df{initial} (w.r.t.\  the forgetful functor \(\ftII{-} \colon \Cats{\V} \to \SET\)) if \(a\) is the initial structure w.r.t.\  the structured cone \((X \to \ftII{(X_i,a_i)})_{i \in I}\); a \(\V\)-functor is initial if it forms an initial cone.

  Every quantale $\V$ is a $\V$-category, which we also denote by \(\V\), with
  $\hom(x,y)$ being a hom-object over $x$ and $y$.
  For every set $S$, let the $S$-power~$\V^S$ of $\V$ be the set of all functions
  $h \colon S \to \V$ equipped with the $\V$-category structure $[-,-]$ given
  by a certain initial structure, concretely,
  \(
    [h,l] = \bigwedge_{s \in S} \hom(h(s),l(s)),
  \)
  for all $h,l \colon S \to \V$.

\begin{remark}\label{rem:sym-sep}
Let us briefly outline the connections between \(\Cats{\V}\), \(\Cats{\V}_\sym\)
and \(\Cats{\V}_{\sym,\sep}\), which for real-valued $\V$ correspond to
hemi-metric, pseudo-metric and metric spaces, respectively.

The category \(\Cats{\V}\) is topological over \(\SET\)~\cite{AHS90} and so
is \(\Cats{\V}_\sym\) (it inherits the initial cone construction); in particular,
both categories are complete and cocomplete. Moreover, the category
\(\Cats{\V}_\sym\) is a (reflective and) coreflective full subcategory of \(\Cats{\V}\).
The coreflector \((-)_s \colon \Cats{\V} \to \Cats{\V}_\sym\) is identity
on morphisms and sends every \((X,a)\) to its symmetrization,
the \(\V\)-category \((X,a_s)\) where \(a_s(x,y)= a(x,y) \wedge a(y,x)\) (keep in mind that in Examples~\ref{ex:metric-q} the order the dual of the numeric order).

Similarly, \(\Cats{\V}_{\sym,\sep}\) is a full reflective subcategory
of \(\Cats{\V}_\sym\) (but not topological). The relevant reflector
\((-)_q \colon \Cats{\V}_\sym \to \Cats{\V}_{\sym,\sep}\) quotients
every \(X\) by its natural preorder, which for symmetric~$X$ is an
equivalence.

\end{remark}

We will use \(\Cats{\V}_\sym\) as the main device for formalizing examples and
stating our main results, although most of these results can be meaningfully reinterpreted for \(\Cats{\V}\).

Quantale-enriched categories come equipped with a canonical closure operator.
A \(\V\)-functor \(m \colon M\to X\) is \df{L-dense} \citep{HT10} if for all \(\V\)-functors \(f,g \colon X\to \V\), \(f\cdot m=g\cdot m\)

implies \(f= g\).
We immediately notice that the composite of L-dense \(\V\)-functors is L-dense.

For a subset \(A\) of \(X\), the \df{L-closure} \(\overline{A}\) of
\(A\) in \((X,a)\) is the largest \(\V\)-subcategory of \((X,a)\) in which \(A\) is L-dense;
this can be explicitly computed as follows

\begin{displaymath}
 \overline{A} = \bigl\{x \in X \mid k \leq \textstyle\bigvee_{y \in A} a(x,y) \otimes a(y,x)\bigr\}.
\end{displaymath}

A subset $A \subseteq X$ of a $\V$-category $(X,a)$ is \df{closed} if $A = \overline{A}$, and \df{dense in $(X,a)$} if $\overline{A} = X$.
A function \(f \colon X \to Y\) between \(\V\)-categories $(X,a)$ and $(Y,b)$ is said to be \df{continuous} if \(f\overline{A} \subseteq \overline{fA}\), for every \(A \subseteq X\).

It is easy to see that the notion of L-closure is so designed that for metric-like
examples it coincides with the topological closure w.r.t.\ the corresponding
open-ball topology, and continuous functions are continuous in this topology.
For a preorder \((X,\leq)\), $A\subseteq X$ is closed
iff it is closed under the induced equivalence.
It is easy to check that every \(\V\)-functor is continuous.

\begin{proposition}
  \label{prop:Vs-closed}
  For every \(\V\)-category \(X\), \(\Cats{\V}(X,\V)\) is closed in \(\V^{\ftII{X}}\). For every symmetric \(\V\)-category \(X\), \(\Cats{\V}(X,\V_s)\) is closed in \(\V_s^\ftII{X}\).
\end{proposition}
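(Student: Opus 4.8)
The plan is to unfold both sides of the statement and then run a short chain of adjunction inequalities. Through the adjunction $u \otimes v \le w \iff v \le \hom(u,w)$, a map $f \in \V^{\ftII{X}}$ lies in $\Cats{\V}(X,\V)$ exactly when $a(x,y) \otimes f(x) \le f(y)$ for all $x,y \in \ftII{X}$, where $a$ is the structure of $X$; and the $\V$-category structure on $\V^{\ftII{X}}$ is $[h,l] = \bigwedge_{s \in \ftII{X}} \hom(h(s),l(s))$. By the explicit formula for L-closure, $h \in \overline{\Cats{\V}(X,\V)}$, with closure taken in $\V^{\ftII{X}}$, says precisely that $k \le \bigvee_{f} [h,f] \otimes [f,h]$, the join ranging over $f \in \Cats{\V}(X,\V)$. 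So what has to be shown is that every such $h$ is itself a $\V$-functor, i.e.\ $a(x,y) \otimes h(x) \le h(y)$ for all $x,y$.

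For the main step, fix $x,y \in \ftII{X}$. Since $-\otimes-$ preserves joins and $k$ is the tensor unit, $a(x,y) \otimes h(x) = a(x,y) \otimes h(x) \otimes k \le \bigvee_{f}\, a(x,y) \otimes h(x) \otimes [h,f] \otimes [f,h]$, so it suffices to bound each summand by $h(y)$. Now $[h,f] \le \hom(h(x),f(x))$ and $[f,h] \le \hom(f(y),h(y))$ by definition of the power structure, and $f$ is a $\V$-functor; hence, applying the counit $u \otimes \hom(u,w) \le w$ twice,
\[
 a(x,y) \otimes h(x) \otimes [h,f] \otimes [f,h] \;\le\; a(x,y) \otimes f(x) \otimes [f,h] \;\le\; f(y) \otimes [f,h] \;\le\; h(y).
\]
Joining over $f$ gives $a(x,y) \otimes h(x) \le h(y)$, so $h \in \Cats{\V}(X,\V)$; together with the trivial inclusion $\Cats{\V}(X,\V) \subseteq \overline{\Cats{\V}(X,\V)}$ this settles the first assertion.

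For the second assertion I would reduce to the first. Since meets commute with meets, the power $\V_s^{\ftII{X}}$ carries the structure $[h,l] \wedge [l,h]$, i.e.\ it is the symmetrization $(\V^{\ftII{X}})_s$ of the space treated above. Next, for symmetric $X$ a map $\ftII{X} \to \V$ is a $\V$-functor into $\V_s$ iff it is one into $\V$: the extra inequality $a(x,y) \le \hom(f(y),f(x))$ is just the functor inequality for the pair $(y,x)$ combined with $a(y,x) = a(x,y)$. Hence $\Cats{\V}(X,\V_s)$ and $\Cats{\V}(X,\V)$ are the same subset of $\V^{\ftII{X}}$. Finally, for any $\V$-category $(Y,b)$ and $A \subseteq Y$, the L-closure $\overline{A}^{b_s}$ of $A$ in $(Y,b_s)$ is contained in its L-closure $\overline{A}^{b}$ in $(Y,b)$, because $b_s(y,z) \otimes b_s(z,y) \le b(y,z) \otimes b(z,y)$ by monotonicity of $\otimes$; so a subset closed in $(Y,b)$ stays closed in $(Y,b_s)$. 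Applying this with $(Y,b) = \V^{\ftII{X}}$ and $A = \Cats{\V}(X,\V)$, which is closed by the first part, gives the claim. (Alternatively, one can simply rerun the computation of the previous paragraph with the symmetrized hom $[h,f]_s$ in place of both $[h,f]$ and $[f,h]$, using $[h,f]_s \le [h,f]$ and $[h,f]_s \le [f,h]$.)

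I do not anticipate a real obstacle. The two points that need a little care are that the closure hypothesis supplies a \emph{join} over the witnessing $\V$-functors $f$ rather than a single witness — this is absorbed by distributivity of $\otimes$ over joins — and, in the second part, the bookkeeping identifying the relevant powers with their symmetrizations and the two hom-sets with one another.
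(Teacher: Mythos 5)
Your proof is correct, but it takes a genuinely different route from the paper's. The paper disposes of the first assertion in one line by invoking the general fact that a \emph{split} subobject of a \emph{separated} \(\V\)-category is L-closed: \(\Cats{\V}(X,\V)\) is a retract of the separated \(\V\)-category \(\V^{\ftII{X}}\) (via the Kan-extension-style retraction \(h\mapsto\bigvee_{x}h(x)\otimes a(x,-)\), left implicit), and the second assertion then follows because the symmetrization functor \((-)_s\) preserves products and split monomorphisms. You instead verify closedness directly from the explicit formula \(\overline{A}=\{h\mid k\le\bigvee_{f\in A}[h,f]\otimes[f,h]\}\), pushing \(a(x,y)\otimes h(x)\) under the join by distributivity of \(\otimes\) over suprema and then collapsing each term with two applications of the counit \(u\otimes\hom(u,w)\le w\) plus the \(\V\)-functoriality of \(f\); for the symmetric case you identify \(\V_s^{\ftII{X}}\) with \((\V^{\ftII{X}})_s\) and \(\Cats{\V}(X,\V_s)\) with \(\Cats{\V}(X,\V)\), and use that \(b_s\le b\) makes the L-closure in the symmetrization smaller, so closed sets stay closed. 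Every step checks out. What your approach buys is self-containedness: it needs no background on split subobjects, separatedness, or the existence of the retraction, and all inequalities are visible. What the paper's approach buys is brevity and reusability of the abstract lemma; it also makes clear \emph{why} the statement holds (retracts of separated objects are closed), at the cost of leaving the retraction and the cited facts to the reader.
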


Maps between \(\V\)-categories that preserve closure will be essential in Section~\ref{sec:expr}.

\section{Quantitative Coalgebraic Modal Logics}
\label{sec:v-ml}

We proceed to introduce a variant of (quantitative) coalgebraic
logic~\citep{Pat04,Sch08,CKP+11,KonigMikaMichalski18,WS20}, which in
particular follows the paradigm of interpreting modalities via
predicate liftings, in this case of $\V$-valued predicates.

Given a cardinal~\(\kappa\), a classical \(\kappa\)-ary predicate lifting for a
functor \(\ftF \colon \SET \to \SET\) is a natural transformation
\(
  \lambda \colon \SET(-,\two^\kappa) \to \SET(\ftF -,\two).
\)

For example, Kripke semantics of the modal logic \(K\) can be coached in terms
of the diamond modality \(\Diamond\), which with identify with the unary predicate
lifting $\Diamond_X(A) = \{B \subseteq X \mid A \cap B \neq \varnothing\}$ for
the (finite) powerset functor with (modulo the isomorphism $\mathcal{P} X\cong\SET(X,\two)$).

This notion of predicate lifting naturally extends to \(\Cats{\V}_\sym\)-functors.

\begin{defn}
	Given a cardinal \(\kappa\), a \(\kappa\)-ary \df{predicate lifting} for a functor \(\ftF \colon \Cats{\V}_\sym \to \Cats{\V}_\sym\) is a natural transformation of type
  \(
    \lambda\colon\Cats{\V}_\sym(-,\V_s^\kappa) \to \Cats{\V}_\sym(\ftF-,\V_s).
  \)

\end{defn}

\begin{remark}[Yoneda Lemma]
  \label{p:43}
  By the Yoneda lemma, every \(\kappa\)-ary predicate lifting for a functor \(\ftF \colon \Cats{\V}_\sym \to \Cats{\V}_\sym\)
  is completely determined by its action on the identity map on \(\V_s^\kappa\).
  By abuse of notation we will use \(\lambda\) to refer to the resulting morphism of type \(\ftF \V_s^\kappa \to \V_s\), as well as the corresponding predicate lifting if the type is clear from the context. The original predicate lifting is recovered by the assignment \(f \mapsto \lambda \cdot \ftF f\).
\end{remark}

For the sake of brevity, we restrict to unary predicate liftings ($\kappa=1$) henceforth.

The syntax of quantitative coalgebraic modal logic can now be defined by the grammar

\begin{flalign*}
  &&\phi\Coloneqq \top \mid \phi_1 \vee \phi_2 \mid \phi_1 \wedge \phi_2 \mid u \otimes \phi \mid \hom_s(u,\phi) \mid \lambda(\phi) && (u \in \V, \lambda \in \Lambda)
\end{flalign*}

where $\Lambda$ is a set of \emph{modalities}, which we
identify, by abuse of notation, with predicate liftings for a functor \(\ftF \colon \Cats{\V}_\sym \to \Cats{\V}_\sym\). We view all other connectives as propositional operators.
Let $\calL(\Lambda)$ be the set of modal formulas thus defined.

The semantics is given by assigning to each formula \(\phi \in \calL(\Lambda)\) and each coalgebra \(\alpha \colon X \to \ftF X\) the \emph{interpretation} of~$\phi$ over~$\alpha$,
the \(\V\)-functor \(\Sema{\phi} \colon X \to \V_s\) recursively defined as follows:

\begin{itemize}
	\item for \(\phi = \top\), we take \(\Sema{\top}\) to be the \(\V\)-functor given by the constant map into \(\top\);
	\item for an $n$-ary propositional operator~$p$, we put $\Sema{p(\phi_1, \ldots, \phi_n)} = p (\Sema{\phi_1}, \ldots, \Sema{\phi_n})$, with~$p$ interpreted using the lattice structure of~$\V$ and the \(\V\)-categorical structure \(\hom_s\) of \(\V_s\), respectively, on the right-hand side;
	\item for $\lambda \in \Lambda$, we put $\Sema{\lambda(\phi)} = \lambda (\Sema{\phi}) \cdot \alpha$.
\end{itemize}

Given a coalgebra \((X,\alpha)\), we denote the set of all maps of the form \(\Sema{\phi} \) by \(\Sema{\calL}\).

Interpreting~$\top$ as $\top$ and not as $k$ is essential for non-integral quantales,
for which the constant map with value~$k$ fails to be a~$\V$-functor.

\begin{proposition}
  \label{prop:form-sem-f}
  Let \(\Lambda\) be a set of predicate liftings for a functor \(\ftF \colon \Cats{\V}_\sym \to \Cats{\V}_\sym\), and
  let \(f \colon (X,\alpha) \to (Y, \beta)\) be a  homomorphism of \(\ftF\)-coalgebras.
  Then, for every formula \(\phi \in \calL(\Lambda)\), \(\Sema{\phi} = \Semb{\phi} \cdot f\).
\end{proposition}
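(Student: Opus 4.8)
The statement to prove is Proposition~\ref{prop:form-sem-f}: for a coalgebra homomorphism $f\colon (X,\alpha)\to(Y,\beta)$, every formula $\phi\in\calL(\Lambda)$ satisfies $\Sema{\phi}=\Semb{\phi}\cdot f$. The plan is to proceed by structural induction on $\phi$ following the grammar, using the defining clauses of the interpretation $\Sema{-}$.

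\textbf{Base case.} For $\phi=\top$, both $\Sema{\top}$ and $\Semb{\top}$ are the constant $\V$-functor with value $\top$, so $\Sema{\top}=\top_X$ and $\Semb{\top}\cdot f=\top_Y\cdot f=\top_X$ trivially. (More generally any $0$-ary propositional operator, e.g.\ constants $u\in\V$ arising implicitly, is handled the same way.)

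\textbf{Propositional step.} For an $n$-ary propositional operator $p$ (covering $\vee$, $\wedge$, $u\otimes-$, $\hom_s(u,-)$), the clause $\Sema{p(\phi_1,\dots,\phi_n)}=p(\Sema{\phi_1},\dots,\Sema{\phi_n})$ computes $p$ pointwise using the lattice structure of $\V$ and the $\V$-categorical structure $\hom_s$ of $\V_s$. Since precomposition with $f$ is computed pointwise and commutes with any pointwise-defined operation, we get $p(\Sema{\phi_1},\dots,\Sema{\phi_n})=p(\Semb{\phi_1}\cdot f,\dots,\Semb{\phi_n}\cdot f)=p(\Semb{\phi_1},\dots,\Semb{\phi_n})\cdot f=\Semb{p(\phi_1,\dots,\phi_n)}\cdot f$, where the middle equality is exactly the inductive hypothesis applied to each $\phi_i$.

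\textbf{Modal step.} This is the only place where the coalgebra homomorphism condition $\ftF f\circ\alpha=\beta\circ f$ enters, and is the crux of the argument. For $\phi=\lambda(\psi)$ we have $\Sema{\lambda(\psi)}=\lambda(\Sema{\psi})\cdot\alpha$ and, by induction, $\Sema{\psi}=\Semb{\psi}\cdot f$. Using the description of predicate liftings via their action on the identity (Remark~\ref{p:43}), $\lambda(g)=\lambda\cdot\ftF g$ for a $\V$-functor $g\colon Z\to\V_s$, so
\begin{align*}
\Sema{\lambda(\psi)}
&=\lambda\cdot\ftF(\Sema{\psi})\cdot\alpha
=\lambda\cdot\ftF(\Semb{\psi}\cdot f)\cdot\alpha
=\lambda\cdot\ftF(\Semb{\psi})\cdot\ftF f\cdot\alpha\\
&=\lambda\cdot\ftF(\Semb{\psi})\cdot\beta\cdot f
=\bigl(\lambda(\Semb{\psi})\cdot\beta\bigr)\cdot f
=\Semb{\lambda(\psi)}\cdot f,
\end{align*}
using functoriality of $\ftF$ in the third step and the homomorphism condition in the fourth. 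This closes the induction. The main (and essentially only) obstacle is bookkeeping: one must be careful that all the maps involved are genuine $\Cats{\V}_\sym$-morphisms so that $\ftF$ can be applied and naturality/Remark~\ref{p:43} are available — but this is guaranteed because $\Sema{\psi}$ and $\Semb{\psi}$ are $\V$-functors by construction, $f$ is a $\V$-functor, and $\alpha$, $\beta$ are $\Cats{\V}_\sym$-morphisms as structure maps of coalgebras in $\Cats{\V}_\sym$.
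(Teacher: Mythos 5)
Your proof is correct and is in substance the same argument as the paper's: the paper packages the structural induction as the uniqueness clause of the initial algebra for the signature $(\top,\wedge,\vee,u\otimes-,\hom_s(u,-),\lambda(-))$, observing that $-\cdot f$ is an algebra homomorphism, while you unfold exactly that into an explicit induction on formulas. The key content is identical in both versions, namely that precomposition with $f$ commutes with the pointwise propositional operations and, via naturality of $\lambda$ (Remark~\ref{p:43}) together with the coalgebra homomorphism condition $\ftF f\cdot\alpha=\beta\cdot f$, with the modal operation.
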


The established approach to coalgebraic behavioural distances is to start with a set functor $\ftF\colon\SET\to\SET$ and obtain a $\Cats{\V}_\sym$-functor,as a lifting of~$\ftF$. In the quantalic setting, this approach may take the following shape.

Topological properties of $\Cats{\V}_\sym$ entail that every set\(\Lambda\)
of predicate liftings for a functor \(\ftF \colon \SET \to \SET\)
induces a functor \(\ftF^\Lambda \colon \Cats{\V}_\sym \to \Cats{\V}_\sym\),
known as the \df{Kantorovich lifting} of \(\ftF\) w.r.t.\  \(\Lambda\) \cite{BBKK18}.
More concretely, \(\ftF^\Lambda\)
sends a \(\V\)-category \((X,a)\) to the \(\V\)-category determined by the initial
structure on \(\ftF X\) w.r.t.\  the structured cone of all maps
\(\lambda(f) \colon \ftF X \to \ftII{\V_s}\) with \(\lambda \in \Lambda\) and \(f \colon (X,a) \to \V_s \in \Cats{\V}_\sym\).

As the name indicates,~$\ftF^\Lambda$ is indeed a \df{lifting} of \(\ftF\)
to \(\Cats{\V}_\sym\), that is, \(\ftII{-} \cdot \ftF^\Lambda = \ftF \cdot \ftII{-}\)
where \(\ftII{-} \colon \Cats{\V}_\sym \to \SET\) is the forgetful functor.

Every predicate lifting \(\lambda\in\Lambda\) for $\ftF$ becomes a  predicate lifting for the Kantorovich lifting~$\ftF^\Lambda$.

Kantorovich liftings are crucial prerequisites for existing
expressiveness results of quantitative coalgebraic logics for
\(\SET\)-functors (e.g.~\cite{WS20,WS21,KKK+21}).  However, it turns
out that the Kantorovich property can be usefully detached from the
notion of functor lifting:

\begin{defn}[Kantorovich Functor]
  Let \(\Lambda\) be a set of predicate liftings for a functor \(\ftF \colon \Cats{\V}_\sym \to \Cats{\V}_\sym\).
  The functor \(\ftF\) is \df{\(\Lambda\)-Kantorovich} if for every \(\V\)-category \(X\), the cone of all \(\V\)-functors
  \(
    \lambda(f) \colon \ftF X \to \V_s
  \)
  with \(\lambda \in \Lambda\) and \(f\in \Cats{\V}_\sym(X,\V_s)\) is initial.
\end{defn}
Clearly, every Kantorovich lifting \(F^\Lambda\) is \(\Lambda\)-Kantorovich.

\begin{example}\label{expl:expectation}

Recall the finite distribution functor $\mathcal{D} \colon \SET \to \SET$ from Example~\ref{exa:main}\eqref{main:prob}, and let~$\mathbb{E}$ be its \emph{expectation} predicate lifting:
given a map \(f\colon X \to [0,1]\), for every finite distribution \(\mu\) on \(X\), $\mathbb{E}_X(f)(\mu) = \sum_{x\in X} f(x) \mu(x)$.
The corresponding lifting~$\mathcal{D}^\mathbb{E}$ is the usual Kantorovich distance on distributions, restricted to finite distributions.
The closely related functor $\kantorovich$ from Example~\ref{exa:main}\eqref{main:kant} already lives in $\Cats{[0,1]_\oplus}_\sym$ and is not a lifting of any set functor;
however, $\kantorovich$ is \(\Lambda\)-Kantorovich for \(\Lambda = \{ \mathbb{E}\}\) where $\mathbb{E}$
is the (continuous) expectation predicate lifting:
given a non-expansive map \(f\colon X \to [0,1]\), for every tight probability distribution \(\mu\) on \(X\), \(\mathbb{E}_X(f)(\mu) = \int_X f(x)\;d\mu(x)\).
The functor~$\mathcal{D}^\mathbb{E}$ is a subfunctor of~$\kantorovich$, and the components of the associated inclusion natural transformation are initial.

The fact that \(\kantorovich\) is \(\{\mathbb{E}\}\)-Kantorovich makes it easy to verify that the functor \(\kantorovich(1 + {-})^A\) is \(\Lambda\)-Kantorovich for \(\Lambda = \{\mathbb{E}^{a,+1} \mid a \in A\}\), where given a non-expansive map \(f \colon X \to [0,1]\), for every \(A\)-indexed family \(l\) of tight probability measures on \(1+X\), \(\mathbb{E}^{a,+1}_X(f)(l)\) returns the expected value w.r.t\ \(l_a\) of the non-expansive map \(f^{+1}  \colon X+1 \to [0,1]\) that acts as \(f\) on the elements coming from \(X\) and sends the other element to \(1\).
Explicitly: \(\mathbb{E}^{a,+1}_X(f)(l) = \int_{1+X} f^{+1}(x)\;dl_a(x)\).
Consequently, the functor \(\mathcal{D}^\mathbb{E}(1+-)^A\) is also Kantorovich.
In fact, this functor is the Kantorovich lifting of the functor \(\mathcal{D}(1+-)^A \colon \SET \to \SET\) w.r.t\ \(\Lambda = \{\mathbb{E}^{a,+1} \mid a \in A\}\) where the predicate liftings \(\mathbb{E}^{a,+1}\) for \(\mathcal{D}(1+-)^A\) are constructed analogously to the predicate liftings \(\mathbb{E}^{a,+1}\) for \(\kantorovich(1+-)^A\).

\end{example}

\section{Behavioural and Logical Distances}
\label{sec:ld-and-bd}

Every functor \(\ftF \colon \Cats{\V}_\sym \to \Cats{\V}_\sym\) defines a notion of distance on an \(\ftF\)-coalgebra \((X,a,\alpha)\) that, similarly to behavioural equivalence, depends on all coalgebra homomorphisms, i.e.\ morphisms with domain \((X,a,\alpha)\) in the category $\Coalg{\ftF}$ of coalgebras for~$\ftF$.

\begin{defn}
  Let \(\ftF \colon \Cats{\V}_\sym \to \Cats{\V}_\sym\) be a $\V$-functor.
  The \df{behavioural distance} on an \(\ftF\)-coalgebra \((X,a,\alpha)\), denoted by \(bd_\alpha^\ftF\), is defined on all \(x,y \in X\) by

  \begin{align}\label{eq:bd-form}
    \bd_\alpha^\ftF(x,y) = \bigvee \{b(f(x),f(y)) \mid f \colon (X,a,\alpha) \to (Y,b,\beta) \in \Coalg{\ftF}\}.
  \end{align}

\end{defn}

If a \(\Cats{\V}_\sym\)-functor preserves initial morphisms, then the corresponding notion of behavioural distance is characterized by the coalgebra homomorphisms over identity maps.

In what follows, we denote by $1_X$ identity-on-objects $\V$-functors
\(1_X \colon (X,a) \to (X,b)\), assuming that $a\leq b$. These can extend to
coalgebra morphisms \(1_X \colon (X,a,\alpha) \to (X,b,\beta)\) under two
conditions, the first being commutativity of the diagram:

  \begin{displaymath}
\begin{tikzcd}[column sep=5em, row sep=4ex]
      (X,a) & \ftF (X,a) \\
      (X,b) & \ftF (X,b)
      \ar[from=1-1, to=1-2, "\alpha"]
      \ar[from=1-1, to=2-1, "1_X"']
      \ar[from=1-2, to=2-2, "\ftF 1_X"]
      \ar[from=2-1, to=2-2, "\ftF 1_X\cdot\alpha", dashed]
    \end{tikzcd}
  \end{displaymath}

which entails that $\beta$ must be \(\ftF 1_X \cdot \alpha\). The second condition
is that $\beta=\ftF 1_X \cdot \alpha$ must be a~$\V$-functor.

\begin{proposition}
  \label{prop:bd-formula}
  Let \((X,a,\alpha)\) be a coalgebra for a functor
  \(\ftF \colon \Cats{\V}_\sym \to \Cats{\V}_\sym\) that preserves
  initial morphisms. Then, we can equivalently
  restrict~\eqref{eq:bd-form} to morphisms of the form $f=1_X$:

  \begin{displaymath}
    \bd_\alpha^\ftF (x,y) = \bigvee \{b(x,y) \mid (X,b)\in\Cats{\V}_\sym, \ftF 1_X \cdot \alpha\in\Cats{\V}_\sym((X,b),\ftF(X,b)),a\leq b\}.
  \end{displaymath}

\end{proposition}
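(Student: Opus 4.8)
The plan is to prove the two inequalities separately, noting that ``$\geq$'' does not even use preservation of initial morphisms. Indeed, whenever $(X,b)\in\Cats{\V}_\sym$ satisfies $a\leq b$ and $\beta\coloneqq\ftF 1_X\cdot\alpha$ is a $\V$-functor $(X,b)\to\ftF(X,b)$, then by the discussion preceding the statement $1_X\colon(X,a,\alpha)\to(X,b,\beta)$ is a homomorphism of $\ftF$-coalgebras, so $b(x,y)=b(1_X(x),1_X(y))$ occurs among the values joined in~\eqref{eq:bd-form}; taking the supremum over all such $b$ yields ``$\geq$''.

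For ``$\leq$'' I would start from an arbitrary coalgebra morphism $f\colon(X,a,\alpha)\to(Y,c,\gamma)$ and manufacture from it a structure on $X$ that witnesses the value $c(f(x),f(y))$ on the right-hand side. The natural candidate is the initial structure $b$ on the set $X$ with respect to $f$, that is, $b(x,y)=c(f(x),f(y))$. Since $(Y,c)$ is a symmetric $\V$-category and initial structures of $\V$-categories are again $\V$-categories (recall that $\Cats{\V}_\sym$ is topological over $\SET$), $(X,b)\in\Cats{\V}_\sym$; moreover $a\leq b$ because $f$, being a coalgebra morphism, is in particular a $\V$-functor $(X,a)\to(Y,c)$; and by construction $f\colon(X,b)\to(Y,c)$ is an initial $\V$-functor.

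It remains to verify that $\beta\coloneqq\ftF 1_X\cdot\alpha$ is a $\V$-functor $(X,b)\to\ftF(X,b)$, and this is exactly where the hypothesis that $\ftF$ preserves initial morphisms enters: since $f\colon(X,b)\to(Y,c)$ is initial, so is $\ftF f\colon\ftF(X,b)\to\ftF(Y,c)$, i.e.\ the $\V$-category structure of $\ftF(X,b)$ is initial with respect to $\ftF f$. Using the coalgebra law $\ftF f\cdot\alpha=\gamma\cdot f$ together with the set-level identity $f\cdot 1_X=f$ we get $\ftF f(\beta(x))=\ftF f(\alpha(x))=\gamma(f(x))$, whence
\begin{align*}
(\ftF(X,b))(\beta(x),\beta(y))
  &= (\ftF(Y,c))(\ftF f(\beta(x)),\ftF f(\beta(y)))\\
  &= (\ftF(Y,c))(\gamma(f(x)),\gamma(f(y)))
   \geq c(f(x),f(y)) = b(x,y),
\end{align*}
where the first equality is initiality of $\ftF f$ and the inequality is that $\gamma$ is a $\V$-functor. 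Hence $\beta\in\Cats{\V}_\sym((X,b),\ftF(X,b))$, so $b$ is admissible on the right-hand side and contributes $b(x,y)=c(f(x),f(y))$; taking the join over all coalgebra morphisms $f$ gives ``$\leq$''. The only genuinely delicate point is this last verification that $\beta$ stays a $\V$-functor after coarsening the domain structure from $a$ to $b$ — it is precisely for this that preservation of initial morphisms is assumed; the rest is routine bookkeeping with initial structures.
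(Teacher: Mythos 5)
Your proof is correct and follows essentially the same route as the paper's: for ``$\leq$'' the paper likewise equips $X$ with the initial structure $b_f^\triangleleft(x,y)=c(f(x),f(y))$ induced by a coalgebra morphism $f$, invokes preservation of initial morphisms to see that $\ftF f$ is initial and hence that $\ftF 1_X\cdot\alpha$ is a $\V$-functor on $(X,b_f^\triangleleft)$, and concludes from $b_f^\triangleleft(x,y)=c(f(x),f(y))$. You merely make explicit the easy ``$\geq$'' direction, which the paper delegates to the discussion preceding the statement.
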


\begin{remark}
  \label{p:23}

  Since the forgetful functor \(\ftII{-} \colon \Cats{\V}_\sym \to \SET\) is topological, the elements of its fiber over a set \(X\) that are greater or equal than an element \((X,a)\) form a complete lattice \(\{(X,b) \in \Cats{\V}_\sym \mid a \leq b\}\).
  Moreover, for every \(\ftF\)-coalgebra \((X,a,\alpha)\), the endofunction on this complete lattice that sends a \(\V\)-category \((X,b)\) to the \(\V\)-category given by the initial structure on \(X\) w.r.t.\  the structured map \(\ftII{\ftF 1_X \cdot \alpha} \colon X \to \ftII{\ftF(X,b)}\) is monotone.
  Therefore, by Tarski's fixed point theorem this map has a greatest fixed point.
  By Proposition~\ref{prop:bd-formula}, if \(\ftF\) preserves initial morphisms, then this greatest fixed point is precisely the behavioural distance on \((X,a,\alpha)\).
  In particular, it follows that \(\beta \colon (X,bd_\alpha^\ftF) \to \ftF(X,bd_\alpha^\ftF)\) is a \(\V\)-functor.
  Furthermore, if \(\ftF\) is a lifting of a functor \(\ftG \colon \SET \to \SET\), then the behavioural distance on an \(\ftF\)-coalgebra \((X,a,\alpha)\) is given by the greatest \(\V\)-categorical structure on \(X\) that makes the \(\ftG\)-coalgebra \(\ftII{\alpha} \colon X \to \ftG X\) an \(\ftF\)-coalgebra.
  This is in line with the notion of behavioural distance based on liftings of \(\SET\)-functors (e.g.~\cite{BBKK18,KKK+21}).
\end{remark}

\noindent Behavioural distance is invariant under coalgebra homomorphisms:

\begin{proposition}
  \label{prop:bd-stable}
  Let \(\ftF \colon \Cats{\V}_\sym \to \Cats{\V}_\sym\) be a functor and \(f \colon (X,a,\alpha) \to (Y,b,\beta)\) a coalgebra morphism of \(\ftF\)-coalgebras.
  Then, for all \(x,y \in X\), \(bd_\alpha^\ftF (x,y) = bd_\beta^\ftF (f(x),f(y))\).
\end{proposition}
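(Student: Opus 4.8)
The plan is to establish the two inequalities $\bd_\beta^\ftF(f(x),f(y)) \le \bd_\alpha^\ftF(x,y)$ and $\bd_\alpha^\ftF(x,y) \le \bd_\beta^\ftF(f(x),f(y))$ separately, using throughout that $\bd$ is a supremum ranging over \emph{all} coalgebra homomorphisms out of the given coalgebra, and that composites of coalgebra homomorphisms are again coalgebra homomorphisms. For the first inequality I would argue the easy direction: given any coalgebra homomorphism $g\colon (Y,b,\beta)\to(Z,c,\gamma)$, the composite $g\circ f\colon (X,a,\alpha)\to(Z,c,\gamma)$ is again a coalgebra homomorphism, so the value $c(g(f(x)),g(f(y)))$ already occurs among the terms in the supremum~\eqref{eq:bd-form} defining $\bd_\alpha^\ftF(x,y)$; joining over all such $g$ yields the bound.

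For the converse, I would fix an arbitrary coalgebra homomorphism $g\colon (X,a,\alpha)\to(Z,c,\gamma)$ and bound the single term $c(g(x),g(y))$ by $\bd_\beta^\ftF(f(x),f(y))$. The key move is to complete the span $(Z,c,\gamma)\xleftarrow{g}(X,a,\alpha)\xrightarrow{f}(Y,b,\beta)$ to a commutative square in $\Coalg{\ftF}$: since $\Cats{\V}_\sym$ is cocomplete (Remark~\ref{rem:sym-sep}) and the forgetful functor $\Coalg{\ftF}\to\Cats{\V}_\sym$ creates colimits, the span has a pushout $(Z,c,\gamma)\xrightarrow{h}(W,w,\delta)\xleftarrow{h'}(Y,b,\beta)$ in $\Coalg{\ftF}$, with $h\circ g = h'\circ f$. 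Then $h$, being a $\V$-functor and hence non-expansive, gives $c(g(x),g(y)) \le w(h(g(x)),h(g(y))) = w(h'(f(x)),h'(f(y)))$, and since $h'$ is a coalgebra homomorphism out of $(Y,b,\beta)$ the right-hand side is one of the terms in the supremum defining $\bd_\beta^\ftF(f(x),f(y))$. Taking the join over all $g$ then completes the proof.

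I expect the only non-routine ingredient to be the existence of the pushout in $\Coalg{\ftF}$; this rests on the standard fact that the forgetful functor from the category of coalgebras of an endofunctor creates all colimits that exist in the base category, combined with the cocompleteness of $\Cats{\V}_\sym$ already recorded in Remark~\ref{rem:sym-sep}. Everything else is immediate from the definitions of coalgebra homomorphism and of $\V$-functor (in particular, non-expansiveness of $\V$-functors). An alternative to the pushout would be to exhibit by hand a common coalgebra quotient of $(Z,c,\gamma)$ and $(Y,b,\beta)$ identifying $g$ and $f$, but invoking cocompleteness is cleaner and avoids explicit constructions.
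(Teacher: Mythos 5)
Your proof is correct and follows essentially the same route as the paper's: the easy inequality via closure of coalgebra homomorphisms under composition, and the converse by completing the span $(Z,c,\gamma)\leftarrow(X,a,\alpha)\rightarrow(Y,b,\beta)$ to a pushout in $\Coalg{\ftF}$ (whose existence the paper likewise derives from cocompleteness of the coalgebra category) and using that the pushout leg out of $(Z,c,\gamma)$ is a $\V$-functor. No gaps.
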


Coalgebraic modal logic complements behavioural distance with logical distance.

\begin{defn}
  Let \(\Lambda\) be a set of predicate liftings for a functor \(\ftF \colon \Cats{\V}_\sym \to \Cats{\V}_\sym\).
  The \df{logical distance} on an \(\ftF\)-coalgebra \((X,a,\alpha)\), denoted by \(ld_\alpha^\Lambda\), is the initial structure on \(X\) w.r.t.\  the structured cone of all maps
  \(
    \Sema{\phi} \colon X \to \ftII{(\V,\hom_s)}
  \)
  with \(\phi \in \calL(\Lambda)\).
  More explicitly, for all \(x,y \in X\),
  \begin{displaymath}
    ld_\alpha^\Lambda(x,y) = \bigwedge \{\hom_s(\Sema{\phi}(x),\Sema{\phi}(y)) \mid \phi \in \calL(\Lambda)\}.
  \end{displaymath}
\end{defn}

It follows immediately from Proposition~\ref{prop:form-sem-f} that logical distance is also invariant under coalgebra homomorphisms.

The remainder of the paper is devoted to establishing criteria under which
logical distance and behavioural distance coincide.

  Recall that a (quantitative) coalgebraic logic is \df{adequate} if for every
  \(\ftF\)-coalgebra \((X,\alpha)\), \(bd_\alpha^\ftF \leq ld_\alpha^\Lambda\),
  and \df{expressive} if \(ld_\alpha^\Lambda \leq bd_\alpha^\ftF\), for every \(\ftF\)-coalgebra~\((X,\alpha)\).

The former property is relatively easy.
\begin{theorem}
  \label{p:25}
  Let \(\Lambda\) be a set of predicate liftings for a functor \(\ftF \colon \Cats{\V}_\sym \to \Cats{\V}_\sym\).
  Then, the coalgebraic logic \(\calL(\Lambda)\) is adequate.
\end{theorem}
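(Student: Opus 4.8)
The plan is to reduce adequacy to a single inequality per formula and per coalgebra morphism. Recall that, for fixed $x,y\in X$, the value $ld_\alpha^\Lambda(x,y)$ is the meet $\bigwedge_{\phi}\hom_s(\Sema{\phi}(x),\Sema{\phi}(y))$ taken over all $\phi\in\calL(\Lambda)$, whereas $bd_\alpha^\ftF(x,y)$ is the join $\bigvee_{f}b(f(x),f(y))$ taken over all coalgebra morphisms $f\colon(X,a,\alpha)\to(Y,b,\beta)$. Hence $bd_\alpha^\ftF\le ld_\alpha^\Lambda$ will follow once I show, for every such $f$ and every $\phi$,
\[
  b(f(x),f(y))\ \le\ \hom_s\bigl(\Sema{\phi}(x),\Sema{\phi}(y)\bigr),
\]
since then the join over $f$ on the left and the meet over $\phi$ on the right remain related by $\le$.

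To establish this inequality, fix $f$, $\phi$, $x$, $y$. By Proposition~\ref{prop:form-sem-f}, $\Sema{\phi}=\Semb{\phi}\cdot f$, so $\Sema{\phi}(x)=\Semb{\phi}(f(x))$ and $\Sema{\phi}(y)=\Semb{\phi}(f(y))$. The interpretation $\Semb{\phi}\colon(Y,b)\to\V_s$ is, by the recursive definition of the semantics, a $\V$-functor; unfolding the definition of $\V$-functor for the codomain $\V_s$, whose $\V$-categorical structure is $\hom_s$, gives
\[
  b(f(x),f(y))\ \le\ \hom_s\bigl(\Semb{\phi}(f(x)),\Semb{\phi}(f(y))\bigr)\ =\ \hom_s\bigl(\Sema{\phi}(x),\Sema{\phi}(y)\bigr),
\]
which is exactly what is wanted. (The family of coalgebra morphisms indexing the join for $bd_\alpha^\ftF(x,y)$ is non-empty, as it contains $\id_X$, but this is immaterial to the argument.)

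The one point deserving attention — which is, however, part of the setup rather than a genuine obstacle — is the tacit claim that each interpretation $\Sema{\phi}$ really is a $\V$-functor into $\V_s$; this is what licenses the final step. In a fully detailed proof this is a routine induction on $\phi$: the base case uses that the constant map at $\top$ is a $\V$-functor (which is precisely why $\top$ is interpreted as $\top$ and not as $k$ when $\V$ is non-integral), the propositional cases use that the lattice operations of $\V$ together with $\hom_s$ are $\V$-functorial on powers of $\V_s$, and the modal case combines $\V$-functoriality of $\lambda\colon\ftF\V_s\to\V_s$ with that of $\alpha$. Notably, no hypotheses on $\ftF$ (in particular, no preservation of initial morphisms) are needed, which is why adequacy holds at this full level of generality.
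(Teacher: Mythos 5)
Your proof is correct and follows essentially the same route as the paper: the paper likewise combines the fact that each $\Semb{\phi}$ is a $\V$-functor into $\V_s$ with the invariance of logical distance under coalgebra morphisms (which is itself just Proposition~\ref{prop:form-sem-f}, the step you unfold explicitly). The only difference is presentational — you inline the invariance argument per formula rather than citing it as a separate fact.
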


\section{Expressivity of Quantitative Coalgebraic Modal Logic}
\label{sec:expr}

We fix a functor \(\ftF \colon \Cats{\V}_\sym \to \Cats{\V}_\sym\) and
a set~\(\Lambda\) of predicate liftings for~$\ftF$ throughout this
section.

The following lemma is related to the Knaster-Tarski proof principle
for expressiveness identified in work on codensity liftings
\cite[Theorem~IV.5]{KKK+21}:

\begin{lemma}
  \label{lem:ld-le-bd}
  Let \((X,a,\alpha)\) be an \(\ftF\)-coalgebra.
  If the cone of all \(\V\)-functors \({\lambda(f) \colon \ftF (X,ld_\alpha^\Lambda) \to \V_s}\)
  with \(\lambda \in \Lambda\) and \(f \in \Sema{\calL(\Lambda)} \) is initial, then \(\calL(\Lambda)\) is expressive.
\end{lemma}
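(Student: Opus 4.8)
The plan is to establish the expressiveness inequality $ld_\alpha^\Lambda\le bd_\alpha^\ftF$ directly from the definition~\eqref{eq:bd-form} of behavioural distance, by constructing, by hand, a single coalgebra into which $(X,a,\alpha)$ maps and which realizes the bound. Concretely, I would set $b:=ld_\alpha^\Lambda$, viewed as a (symmetric, since $\hom_s$ is) $\V$-category structure on the set~$X$, and then try to upgrade the map $1_X\colon (X,a)\to(X,b)$ to a coalgebra morphism $1_X\colon (X,a,\alpha)\to(X,b,\beta)$ with $\beta:=\ftF 1_X\cdot\alpha$ (notation as before Proposition~\ref{prop:bd-formula}). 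Granting this, $(X,b,\beta)$ is a legitimate object of $\Coalg{\ftF}$, and the instance $f=1_X$ of~\eqref{eq:bd-form} gives $bd_\alpha^\ftF(x,y)\ge b(x,y)=ld_\alpha^\Lambda(x,y)$ for all $x,y$, which is exactly expressiveness. Note that no assumption that $\ftF$ preserve initial morphisms is needed, precisely because the target coalgebra is built explicitly rather than extracted via Proposition~\ref{prop:bd-formula}.

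Two things must be checked to make $1_X$ a coalgebra morphism. First, $1_X\colon(X,a)\to(X,b)$ must be a $\V$-functor, i.e.\ $a\le ld_\alpha^\Lambda$; this is immediate, since every $\Sema{\phi}$ is a $\V$-functor $(X,a)\to\V_s$, so $a(x,y)\le\hom_s(\Sema{\phi}(x),\Sema{\phi}(y))$ for every $\phi$, and $ld_\alpha^\Lambda$ is the infimum of these. Second — the crux — $\beta=\ftF 1_X\cdot\alpha$ must be a $\V$-functor $(X,b)\to\ftF(X,b)$; commutativity of the relevant square holds by the very definition of $\beta$.

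For the second point I would invoke the initiality hypothesis: the $\V$-category structure on $\ftF(X,ld_\alpha^\Lambda)$ is initial w.r.t.\ the cone of all $\lambda(f)\colon\ftF(X,ld_\alpha^\Lambda)\to\V_s$ with $\lambda\in\Lambda$ and $f\in\Sema{\calL(\Lambda)}$. Since $\ftII{-}\colon\Cats{\V}_\sym\to\SET$ is topological, $\beta$ is a $\V$-functor as soon as every composite $\lambda(f)\cdot\beta\colon(X,b)\to\V_s$ is. So fix $\lambda\in\Lambda$ and $f=\Sema{\psi}\in\Sema{\calL(\Lambda)}$. Because $b=ld_\alpha^\Lambda$ is the infimum of the $\hom_s(\Sema{\phi}(-),\Sema{\phi}(-))$, each $\Sema{\phi}$ — in particular $f$ — is also a $\V$-functor $(X,b)\to\V_s$, and $f\cdot 1_X=f$ as underlying maps. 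Naturality of the predicate lifting $\lambda$ at $1_X\colon(X,a)\to(X,b)$ then yields $\lambda_{(X,b)}(f)\cdot\ftF 1_X=\lambda_{(X,a)}(f\cdot 1_X)=\lambda_{(X,a)}(f)$, whence
\begin{displaymath}
  \lambda(f)\cdot\beta=\lambda_{(X,b)}(f)\cdot\ftF 1_X\cdot\alpha=\lambda_{(X,a)}(f)\cdot\alpha=\Sema{\lambda(\psi)}
\end{displaymath}
by the modality clause of the semantics. Finally $\Sema{\lambda(\psi)}$ is one of the $\Sema{\phi}$, hence a $\V$-functor $(X,b)\to\V_s$ by the same infimum argument; this establishes the second point and completes the proof.

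The main obstacle I anticipate is purely the Yoneda/naturality bookkeeping of Remark~\ref{p:43}: one must keep straight at which object $\lambda$ is applied (at $(X,a)$ versus at $(X,b)$, which differ only in the domain structure carried by the same underlying map $f$), and translate cleanly between ``$\lambda(f)$ as a morphism $\ftF(X,b)\to\V_s$'' and the semantic identity $\Sema{\lambda(\psi)}=\lambda(\Sema{\psi})\cdot\alpha$. Everything else reduces to the two standard facts that initial cones are detected legwise and that $ld_\alpha^\Lambda$ makes each $\Sema{\phi}$ nonexpansive.
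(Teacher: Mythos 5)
Your proof is correct and follows essentially the same route as the paper's: both reduce the claim to showing that $\ftF 1_X\cdot\alpha$ is a $\V$-functor $(X,ld_\alpha^\Lambda)\to\ftF(X,ld_\alpha^\Lambda)$, and both verify this legwise against the hypothesized initial cone via naturality of the predicate liftings and the identity $\lambda(\Sema{\psi})\cdot\ftF 1_X\cdot\alpha=\Sema{\lambda(\psi)}$. The only cosmetic difference is that you conclude by instantiating~\eqref{eq:bd-form} at the explicitly constructed coalgebra $(X,ld_\alpha^\Lambda,\ftF 1_X\cdot\alpha)$, whereas the paper cites Remark~\ref{p:23}; your observation that no preservation of initial morphisms is needed is accurate.
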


Kantorovich functors come with a natural proof method for Lemma~\ref{lem:ld-le-bd}.
By definition, the cone of all \(\V\)-functors from \((X,ld_\alpha^\Lambda)\) to \(\V_s\) that are interpretations of formulas of \(\calL(\Lambda)\) is initial. Roughly speaking, one wishes to conclude from this fact that one can \emph{approximate} every \(\V\)-functor  \((X,ld_\alpha^\Lambda)\to\V_s\) by interpretations of formulas; then, to show Lemma~\ref{lem:ld-le-bd} we just need to guarantee that predicate liftings preserve \emph{approximations}.  We formalize this approach using closure operators.
We begin by introducing some notation.

Given a \(\V\)-functor \(i \colon Y \to X\) and a set \(A \subseteq \Cats{\V}_\sym(X,\V_s)\),
we denote by \(A \cdot i\) the set \(\{ f \cdot i \mid f \in A\}\), by \(\Lambda(A)\)
the set \(\{\lambda(f) \mid f \in A, \lambda \in \Lambda\}\) and by \(\ftII{A}\) the set \(\{ \ftII{f} \mid f \in A\}\) (of maps).

It will be convenient to encapsulate the propositional part of the logic algebraically:

\begin{defn}\label{def:pp-algebra}
  Let \(X\) be a \(\V\)-category.
  A subset \(A\) of \(\V_s^X\) is a \df{propositional algebra} if it contains the  \(\V\)-functor that is constantly \(\top\), and is closed under the operations \(\wedge\), \(\vee\), \(\hom_s(u,-)\), and \(u \otimes -\), for every \(u \in \V\).
\end{defn}

In particular, given an $\ftF$-coalgebra \((X,a,\alpha)\),
\(\Sema{\calL(\Lambda)} \subseteq \V_s^{(X,\,ld_\alpha^\Lambda)}\)
is a propositional algebra.

\begin{defn}[\(\V_s\)-closure]
	Given a set \(X\), a \df{\(\V_s\)-closure operator} is a family \(\cC = (\cC_X)_{X \in \SET}\) of
	closure operators on \(\SET(X,\V)\)  (i.e.\ operators \(\cC_X\colon\pow(\SET(X,\V))\to\pow(\SET(X,\V))\)  satisfying the standard \emph{extensiveness},
	\emph{monotonicity} and \emph{idempotence} laws) such that
	for every symmetric \(\V\)-category \((X,a)\), \(\Cats{\V}_\sym((X,a),\V_s)\subseteq\SET(X,\V)\) is closed w.r.t.\  \(\cC_X\).
	When no ambiguities arise, we write~\(\cC(A)\) instead of \(\cC_X(A)\). As usual, a subset \(A\subseteq\SET(X,\V)\) is
	\df{\(\cC\)-dense} if \(\cC(A)=\SET(X,\V)\).
\end{defn}

A \(\V_s\)-closure operator \(\cC\) is equivalently given by a family
\(\overline{\cC} = (\overline{\cC}_X)_{X \in \Cats{\V}_\sym}\) of closure
operators on \(\Cats{\V}_\sym(X,\V_s)\) such that for all \(A \subseteq \Cats{\V}_\sym(X,\V_s)\) and \(B \subseteq \Cats{\V}(Y,\V_s)\), where~$|Y|=|X|$, if \(\ftII{A} = \ftII{B}\) then \(\ftII{\overline{\cC}_X(A)} = \ftII{\overline{\cC}_Y(B)}\).
We make no distinction between \(\cC\) and \(\overline{\cC}\),
e.g.\ we apply $\cC$ also to $A\subseteq\V_s^X=\Cats{\V}_\sym(X,\V_s)$.

\begin{example}
The following closure operators on \(\SET(X,\V)\) are \(\V_s\)-closure operators:

  \begin{enumerate}
    \label{p:27}
    \item the identity operator \(\cId_X\);
    \item the operator \(\cL^{\V_s}_X\) that sends every set to its L-closure in the \(\V\)-category \(\V_s^X\);

    \item the closure operator \(\ccInf_X\) that sends every set \(A\) to its closure under codirected infima and finite suprema;

    \item the closure operator \(\cInf_X\) that sends every set \(A\) to its closure under infima;

    \item \label{p:28} The closure operator \(\cFun_X\) that sends every set \(A\) to \(\ftII{\Cats{\V}_\sym(X_A,\V_s)}\), where \(X_A\) denotes the \(\V\)-category determined by the initial structure with respected to the structured cone of all maps \(f \colon X \to \ftII{\V_s}\) with \(f \in A\) (\(\cFun\) is in fact the closure operator of a Galois connection, relating to recent work by Beohar et al.~\cite{BeoharEA22}).

  \end{enumerate}
\end{example}

While \(\cId\) is the \emph{least} \(\V_s\)-closure operator, somewhat
less trivially \(\cFun\) is the \emph{greatest} one (and hence induces
the \emph{weakest} notion of density).

The next result connects initiality, closure and density.

\begin{proposition}
  \label{lem:closure-dense}
Let \(\cC\) be a \(\V_s\)-closure operator.
  For every \(\V\)-category \(X\) and \(A\subseteq\V_s^X\),
  \begin{enumerate}
    \item \label{p:44} if \(A\) is \(\cC\)-dense then \(A\) is initial; for $\cC=\cFun$,
    the converse holds as well;
    \item \label{p:42} if \(\cC(A)\) is initial then \(A\) is initial.
  \end{enumerate}
\end{proposition}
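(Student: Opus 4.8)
Following the notation in the definition of $\cFun$, write $X_A$ for the $\V$-category with underlying set $\ftII{X}$ carrying the initial structure $a_A(x,y)=\bigwedge_{f\in A}\hom_s(f(x),f(y))$ induced by $A$. Two elementary observations drive the argument. First, $X_A$ is \emph{symmetric}, since $\hom_s$ is a symmetric operation. Second, since $A\subseteq\V_s^X=\Cats{\V}_\sym(X,\V_s)$, every $f\in A$ is a $\V$-functor $(X,a)\to\V_s$, so $a\le a_A$; consequently $A$ is initial iff $a_A\le a$. I also recall that $A\subseteq\V_s^X$ is $\cC$-dense iff $\cC(A)=\V_s^X=\Cats{\V}_\sym(X,\V_s)$.

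For part~\ref{p:42}, suppose $\cC(A)$ is initial. By the definition of $a_A$ as an infimum, every $f\in A$ satisfies $a_A(x,y)\le\hom_s(f(x),f(y))$, i.e.\ $A\subseteq\Cats{\V}_\sym(X_A,\V_s)$. Since $X_A$ is a symmetric $\V$-category, the defining closedness clause for $\V_s$-closure operators says that $\Cats{\V}_\sym(X_A,\V_s)$ is $\cC$-closed, so by monotonicity $\cC(A)\subseteq\Cats{\V}_\sym(X_A,\V_s)$. Hence $a_A(x,y)\le\hom_s(g(x),g(y))$ for every $g\in\cC(A)$ and all $x,y$, and therefore $a_A\le\bigwedge_{g\in\cC(A)}\hom_s(g(-),g(-))=a$, the last equality since $\cC(A)$ is initial. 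Together with $a\le a_A$ this gives $a=a_A$, i.e.\ $A$ is initial.

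For the first claim of part~\ref{p:44}: if $A$ is $\cC$-dense then $\cC(A)=\Cats{\V}_\sym(X,\V_s)$, which is an initial cone over $X$ --- for each $x,y$ the $\V$-functor $a(x,-)\colon X\to\V_s$ (a $\V$-functor because of the triangle inequalities and symmetry of $X$) witnesses $\bigwedge_{g\in\Cats{\V}_\sym(X,\V_s)}\hom_s(g(x),g(y))\le\hom_s(a(x,x),a(x,y))\le\hom(a(x,x),a(x,y))\le a(x,y)$, the last step using $k\le a(x,x)$. So $\cC(A)$ is initial and part~\ref{p:42} applies. For the converse when $\cC=\cFun$: if $A$ is initial then $a_A=a$, so $X_A=X$, and since $\cFun(A)=\ftII{\Cats{\V}_\sym(X_A,\V_s)}$ by definition, we obtain $\cFun(A)=\Cats{\V}_\sym(X,\V_s)=\V_s^X$, i.e.\ $A$ is $\cFun$-dense. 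The one point worth singling out is that $X_A$ is symmetric; this is exactly what permits invoking the $\V_s$-closedness of $\Cats{\V}_\sym(X_A,\V_s)$ in part~\ref{p:42}, and everything else is routine bookkeeping.
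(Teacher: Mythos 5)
Your proof is correct and follows essentially the same route as the paper's: both arguments rest on the two facts that $\Cats{\V}_\sym(X_A,\V_s)$ is $\cC$-closed for the symmetric $\V$-category $X_A$ (equivalently, $\cC(A)\subseteq\cFun(A)$, which the paper isolates as the statement that $\cFun$ is the greatest $\V_s$-closure operator) and that $\V_s$ is initially dense in $\Cats{\V}_\sym$, which you verify inline via the functors $a(x,-)$. The only cosmetic difference is that you prove part (2) directly and deduce the first half of part (1) from it, whereas the paper proves part (1) first and obtains part (2) from it using monotonicity and idempotence of $\cFun$.
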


\begin{defn}
  Let \(\cC\) be a \(\V_s\)-closure operator.
  A predicate lifting \(\lambda\in\Lambda\) is \df{\(\cC\)-continuous} if every  \(X\)-component \(\lambda_X \colon \V_s^X \to \V_s^{\ftF X}\) is continuous w.r.t.\  \(\cC_X\) and \(\cC_{\ftF X}\).
\end{defn}

\begin{example}
  \label{p:35} A predicate lifting~$\lambda$ is\vspace{-0.5em}
  \begin{enumerate}
    \item always \(\cId\)-continuous;
    \item  \(\cL\)-continuous iff its components are L-continuous;
    \item \(\ccInf\)-continuous iff its components preserve codirected infima and finite suprema;
    \item  \(\cInf\)-continuous iff its components preserve all infima.

  \end{enumerate}
\end{example}

It is easily verified that if a predicate lifting \(\lambda\) for a \(\{\lambda\}\)-Kantorovich functor is \(\cFun\)-continuous, then it preserves initial cones.
Thus, \(\cFun\)-continuity is a very strong assumption, which by Lemma~\ref{lem:ld-le-bd} entails that \(\calL(\{\lambda\})\) is expressive.
In order to obtain expressivity results for coalgebraic logics under weaker
assumptions, we will consider situations where
\(\cFun\)-density can be equivalently described as \(\cC\)-density for more
suitable \(\V_s\)-closure operators~\(\cC\).

\begin{defn}
  Let \(\calI\) be a class of symmetric
  \(\V\)-categories.  A \(\V_s\)-closure operator \(\cC\)
  \df{characterizes initiality} on \(\calI\) if for every
  \(\V\)-category \(X \in \calI\) and every propositional algebra
  \(A \subseteq \V_s^X\), \(A\) is \(\cC\)-dense iff \(A\) is \(\cFun\)-dense.
\end{defn}
Characterization of initiality for a given class~$\calI$ depends only
on the quantale and the closure operator, and may be seen as a form of
Stone-Weierstraß property; we will give general Stone-Weierstraß
theorems for some classes of quantales in Section~\ref{sec:SW-th}. In
most of these, $\calI$ will be the class of finite symmetric
$\V$-categories (and in one case, the class of totally bounded
pseudometric spaces).

We introduce next a key technical definition.

\begin{defn}
  Let \(\cC\) be a \(\V_s\)-closure operator.
  A cocone \((i \colon X_i \to X)_{i \in \calI}\) of initial $\V$-functors in \(\Cats{\V}_\sym\) \df{coreflects \(\cC\)-density} if the cone \((- \cdot i \colon \V_s^X \to \V_s^{X_i})_{i \in \calI}\) \df{reflects \(\cC\)-density}; that is, for every \(A \subseteq \V_s^X\), if \(A \cdot i\) is \(\cC\)-dense for every \(i \in \calI\), then \(A\) is \(\cC\)-dense.
\end{defn}

Since by Lemma~\ref{lem:closure-dense}, \(\cFun\)-density is equivalent to
initiality, we refer to coreflection of \(\cFun\)-density also as \df{coreflection of initiality}.

\begin{example}\label{expl:corefl}
  \begin{enumerate}
    \item An initial \(\V\)-functor coreflects \(\cId\)-density iff it is L-dense.
    \item\label{item:tb} A classical $1$-bounded metric space \((X,d)\) is totally bounded if for every \(u > 0\) there is a finite set \(X_u\) such that for every \(x \in X\) there is \(y \in X_u\) so that \(d(x,y) < u\).
      It can be shown that for every totally bounded metric space \((X,d)\) the cocone of embeddings of finite subspaces coreflects \(\cL\)-density.

    \item It follows from \cite[Lemma~1.10(4)]{HT10} that every jointly L-dense directed cocone coreflects initiality.
          In particular, every directed colimit coreflects initiality.
  \end{enumerate}
\end{example}

Since in general we can only replace \(\cFun\)-density with \(\cC\)-density in a restricted class of \(\V\)-categories, our results will depend on functors that are compatible with such a class.

\begin{defn}
  A class~\(\calI\)  of symmetric \(\V\)-categories \df{coreflects initiality under} a functor~\(\ftF\) if for every \(\V\)-category \(X\) there is a cocone \((i \colon X_i \to X)_{i \in \calI_X}\) of initial \(\V\)-functors such that \(X_i \in \calI\) and the cocone \((\ftF i \colon \ftF X_i \to \ftF X)_{i \in \calI_X}\) coreflects initiality.

\end{defn}

\begin{example}
  \begin{enumerate}
  \item The class
    \(\Cats{\V}_\sym\) coreflects initiality under every \(\Cats{\V}_\sym\)-functor.
  \item The class of all finite symmetric \(\V\)-categories coreflects
    initiality under Kantorovich liftings of finitary \(\SET\)-functors
    to \(\Cats{\V}_\sym\).
  \end{enumerate}
\end{example}

\begin{proposition}\label{prop:coref-ini-preserve}
  Let \(j \colon \ftG\to \ftF\) be a natural transformation between
  \(\Cats{\V}_\sym\)-functors such that each component of~\(j\) is
  initial and L-dense.  If a class~\(\calI\) of symmetric
  $\V$-categories coreflects initiality under~\(\ftG\), then~\(\calI\)
  coreflects initiality under~\(\ftF\).
\end{proposition}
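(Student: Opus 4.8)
The plan is to exploit the naturality square of $j$ together with the fact that $\cFun$-density coincides with initiality (Proposition~\ref{lem:closure-dense}\eqref{p:44}). Fix a $\V$-category $X$. Since $\calI$ coreflects initiality under $\ftG$, pick a cocone $(i\colon X_i\to X)_{i\in\calI_X}$ of initial $\V$-functors with each $X_i\in\calI$ such that $(\ftG i\colon \ftG X_i\to \ftG X)_{i\in\calI_X}$ coreflects initiality. I claim the \emph{same} cocone $(i\colon X_i\to X)$ witnesses coreflection of initiality under $\ftF$; that is, I must show $(\ftF i\colon \ftF X_i\to \ftF X)_{i\in\calI_X}$ coreflects initiality.

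The key observation is that, since each component $j_Y\colon \ftG Y\to \ftF Y$ is initial and L-dense, and L-dense $\V$-functors compose, we can relate $\ftF i$ to $\ftG i$ via the naturality square $j_X\cdot \ftG i = \ftF i\cdot j_{X_i}$. First I would record two auxiliary facts. (a) Precomposition with an initial L-dense $\V$-functor reflects $\cFun$-density: if $m\colon M\to X$ is initial and L-dense and $A\subseteq \V_s^X$ has $A\cdot m$ $\cFun$-dense, i.e.\ initial, then $A$ is initial — because initiality of $A\cdot m$ together with initiality of $m$ gives initiality of $A$ (initial cones compose in the usual way, and $m$ being L-dense is not even needed here, only initiality). (b) Conversely, postcomposition along an initial L-dense $m$ \emph{preserves} $\cFun$-density in the relevant direction: if $A\subseteq \V_s^X$ is initial, then $A\cdot m$ is initial as well (restricting an initial cone along any $\V$-functor whose image is L-dense keeps it initial; this is exactly the content of L-density, since any two $\V$-functors agreeing on the image of $m$ agree everywhere). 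I would phrase both of these purely in terms of $\cFun$-density using Proposition~\ref{lem:closure-dense}\eqref{p:44}.

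Now take $B\subseteq \V_s^{\ftF X}$ and suppose $B\cdot \ftF i$ is $\cFun$-dense (initial) for every $i\in\calI_X$. Consider $B\cdot j_X\subseteq \V_s^{\ftG X}$. For each $i$, using the naturality square we compute
\[
  (B\cdot j_X)\cdot \ftG i \;=\; B\cdot(j_X\cdot \ftG i)\;=\;B\cdot(\ftF i\cdot j_{X_i})\;=\;(B\cdot \ftF i)\cdot j_{X_i}.
\]
By hypothesis $B\cdot \ftF i$ is initial, and $j_{X_i}$ is initial and L-dense, so by fact (b) the set $(B\cdot \ftF i)\cdot j_{X_i}$ is initial, i.e.\ $(B\cdot j_X)\cdot \ftG i$ is $\cFun$-dense, for every $i$. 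Since $(\ftG i)_{i\in\calI_X}$ coreflects initiality, $B\cdot j_X$ is initial. Finally, $j_X$ is initial and L-dense, so by fact (a) $B$ itself is initial, i.e.\ $\cFun$-dense. This shows $(\ftF i)_{i\in\calI_X}$ coreflects initiality, and since each $X_i\in\calI$, the class $\calI$ coreflects initiality under $\ftF$.

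The main obstacle I anticipate is getting the auxiliary facts (a) and (b) stated at the right level of generality and making sure the L-density hypothesis is used where it is genuinely needed — concretely in fact (b), where passing from "$A$ initial on $\ftG X$'s codomain side" to "$A$ restricted along $j$ is still initial" relies on L-density of $j$ to transfer separating families of $\V$-functors; mere initiality of $j$ would not suffice there. Everything else is a routine diagram chase with the naturality square, plus the translation between $\cFun$-density and initiality supplied by Proposition~\ref{lem:closure-dense}.
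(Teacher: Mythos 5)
Your overall route is exactly the paper's: reuse the cocone witnessing coreflection under $\ftG$, precompose with $j_{X_i}$, rewrite via the naturality square $j_X\cdot\ftG i=\ftF i\cdot j_{X_i}$, invoke coreflection of initiality of $(\ftG i)_{i}$, and finally pass from initiality of $B\cdot j_X$ to initiality of $B$ along $j_X$. The chain of conclusions is correct. However, the justifications of your two auxiliary facts are swapped, and the one you give for fact (a) is a genuine gap. In fact (b) --- restriction of an initial cone along $m$ --- what you need is \emph{initiality} of $m$, not L-density: if $A$ is initial and $m$ is initial, then $A\cdot m$ is initial simply because initial cones compose; L-density plays no role here (and an L-dense but non-initial $m$ would not do). In fact (a) --- cancellation, i.e.\ passing from initiality of $A\cdot m$ to initiality of $A$ --- your justification ``initial cones compose in the usual way, and $m$ being L-dense is not even needed'' is a non sequitur: composition gives the implication in the opposite direction, and cancellation is simply false without density (take $m$ the inclusion of a single point into $X$ and $A$ a set of constant maps: $A\cdot m$ is initial on the one-point space and $m$ is initial, yet $A$ is not initial on $X$). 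This cancellation step is precisely where L-density of $j_X$ is indispensable; the paper does not reprove it but cites \cite[Lemma~1.10(4)]{HT10} for it. Your closing paragraph doubles down on the wrong attribution, insisting that L-density is essential in (b) and dispensable in (a); it is the other way around. Once (a) and (b) are given their correct justifications (or (a) is replaced by the citation to \cite{HT10}), the rest of your argument goes through verbatim and coincides with the paper's proof.
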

\begin{corollary}
  If $\ftG$ is a Kantorovich lifting of a finitary set functor and
  \(j \colon \ftG\to \ftF\) is a natural transformation between
  \(\Cats{\V}_\sym\)-functors such that each component of~\(j\) is
  initial and L-dense, then the class of all finite symmetric
  \(\V\)-categories coreflects initiality under~$\ftF$.
\end{corollary}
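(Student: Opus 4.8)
The plan is to read this off directly from Proposition~\ref{prop:coref-ini-preserve} together with the second item of the Example immediately preceding it, which records that the class of all finite symmetric $\V$-categories coreflects initiality under Kantorovich liftings of finitary $\SET$-functors to $\Cats{\V}_\sym$. So the whole argument is a single instantiation: set $\calI$ to be the class of all finite symmetric $\V$-categories, observe that $\calI$ coreflects initiality under $\ftG$, and transport this along $j$ via Proposition~\ref{prop:coref-ini-preserve}.

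In more detail, I would proceed as follows. First, since by hypothesis $\ftG$ is the Kantorovich lifting of a finitary $\SET$-functor, the cited Example applies verbatim and yields that the class $\calI$ of all finite symmetric $\V$-categories coreflects initiality under $\ftG$; concretely, for every $\V$-category $X$ there is a cocone $(i\colon X_i\to X)_{i\in\calI_X}$ of initial $\V$-functors with $X_i$ finite such that $(\ftG i\colon \ftG X_i\to \ftG X)_{i\in\calI_X}$ coreflects initiality. Second, the natural transformation $j\colon \ftG\to\ftF$ is assumed to have components that are initial and L-dense, which is exactly the hypothesis of Proposition~\ref{prop:coref-ini-preserve}. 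Applying that proposition to this $\calI$ and this $j$ gives precisely the assertion that $\calI$ — i.e.\ the class of all finite symmetric $\V$-categories — coreflects initiality under $\ftF$, which is the claim. I would also add a one-line remark that the phrase ``Kantorovich lifting of a finitary set functor'' in the statement is the same notion as ``Kantorovich lifting of a finitary $\SET$-functor to $\Cats{\V}_\sym$'' used in the Example, so no translation is needed.

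I do not expect any real obstacle here: the corollary is a routine composition of two facts stated just above it. The substantive work has already been done elsewhere — in Proposition~\ref{prop:coref-ini-preserve}, whose proof presumably has to combine the $\ftG$-cocone that coreflects initiality with the initial, L-dense components of $j$ (using that composites of L-dense $\V$-functors are L-dense, as noted in the preliminaries, and that initiality is stable under the relevant pre- and post-composition), and in the verification underlying the Example that finitary Kantorovich liftings are compatible with finite subcategories. Once those are granted, the present statement follows by the single instantiation described above, and I would write the proof in essentially that one sentence.
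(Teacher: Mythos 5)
Your proposal is correct and matches the paper's (implicit) argument exactly: the corollary is obtained by instantiating Proposition~\ref{prop:coref-ini-preserve} with $\calI$ the class of finite symmetric $\V$-categories, whose coreflection of initiality under the Kantorovich lifting $\ftG$ is supplied by the preceding Example. The paper offers no separate proof precisely because it is this one-line combination.
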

\noindent (We note that this observation generalizes the use of
finitarily separable lax extensions~\cite{WS20}.)

\noindent We are now ready to present our main result:

\begin{theorem}[Quantitative Coalgebraic Hennessy-Milner theorem]
  \label{thm:main}
  Let \(\ftF \colon \Cats{\V}_\sym \to \Cats{\V}_\sym\) be
  \(\Lambda\)-Kantorovich, let \(\cC\) be a \(\V_s\)-closure operator,
  and let \(\calI\) be a class of symmetric \(\V\)-categories such
  that~\(\calI\) coreflects initiality under~\(\ftF\) and \(\cC\)
  characterizes initiality on \(\calI\). If every predicate lifting in
  \(\Lambda\) is \(\cC\)-continuous, then \(\calL(\Lambda)\) is
  expressive.
\end{theorem}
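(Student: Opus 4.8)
The plan is to reduce everything, for an arbitrary \(\ftF\)-coalgebra \((X,a,\alpha)\), to the hypothesis of Lemma~\ref{lem:ld-le-bd}. Write \(Y=(X,ld_\alpha^\Lambda)\); we must show that the cone of all \(\V\)-functors \(\lambda(f)\colon\ftF Y\to\V_s\) with \(\lambda\in\Lambda\) and \(f\in\Sema{\calL(\Lambda)}\) is initial, equivalently (by Proposition~\ref{lem:closure-dense}) that the set \(\Lambda(\Sema{\calL(\Lambda)})\) is \(\cFun\)-dense in \(\V_s^{\ftF Y}\). Since \(\calI\) coreflects initiality under \(\ftF\), fix a cocone \((i\colon Y_i\to Y)_{i\in\calI_Y}\) of initial \(\V\)-functors with \(Y_i\in\calI\) such that the cocone \((\ftF i\colon\ftF Y_i\to\ftF Y)_{i\in\calI_Y}\) coreflects initiality. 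By naturality of predicate liftings, \(\lambda(f)\cdot\ftF i=\lambda(f\cdot i)\), hence \(\Lambda(\Sema{\calL(\Lambda)})\cdot\ftF i=\Lambda(\Sema{\calL(\Lambda)}\cdot i)\); so by the coreflection property it suffices to prove that \(\Lambda(\Sema{\calL(\Lambda)}\cdot i)\) is initial in \(\V_s^{\ftF Y_i}\) for each \(i\in\calI_Y\).

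Fix \(i\). First, \(\Sema{\calL(\Lambda)}\cdot i\) is again a propositional algebra, since precomposition with the \(\V\)-functor \(i\) commutes with the pointwise propositional operations and fixes the constant \(\top\). Moreover it is initial in \(\V_s^{Y_i}\): the structure of \(Y_i\) is the restriction along \(i\) of \(ld_\alpha^\Lambda\), which by definition is the initial structure w.r.t.\ \(\Sema{\calL(\Lambda)}\), so a one-line computation with infima shows that it coincides with the initial structure w.r.t.\ \(\Sema{\calL(\Lambda)}\cdot i\) (i.e.\ composing an initial cone with an initial \(\V\)-functor yields an initial cone in the topological category \(\Cats{\V}_\sym\)). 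Since \(Y_i\in\calI\) and \(\cC\) characterizes initiality on \(\calI\), the propositional algebra \(\Sema{\calL(\Lambda)}\cdot i\), being \(\cFun\)-dense, is in fact \(\cC\)-dense, i.e.\ \(\cC(\Sema{\calL(\Lambda)}\cdot i)=\V_s^{Y_i}\).

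Now I invoke \(\cC\)-continuity of the modalities. For \(\lambda\in\Lambda\), continuity of \(\lambda_{Y_i}\) w.r.t.\ \(\cC_{Y_i}\) and \(\cC_{\ftF Y_i}\) gives \(\lambda_{Y_i}(\cC(\Sema{\calL(\Lambda)}\cdot i))\subseteq\cC(\lambda_{Y_i}(\Sema{\calL(\Lambda)}\cdot i))\); as the argument on the left is all of \(\V_s^{Y_i}\), we obtain \(\lambda_{Y_i}(\V_s^{Y_i})\subseteq\cC(\lambda_{Y_i}(\Sema{\calL(\Lambda)}\cdot i))\), and taking the union over \(\lambda\in\Lambda\) (using monotonicity of \(\cC\)) yields \(\Lambda(\V_s^{Y_i})\subseteq\cC(\Lambda(\Sema{\calL(\Lambda)}\cdot i))\). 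On the other hand, since \(\ftF\) is \(\Lambda\)-Kantorovich, the cone \(\Lambda(\V_s^{Y_i})\) is initial, hence \(\cFun\)-dense, in \(\V_s^{\ftF Y_i}\). As \(\cFun\) is the greatest \(\V_s\)-closure operator, \(\cC(B)\subseteq\cFun(B)\) for all \(B\); applying \(\cFun\) to the inclusion above and using idempotence, \(\V_s^{\ftF Y_i}=\cFun(\Lambda(\V_s^{Y_i}))\subseteq\cFun(\Lambda(\Sema{\calL(\Lambda)}\cdot i))\). Thus \(\Lambda(\Sema{\calL(\Lambda)}\cdot i)\) is \(\cFun\)-dense, i.e.\ initial, in \(\V_s^{\ftF Y_i}\), as required; combining over \(i\) and unwinding through naturality and the coreflection property, \(\Lambda(\Sema{\calL(\Lambda)})\) is initial in \(\V_s^{\ftF Y}\), and Lemma~\ref{lem:ld-le-bd} then gives expressivity.

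The delicate point is the transition between the two density notions. The \(\Lambda\)-Kantorovich assumption only supplies ordinary initiality (\(=\cFun\)-density) of the cone \(\Lambda(\V_s^{Y_i})\), whereas \(\cC\)-continuity is naturally phrased in terms of \(\cC\)-closure; and one cannot appeal to "\(\cC\) characterizes initiality" at the level of \(\ftF Y_i\), since neither is \(\ftF Y_i\) assumed to lie in \(\calI\) nor is \(\Lambda(\V_s^{Y_i})\) a propositional algebra. The resolution is precisely to exploit that \(\cFun\) is the \emph{greatest} \(\V_s\)-closure operator, so that \(\cC\)-density upgrades to \(\cFun\)-density, which is initiality. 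Everything else reduces to elementary manipulations with initial cones in \(\Cats{\V}_\sym\) and with the pointwise propositional structure.
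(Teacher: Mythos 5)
Your proof is correct and takes essentially the same route as the paper's: reduce to Lemma~\ref{lem:ld-le-bd}, pass along the coreflecting cocone \((\ftF i)\) via naturality of the liftings, use characterization of initiality on \(\calI\) to get \(\cC\)-density of the restricted propositional algebra, push this through the \(\cC\)-continuous modalities, and recover initiality from the \(\Lambda\)-Kantorovich property together with the fact that \(\cFun\) is the greatest \(\V_s\)-closure operator. The only differences are presentational: the paper isolates the middle portion as a separate lemma and cites Proposition~\ref{lem:closure-dense}(\ref{p:42}) where you inline its proof.
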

\noindent We note that there is a balance to be struck in the choice
of~$\calI$: The larger~$\calI$ is, the more functors one finds under
which~$\calI$ coreflects initiality, but the harder it is to establish
characterization of initiality in~$\calI$. We tackle the latter issue
next.

\section{Stone-Weierstra\ss{}-Type Theorems}
\label{sec:SW-th}
We now develop some usage scenarios for Theorem~\ref{thm:main}. As a warm up,
we have

\begin{theorem}\label{thm:sw-finite}
	Let \(\V\) be a finite quantale. Then
	the \(\V_s\)-closure operator \(\cId\) characterizes initiality on finite symmetric   \(\V\)-categories.
\end{theorem}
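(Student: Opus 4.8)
Both implications of the equivalence in the definition of ``characterizes initiality'' have to be established, for a fixed finite symmetric $\V$-category $X$ and a fixed propositional algebra $A\subseteq\V_s^X$. By Proposition~\ref{lem:closure-dense}, $\cFun$-density of $A$ coincides with initiality of $A$; and, again by Proposition~\ref{lem:closure-dense} (applied with $\cC=\cId$), $\cId$-density of $A$ implies initiality of $A$. Since $\cId(A)=A$, the statement ``$A$ is $\cId$-dense'' unfolds to ``$A=\V_s^X$''. Thus the only substantive task is the converse: assuming $A$ is initial, prove $A=\V_s^X$. (Incidentally, Proposition~\ref{lem:closure-dense} applied to $\cId$ and to the propositional algebra $\V_s^X$ itself records that $\V_s^X$ is always initial over $X$, which is why the first implication carries no content.)

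So assume $A$ is initial and fix an arbitrary $\V$-functor $g\colon X\to\V_s$; we show $g\in A$. Because $\V$ is finite, for every pair $x,y\in X$ the infimum in the initiality identity
\begin{displaymath}
  a(x,y)=\bigwedge_{f\in A}\hom_s(f(x),f(y))
\end{displaymath}
is a meet of finitely many of the values $\hom_s(f(x),f(y))$, $f\in A$, hence witnessed by a finite subfamily of $A$. Combining these finitely many witnesses via the propositional operations available in $A$ --- the lattice operations $\wedge,\vee$ together with the unary rescalings $u\otimes(-)$ and $\hom_s(u,-)$, indexed by the finitely many $u\in\V$ --- produces, for every pair $x,y$, a function $p_{x,y}\in A$ with $p_{x,y}(x)=g(x)$ and $p_{x,y}(y)=g(y)$; that such a two-point interpolant can be chosen inside $\V_s^X$ at all is exactly the non-expansiveness $a(x,y)\le\hom_s(g(x),g(y))$ of $g$. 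This interpolation step is the technical heart of the argument and the only place where finiteness of $\V$ is genuinely needed: one has to convert the distance information delivered by initiality into prescribed \emph{values} at two points using nothing but lattice operations and the unary scalings (in particular, neither negation nor multiplication is available in general), and finiteness of $\V$ is precisely what guarantees that the joins and meets over $\V$ occurring in this reconstruction are attained, and hence realisable inside $A$.

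Granting the interpolants $p_{x,y}$, the proof concludes by the lattice form of the Stone--Weierstra{\ss} argument, which uses only that $X$ is finite. For each $x\in X$ put $g_x:=\bigvee_{y\in X}p_{x,y}$; as $A$ is closed under finite joins, $g_x\in A$, and $g_x(x)=g(x)$ while $g_x(z)\ge p_{x,z}(z)=g(z)$ for every $z\in X$. Then $\bigwedge_{x\in X}g_x\in A$, it dominates $g$ pointwise, and at each $z\in X$ it takes the value $g_z(z)=g(z)$; hence $g=\bigwedge_{x\in X}g_x\in A$. Since $g$ was an arbitrary element of $\V_s^X$, we conclude $A=\V_s^X$, i.e.\ $A$ is $\cId$-dense, which is the remaining implication. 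The sole genuine obstacle in this plan is the two-point interpolation lemma of the previous paragraph.
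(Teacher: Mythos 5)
Your reduction of the theorem to ``if $A$ is initial then $A=\V_s^X$'' is correct, and the final lattice combination $g=\bigwedge_{x}\bigvee_{y}p_{x,y}$ would indeed close the argument \emph{if} the interpolants existed. But the two-point interpolation step --- producing, for each pair $x,y$, some $p_{x,y}\in A$ with $p_{x,y}(x)=g(x)$ and $p_{x,y}(y)=g(y)$ exactly, using only $\top,\wedge,\vee,u\otimes-,\hom_s(u,-)$ --- is asserted rather than proved, and you yourself flag it as ``the technical heart'' and ``the sole genuine obstacle''. Since all the mathematical content of the theorem sits precisely there, this is a genuine gap, not a routine omission. Worse, exact two-point interpolation is not obviously attainable: the natural candidate $p_{x,y}=(g(x)\otimes\hom_s(\psi(x),\psi))\vee(g(y)\otimes\hom_s(\psi(y),\psi))$, with $\psi\in A$ a witness of $a(x,y)$, only satisfies $p_{x,y}(x)\ge g(x)$, because $g(x)\otimes\hom_s(\psi(x),\psi(x))$ can strictly exceed $g(x)$ in a non-integral quantale (e.g.\ in the finite free quantale $\pow(\ZZ/2)$ one has $\hom(\bot,\bot)=\top$ and $\{1\}\otimes\top\not\le\{1\}$). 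So the lemma you defer is not a formality even within the hypotheses of the theorem.

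The paper circumvents this by proving a weaker, one-sided interpolation that still suffices. Finiteness of $\V$ guarantees that $k$ satisfies the way-above hypothesis of Lemma~\ref{lem:sw-main}; combined with codirectedness of $\{\hom_s(\psi(x),\psi)\mid\psi\in A\}$ (Lemma~\ref{lem:form-codir}), this yields for each pair $(x,y)$ a single $\psi_{x,y}\in A$ with $\hom_s(\psi_{x,y}(x),\psi_{x,y}(y))\le a(x,y)$, and one then forms
\begin{displaymath}
  \psi=\bigvee_{x\in X}\;\bigwedge_{y\in X}\; g(x)\otimes\hom_s\bigl(\psi_{x,y}(x),\psi_{x,y}\bigr)\in A.
\end{displaymath}
Each inner meet $h_x$ satisfies $h_x\le g$ pointwise and $h_x(x)\ge g(x)$ --- interpolation from below at one point only --- and these inequalities combine to $\psi=g$ by antisymmetry of $\V$. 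If you replace your exact two-point interpolation by this one-sided construction (or actually prove an interpolation lemma of the strength you invoke), your argument goes through; as written, the decisive step is missing.
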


(Note that for $\V=\two$, this is essentially  the well-known functional completeness of Boolean logic.)
Hence, by instantiating Theorem~\ref{thm:main}, we obtain
\begin{corollary}
  \label{cor:hm-finite}
  Let \(\V\) be a finite quantale,  and let \(\ftF \colon \Cats{\V}_\sym \to \Cats{\V}_\sym\) be a \(\Lambda\)-Kantorovich functor that admits as an L-dense subfunctor a lifting of a finitary \(\SET\)-functor.
  Then the coalgebraic logic \(\calL(\Lambda)\) is expressive.
\end{corollary}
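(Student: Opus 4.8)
The plan is to read the statement as the instantiation of the Quantitative Coalgebraic Hennessy-Milner Theorem (Theorem~\ref{thm:main}) obtained by taking the trivial closure operator $\cC = \cId$ and letting $\calI$ be the class of all finite symmetric $\V$-categories; all of the hypotheses of that theorem will then follow from results already available, so the argument is essentially one of assembly.

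First I would dispatch the two ``easy'' hypotheses. Since $\V$ is finite, Theorem~\ref{thm:sw-finite} gives directly that $\cId$ characterizes initiality on the class $\calI$ of finite symmetric $\V$-categories; and $\cId$ is a $\V_s$-closure operator (Example~\ref{p:27}). Moreover, by Example~\ref{p:35}(1) every predicate lifting is $\cId$-continuous, so the $\cC$-continuity requirement on $\Lambda$ in Theorem~\ref{thm:main} is automatic -- in particular, no monotonicity or continuity condition on the modalities needs to be imposed or checked here. Finally, $\ftF$ is $\Lambda$-Kantorovich by assumption, which is the remaining standing hypothesis of Theorem~\ref{thm:main}.

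It remains to verify that $\calI$ coreflects initiality under $\ftF$. By hypothesis there is an L-dense subfunctor $j \colon \ftG \to \ftF$ with $\ftG$ a (Kantorovich) lifting of a finitary $\SET$-functor; being a subfunctor inclusion, each component of $j$ is initial, and it is L-dense by assumption. This is exactly the premise of the corollary to Proposition~\ref{prop:coref-ini-preserve}: the class of finite symmetric $\V$-categories coreflects initiality under Kantorovich liftings of finitary set functors (such a lifting is finitary, hence preserves the directed colimit along the cocone of finite subspaces, which therefore coreflects initiality by Example~\ref{expl:corefl}(3)), and Proposition~\ref{prop:coref-ini-preserve} transports this property along the initial, L-dense $j$. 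Hence $\calI$ coreflects initiality under $\ftF$, all hypotheses of Theorem~\ref{thm:main} are met, and it follows that $\calL(\Lambda)$ is expressive.

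I expect the only point requiring genuine care -- and the one I would write out rather than merely cite -- to be matching the hypothesis ``L-dense subfunctor that is a lifting of a finitary $\SET$-functor'' with the premise of the corollary to Proposition~\ref{prop:coref-ini-preserve}, i.e.\ noting that an initial, L-dense sublifting of a $\Lambda$-Kantorovich functor is again Kantorovich with respect to the restricted predicate liftings $\lambda(-)\cdot j$: composing the initial cone witnessing $\Lambda$-Kantorovich-ness of $\ftF$ on $\ftF X$ with the initial component $j_X$ yields an initial cone on $\ftG X$. Everything else is a direct citation of Theorems~\ref{thm:sw-finite} and~\ref{thm:main} and the development around Proposition~\ref{prop:coref-ini-preserve}.
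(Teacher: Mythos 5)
Your proof matches the paper's: the corollary is presented there as a direct instantiation of Theorem~\ref{thm:main} with $\cC=\cId$ and $\calI$ the class of finite symmetric $\V$-categories, using Theorem~\ref{thm:sw-finite} for characterization of initiality, the automatic $\cId$-continuity of predicate liftings, and the corollary to Proposition~\ref{prop:coref-ini-preserve} for coreflection of initiality under~$\ftF$. Your closing observation --- that the L-dense sublifting $\ftG$ inherits a Kantorovich structure via the restricted liftings $\lambda(-)\cdot j$, so that the corollary to Proposition~\ref{prop:coref-ini-preserve} (whose premise asks for a \emph{Kantorovich} lifting rather than a mere lifting) genuinely applies --- is a detail the paper leaves implicit, and you handle it correctly.
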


Most remarkably, Corollary~\ref{cor:hm-finite} allows us to derive expressivity for
many-valued logics beyond~$\two$, but also it allows us to relate our present formulation of expressivity
to the one standardly adopted in the properly qualitative case (i.e.\ with $\V=\two$)~\cite{Pat04,Sch08}.
To that end, recall that a set of predicate liftings \(\Lambda\) for a functor \(\ftF \colon \SET \to \SET\) is
\df{separating} if for every set \(X\) the cone of all maps \(\lambda(f) \colon \ftF X \to \two\)
with \(\lambda \in \Lambda\) and \(f \colon X \to \two \) is mono. For a
$\SET$-coalgebra~\((X,\alpha)\), let us denote by \(beq_\alpha\) the standard notion of behavioural
equivalence, as explained in preliminaries. Let $\EQ=\Cats{\two}_\sym$ (the category of equivalence relations).

The following result generalizes \cite[Theorem~11]{MV15} (which applies only to functor liftings that arise from lax extensions, in particular requires modalities to be monotone):

\begin{theorem}
  \label{thm:kl-discrete}
  Let \(\ftF^\Lambda \colon \EQ \to \EQ\) be a Kantorovich lifting that preserves discrete equivalence relations.
  Then, for every \(\ftF\)-coalgebra \((X,\alpha)\), \(bd_\alpha^{\ftF^\Lambda} = beq_\alpha\).
\end{theorem}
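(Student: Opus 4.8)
Here $\V=\two$, so $\EQ=\Cats{\two}_\sym$ consists of sets equipped with an equivalence relation, and $bd_\alpha^{\ftF^\Lambda}$ and $beq_\alpha$ are both equivalence relations on~$X$; since the order on $\two$-categories is pointwise, it suffices to show that for all $x,y\in X$ we have $bd_\alpha^{\ftF^\Lambda}(x,y)=\top$ iff $beq_\alpha(x,y)=\top$. We repeatedly use that $\two$ is integral, so that every set map out of a \emph{discrete} $\two$-category is a $\V$-functor; in particular, since $\ftF^\Lambda$ is a lifting of~$\ftF$, the set coalgebra $(X,\alpha)$ induces the $\ftF^\Lambda$-coalgebra $(X,\Delta_X,\alpha)$ with $\Delta_X$ discrete, which is the $\ftF^\Lambda$-coalgebra understood in the statement.

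For the implication $beq_\alpha(x,y)=\top\Rightarrow bd_\alpha^{\ftF^\Lambda}(x,y)=\top$, pick a $\SET$-coalgebra morphism $h\colon(X,\alpha)\to(Z,\gamma)$ with $h(x)=h(y)$. Equip $Z$ with the discrete equivalence $\Delta_Z$. Then $\gamma$ is a $\V$-functor into $\ftF^\Lambda(Z,\Delta_Z)$, so $(Z,\Delta_Z,\gamma)$ is an $\ftF^\Lambda$-coalgebra, and $h$ is a $\V$-functor $(X,\Delta_X)\to(Z,\Delta_Z)$; as the coalgebra square for $h$ commutes on underlying sets and the forgetful functor is faithful, $h$ is a morphism $(X,\Delta_X,\alpha)\to(Z,\Delta_Z,\gamma)$ in $\Coalg{\ftF^\Lambda}$. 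Hence $bd_\alpha^{\ftF^\Lambda}(x,y)\ge\Delta_Z(h(x),h(y))=\top$. This direction uses only that $\ftF^\Lambda$ is a lifting.

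For the converse implication, assume $bd_\alpha^{\ftF^\Lambda}(x,y)=\top$. A supremum in $\two$ equals $\top$ iff one of its arguments does, so~\eqref{eq:bd-form} yields an $\ftF^\Lambda$-coalgebra morphism $g\colon(X,\Delta_X,\alpha)\to(Y,b,\beta)$ with $b(g(x),g(y))=\top$; that is, $g(x)$ and $g(y)$ lie in one class of the natural equivalence~$\approx$ of $(Y,b)$. Let $q\colon(Y,b)\to(Y/{\approx},\Delta)$ be the quotient $\V$-functor onto a discrete $\two$-category. Since $\ftF^\Lambda$ preserves discrete equivalences, $\ftF^\Lambda q\circ\beta$ is a $\V$-functor $(Y,b)\to\ftF^\Lambda(Y/{\approx},\Delta)=(\ftF(Y/{\approx}),\Delta)$ into a discrete $\two$-category; it is therefore constant on $\approx$-classes and factors as $\tilde\beta\circ q$ for a unique set map $\tilde\beta\colon Y/{\approx}\to\ftF(Y/{\approx})$, and on underlying sets this makes $q$ a $\SET$-coalgebra morphism $(\ftII{Y},\ftII{\beta})\to(Y/{\approx},\tilde\beta)$. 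Applying the forgetful functor to the coalgebra square for $g$ shows $\ftII{g}\colon(X,\alpha)\to(\ftII{Y},\ftII{\beta})$ is a $\SET$-coalgebra morphism, so $q\circ\ftII{g}\colon(X,\alpha)\to(Y/{\approx},\tilde\beta)$ is a $\SET$-coalgebra morphism with $(q\circ\ftII{g})(x)=(q\circ\ftII{g})(y)$; thus $beq_\alpha(x,y)=\top$.

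The main obstacle is this quotient step: one must turn the ``witness'' coalgebra $(Y,b,\beta)$ into a genuine $\SET$-coalgebra on which $x$ and $y$ become literally equal, and this is exactly where preservation of discrete equivalences enters (which, by unwinding the Kantorovich lifting on discrete objects, is equivalent to $\Lambda$ being separating). A more streamlined variant, granting the folklore fact that Kantorovich liftings preserve initial morphisms, avoids the arbitrary~$g$: by Remark~\ref{p:23} the coalgebra $(X,bd_\alpha^{\ftF^\Lambda},\alpha)$ is itself an $\ftF^\Lambda$-coalgebra, which one then quotients directly by $bd_\alpha^{\ftF^\Lambda}$. The remaining verifications are routine, resting throughout on the single fact that, $\two$ being integral, maps out of discrete $\two$-categories are automatically $\V$-functors.
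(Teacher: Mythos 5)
Your proof is correct, and its core mechanism is the same as the paper's: use preservation of discrete equivalence relations to transport the coalgebra structure along the quotient by the induced equivalence, thereby manufacturing a $\SET$-coalgebra morphism that literally identifies $x$ and $y$. The packaging differs in a way worth noting. The paper factors the hard direction through Lemma~\ref{p:14}, which considers only coalgebra structures $\alpha\colon(X,\sim)\to\ftF^\Lambda(X,\sim)$ carried by $X$ itself and uses a \emph{section} $s$ of the quotient map to redefine the coalgebra on the same carrier as $\ftF(s\cdot q)\cdot\alpha$; deducing $bd_\alpha^{\ftF^\Lambda}\le beq_\alpha$ from ``each such $\sim$ satisfies $\sim\,\le beq_\alpha$'' tacitly invokes Proposition~\ref{prop:bd-formula} (hence preservation of initial morphisms by the Kantorovich lifting, via Remark~\ref{p:23}) to know that $bd_\alpha^{\ftF^\Lambda}$ is the join of such identity-carried structures. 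You instead take an arbitrary witness $(Y,b,\beta)$ from~\eqref{eq:bd-form} and quotient $Y$ itself, defining $\tilde\beta$ on $Y/{\approx}$ by the universal property of the quotient; this needs neither a section nor preservation of initial morphisms, so your main line of argument is marginally more self-contained (it is only your alternative ``streamlined variant'' that would reintroduce that folklore fact). Both arguments ultimately rest on the same two observations about $\two$: any map out of a discrete object is a morphism (giving $beq_\alpha\le bd_\alpha^{\ftF^\Lambda}$, where as you correctly note preservation of discreteness is not needed), and a morphism into a discrete object is constant on equivalence classes (giving the converse, where it is).
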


We now can recover the general expressivity result~\cite{Pat04,Sch08} for $\V=\two$ as
a direct consequence of Corollary~\ref{cor:hm-finite} and Theorem~\ref{thm:kl-discrete}.

\begin{theorem}\label{thm:lutz-expr}
Let \(\Lambda\) be a separating set of predicate liftings for a finitary functor \(\ftF \colon {\SET \to \SET}\).
Then, the coalgebraic logic \(\calL(\Lambda)\) is expressive; that is, if two states in an \(\ftF\)-coalgebra are logically indistinguishable, then they are behaviourally equivalent.
\end{theorem}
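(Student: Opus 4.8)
The plan is to instantiate the machinery of the preceding sections at the Boolean quantale \(\V = \two\), where \(\Cats{\two}_\sym = \EQ\), and to let the Kantorovich lifting \(\ftF^\Lambda \colon \EQ \to \EQ\) of the given finitary set functor \(\ftF\) mediate between the purely set-theoretic data (coalgebras, behavioural equivalence, Boolean modal logic) and the enriched notions. First I would observe that a \(\SET\)-coalgebra \((X,\alpha)\) yields an \(\ftF^\Lambda\)-coalgebra \((X,\Delta_X,\alpha)\), where \(\Delta_X\) is the discrete (i.e.\ equality) equivalence relation: every map out of a discrete \(\two\)-category is a \(\two\)-functor, so \(\alpha\) is automatically a morphism \((X,\Delta_X) \to \ftF^\Lambda(X,\Delta_X)\). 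Under this correspondence the logical distance \(ld_\alpha^\Lambda\) is the usual relation of logical indistinguishability, and the elements of \(\Lambda\), viewed as predicate liftings for \(\ftF^\Lambda\), induce the usual Boolean coalgebraic modal logic (negation, for instance, arises as the propositional operator \(\hom_s(\bot,-)\)).

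The pivotal point is that \emph{separation of \(\Lambda\) is precisely preservation of discrete equivalence relations by \(\ftF^\Lambda\)}. By the definition of the Kantorovich lifting, \(\ftF^\Lambda(X,\Delta_X)\) carries the initial structure on \(\ftF X\) with respect to the cone of all \(\lambda(f)\colon \ftF X \to \two_s\) with \(\lambda \in \Lambda\) and \(f\colon X \to \two\) a \(\two\)-functor from \((X,\Delta_X)\) --- that is, with \(f\) an arbitrary map \(X \to \two\). Separation says exactly that this cone is mono, and a mono cone valued in \(\two_s\) induces the discrete structure on \(\ftF X\). Hence \(\ftF^\Lambda\) preserves discrete equivalence relations, and Theorem~\ref{thm:kl-discrete} gives \(bd_\alpha^{\ftF^\Lambda} = beq_\alpha\) for every \(\ftF\)-coalgebra \((X,\alpha)\).

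It then remains to prove \(ld_\alpha^\Lambda = bd_\alpha^{\ftF^\Lambda}\). Adequacy (Theorem~\ref{p:25}) provides \(bd_\alpha^{\ftF^\Lambda} \le ld_\alpha^\Lambda\). For the reverse inequality I would apply Corollary~\ref{cor:hm-finite}: the quantale \(\two\) is finite; \(\ftF^\Lambda\) is \(\Lambda\)-Kantorovich because every Kantorovich lifting is; and \(\ftF^\Lambda\) is itself a lifting of the finitary set functor \(\ftF\), so (via the identity natural transformation, whose components are trivially initial and L-dense) it admits an L-dense subfunctor that is a lifting of a finitary set functor. Thus \(\calL(\Lambda)\) is expressive over \(\ftF^\Lambda\)-coalgebras, i.e.\ \(ld_\alpha^\Lambda \le bd_\alpha^{\ftF^\Lambda}\). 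Combining the three (in)equalities yields \(ld_\alpha^\Lambda = bd_\alpha^{\ftF^\Lambda} = beq_\alpha\), which is exactly the asserted expressivity: logically indistinguishable states are behaviourally equivalent.

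Since the hard analytic work is already done in Corollary~\ref{cor:hm-finite} and Theorem~\ref{thm:kl-discrete}, the proof is essentially assembly; the only genuinely new content is the equivalence between separation and preservation of discrete equivalence relations, and the two translations (\(\SET\)-coalgebra versus \(\EQ\)-coalgebra, Boolean modal logic versus \(\calL(\Lambda)\) at \(\V=\two\)), both of which are bookkeeping. The one place where I would be slightly careful is checking that the propositional fragment of \(\calL(\Lambda)\) at \(\V = \two\) really recovers all Boolean connectives, so that \(ld_\alpha^\Lambda\) matches the standard notion of logical indistinguishability and the final sentence of the theorem reads as the classical statement.
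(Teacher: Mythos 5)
Your proposal is correct and follows essentially the same route as the paper, which derives the theorem exactly as a combination of Corollary~\ref{cor:hm-finite} (applied to the Kantorovich lifting \(\ftF^\Lambda\) over the finite quantale \(\two\)) and Theorem~\ref{thm:kl-discrete}, with the bridge that separation of \(\Lambda\) implies preservation of discrete equivalence relations proved as an auxiliary proposition just as you describe.
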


Next, we obtain a characterization of $\cL$-density. Recall that an
element $x$ of an ordered set is \emph{way above} an element~$y$ if
whenever $y\ge\bigwedge A$ for a codirected set~$A$, then $x\ge a$ for some $a\in A$.

\begin{theorem}\label{thm:sw-L}
Suppose that $\V$ satisfies the condition
\begin{align}\label{eq:k-decomp}
k=\bigvee\{u \otimes u \mid u \in \V \text{ and for all \(v\in\V\), }\hom(u,v)\text{ is way above }v\}.
\end{align}
Then \(\cL\) characterizes initiality on finite symmetric \(\V\)-categories.
\end{theorem}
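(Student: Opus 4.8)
The plan is to prove the two implications separately; only the reverse one uses finiteness and condition~\eqref{eq:k-decomp}. The forward implication ``\(\cL\)-dense \(\Rightarrow\) \(\cFun\)-dense'' is immediate: \(\cFun\) is the greatest \(\V_s\)-closure operator, so \(\cL(A)\subseteq\cFun(A)\), and hence \(\cL\)-density of a propositional algebra \(A\) forces its \(\cFun\)-density (equivalently, by Proposition~\ref{lem:closure-dense}, any \(\cC\)-dense set is initial and for \(\cFun\) initiality conversely yields \(\cFun\)-density).

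For the converse, fix a finite symmetric \(\V\)-category \((X,a)\) and a propositional algebra \(A\subseteq\V_s^X\) which, being \(\cFun\)-dense, is initial: \(a(x,y)=\bigwedge_{g\in A}\hom_s(g(x),g(y))\) for all \(x,y\in X\). We must show that every \(\V\)-functor \(h\colon(X,a)\to\V_s\) lies in the L-closure of \(A\) in \(\V_s^X\), which by the explicit L-closure formula means \(k\le\bigvee_{g\in A}[h,g]\otimes[g,h]\), with \([h,g]=\bigwedge_{x\in X}\hom_s(h(x),g(x))\). By~\eqref{eq:k-decomp} we have \(k=\bigvee\{u\otimes u\mid u\in U\}\) for \(U=\{u\mid\hom(u,v)\text{ is way above }v\text{ for all }v\in\V\}\), and since \(u\le[h,g]\) entails \(u\otimes u\le[h,g]\otimes[g,h]\), it suffices to produce, for each fixed \(u\in U\), a single \(g\in A\) with \(u\le[h,g]\) — equivalently \(u\otimes h(x)\le g(x)\le\hom(u,h(x))\) for all \(x\in X\), abbreviated ``\(g\) is \(u\)-close to \(h\)''. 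This \(g\) is built by the standard Stone-Weierstra{\ss} lattice argument from a \emph{two-point lemma}: for all \(x,y\in X\) (possibly equal) there is \(g_{xy}\in A\) that is \(u\)-close to \(h\) at both \(x\) and \(y\). Granting the lemma, set \(g:=\bigvee_{x\in X}\bigwedge_{y\in X}g_{xy}\), which lies in \(A\) since \(X\) is finite and \(A\) is closed under finite meets and joins; then \(g\) is \(u\)-close to \(h\) everywhere, since for each \(z\) the lower bound \(g(z)\ge\bigwedge_{y}g_{zy}(z)\ge u\otimes h(z)\) comes from the \(x=z\) term of the join, and the upper bound \(g(z)\le\bigvee_{x}g_{xz}(z)\le\hom(u,h(z))\) from the \(y=z\) term of each meet.

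It remains to prove the two-point lemma, and this is where the real work lies. One first extracts a good separator. Since \(A\) is initial, \(\bigwedge_{g\in A}\hom_s(g(x),g(y))=a(x,y)\le\hom_s(h(x),h(y))\), the last inequality because \(h\) is a \(\V\)-functor. The set of finite \(\V\)-meets of \(\{\hom_s(g(x),g(y))\mid g\in A\}\) is codirected with the same infimum \(a(x,y)\); moreover every such finite meet \(\bigwedge_i\hom_s(g_i(x),g_i(y))\) dominates \(\hom_s(f_0(x),f_0(y))\) for \(f_0:=\bigwedge_i\hom_s(g_i(x),g_i)\in A\) (read \(g_i(x)\) as a constant; each summand takes the value \(\hom_s(g_i(x),g_i(y))\) at \(y\) and lies above \(k\) at \(x\)). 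Applying the way-above hypothesis to this codirected infimum then yields \(f_0\in A\) with \(\hom_s(f_0(x),f_0(y))\le\hom(u,a(x,y))\le\hom(u,\hom_s(h(x),h(y)))\), i.e.\ \(f_0\) separates \(x\) and \(y\) ``up to \(u\)'', at least as sharply as \(h\) does. One then transforms \(f_0\) into \(g_{xy}\) by propositional operations only: after a case distinction on the \(\V\)-order of \(h(x)\) and \(h(y)\), the functions \(\hom_s(f_0(x),f_0),\hom_s(f_0(y),f_0)\in A\) vanish (to within \(k\)) at one of the two points and are ``large'' at the other, and tensoring with the constants \(h(x)\otimes\top,h(y)\otimes\top\in A\) followed by capping with \(\wedge,\vee\) shifts these values to within \(u\) of \(h(x)\) at \(x\) and within \(u\) of \(h(y)\) at \(y\) — crucially, one only ever needs to \emph{compress} the separation \(f_0\) provides, never to enlarge it. The main obstacle I anticipate is carrying this last transformation out uniformly for every quantale satisfying~\eqref{eq:k-decomp}: such a quantale need be neither a chain (so \(h(x),h(y)\) may be incomparable and one must combine several candidate functions) nor involutive (so no ``truncated subtraction'' operation is available and every step must be phrased through \(\otimes,\hom_s,\wedge,\vee\) and the adjunction \(u\otimes-\dashv\hom(u,-)\)), and it is exactly the way-above content of~\eqref{eq:k-decomp} that guarantees these order-theoretic manipulations do not destroy \(u\)-closeness. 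The degenerate case \(x=y\) is a one-point approximation proved the same way, with the separator step omitted.
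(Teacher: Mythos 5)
Your overall architecture matches the paper's: the forward direction via maximality of \(\cFun\), the reduction via \eqref{eq:k-decomp} to finding, for each witness \(u\), a single \(g\in A\) with \(u\le[h,g]_s\), the \(\bigvee_{x}\bigwedge_{y}\) lattice assembly over the finite \(X\), and the extraction of a separator \(f_0=\psi_{u,x,y}\in A\) with \(u\otimes\hom_s(f_0(x),f_0(y))\le a(x,y)\) by applying the way-above hypothesis to the codirected infimum \(a(x,y)=\bigwedge_{\psi\in A}\hom_s(\psi(x),\psi(y))\) (your codirectedness argument via \(f_0(x)\ge k\) is exactly the paper's Lemma~\ref{lem:form-codir}). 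All of that is sound.

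The gap is the step you yourself flag: you never actually construct the two-point approximant \(g_{xy}\), and the route you sketch --- a case distinction on the \(\V\)-order of \(h(x)\) and \(h(y)\), with Urysohn-style shifting by constants and capping --- is the real-valued argument that, as you note, does not transfer to quantales that are neither chains nor involutive. The paper closes this gap with a single explicit formula that needs no case distinction: take
\(\phi_{x,y}=h(x)\otimes\hom_s(\psi_{u,x,y}(x),\psi_{u,x,y})\in A\).
This is an \emph{asymmetric} approximant: at \(x\) it is bounded below, \(\phi_{x,y}(x)\ge h(x)\otimes k=h(x)\), because \(\hom_s(\psi(x),\psi(x))\ge k\); and at \(y\) it is bounded above after tensoring with \(u\), since \(u\otimes\phi_{x,y}(y)\le h(x)\otimes a(x,y)\le h(x)\otimes\hom(h(x),h(y))\le h(y)\) by the separator inequality and functoriality of \(h\). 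These one-sided bounds (lower at \(x\) only, upper at \(y\) only) are all the \(\bigvee_x\bigwedge_y\) argument needs --- your two-point lemma demands two-sided \(u\)-closeness at both points, which is stronger than necessary and is precisely what forces you into the untransferable case analysis. With \(\psi_u=\bigvee_x\bigwedge_y\phi_{x,y}\) one gets \(h\le\psi_u\le\hom(u,h)\), hence \(k\le[h,\psi_u]\) and \(u\le[\psi_u,h]\); combined with \(u\le k\) (which follows since \(\hom(u,k)\) is way above \(k\)) this yields \(u\otimes u\le[\psi_u,h]\otimes[h,\psi_u]\), and \eqref{eq:k-decomp} finishes the proof. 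As it stands, your proof is incomplete at its central step.
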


      We thus obtain the following generalization at quantalic level of the previous
      coalgebraic Hennessy-Milner theorem for finitarily separable \([0,1]_\oplus\)-lax
      extensions~\cite[Corollary 8.6]{WS20}:

\begin{corollary}\label{cor:hm-L}
Let \(\V\) be a quantale satisfying~\eqref{eq:k-decomp}, and let \(\ftF \colon \Cats{\V}_\sym \to \Cats{\V}_\sym\) be a \(\Lambda\)-Kantorovich functor that admits as an L-dense subfunctor a lifting of a finitary \(\SET\)-functor.
If every predicate lifting in~\(\Lambda\) is \(\cL\)-continuous, then \(\calL(\Lambda)\) is expressive.
\end{corollary}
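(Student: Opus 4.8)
The plan is to obtain Corollary~\ref{cor:hm-L} as an instance of Theorem~\ref{thm:main}, taking the closure operator to be $\cC=\cL$ and the class $\calI$ to be the class of all finite symmetric $\V$-categories; it then remains only to check the hypotheses of Theorem~\ref{thm:main} for this choice. Two of them are immediate: since $\V$ satisfies~\eqref{eq:k-decomp}, Theorem~\ref{thm:sw-L} says precisely that $\cL$ characterizes initiality on $\calI$; and since $\cL$-continuity of a predicate lifting amounts to L-continuity of its components (Example~\ref{p:35}), the standing assumption that every $\lambda\in\Lambda$ is $\cL$-continuous is exactly what is required. (Recall also that $\cL$ is a $\V_s$-closure operator by Example~\ref{p:27}, and that $\ftF$ is $\Lambda$-Kantorovich by assumption.)

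The only step carrying real content is showing that $\calI$ coreflects initiality under $\ftF$, and here I would use the L-dense subfunctor. Write $j\colon\ftG\to\ftF$ for the inclusion of the given subfunctor, whose components are initial and L-dense, with $\ftG$ a lifting of a finitary $\SET$-functor $\ftG_0$. For an arbitrary $X\in\Cats{\V}_\sym$, consider the directed cocone $(\iota_i\colon X_i\to X)$ of inclusions of finite symmetric sub-$\V$-categories; each $\iota_i$ is initial and each $X_i$ lies in $\calI$. Because $\ftG_0$ is finitary and $\ftII{X}$ is the directed colimit of the $\ftII{X_i}$, the set $\ftII{\ftG X}=\ftG_0\ftII{X}$ is the directed colimit of the sets $\ftG_0\ftII{X_i}=\ftII{\ftG X_i}$; in particular the cocone $(\ftG\iota_i\colon\ftG X_i\to\ftG X)$ is jointly surjective on underlying sets, hence a jointly L-dense directed cocone, so by Example~\ref{expl:corefl} it coreflects initiality. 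Proposition~\ref{prop:coref-ini-preserve} applied to $j$ then transfers this property from $\ftG$ to $\ftF$, so that $(\ftF\iota_i)$ coreflects initiality and $\calI$ coreflects initiality under $\ftF$. (One may instead cite the corollary to Proposition~\ref{prop:coref-ini-preserve} directly.)

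Having verified all hypotheses, Theorem~\ref{thm:main} gives that $\calL(\Lambda)$ is expressive. I do not anticipate a genuine obstacle: the corollary is essentially the assembly of Theorem~\ref{thm:main}, Theorem~\ref{thm:sw-L} and Proposition~\ref{prop:coref-ini-preserve}. The one spot to be careful about is the passage through finite subspaces --- that is where finitarity of the set functor underlying $\ftG$ is used (and also where one must observe that the relevant cocone is jointly surjective, hence L-dense, without needing $\ftF$ itself to preserve the colimit) --- and it is also the reason why we are forced to restrict $\calI$ to finite symmetric $\V$-categories, which is exactly the class on which the Stone-Weierstra\ss{}-type Theorem~\ref{thm:sw-L} delivers characterization of initiality for $\cL$.
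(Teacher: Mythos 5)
Your proposal is correct and matches the paper's intended derivation: Corollary~\ref{cor:hm-L} is obtained by instantiating Theorem~\ref{thm:main} with $\cC=\cL$ and $\calI$ the finite symmetric $\V$-categories, using Theorem~\ref{thm:sw-L} for characterization of initiality and the corollary to Proposition~\ref{prop:coref-ini-preserve} for coreflection of initiality under $\ftF$. Your explicit verification that the cocone of finite sub-$\V$-categories is jointly surjective (hence jointly L-dense and directed, so it coreflects initiality by Example~\ref{expl:corefl}) correctly fills in the step the paper leaves implicit.
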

Specifically, besides allowing more general quantales, we drop the assumptions that the modalities in~$\Lambda$ are monotone
and that~$\ftF$ is a lifting of a \(\SET\)-functor, and we weaken the assumption of non-expansiveness of predicate liftings to L-continuity.

The following fact sometimes allows us to enlarge the
      class on which initiality is characterized:

\begin{proposition}\label{prop:charc-ini}
  Let \(\cC\) be a \(\V_s\)-closure operator that characterizes initiality on \(\calI\), and let \(\calJ\) be a class of symmetric \(\V\)-categories.
  If for every \(X \in \calJ\) there is a cocone \((i \colon X_i \to X)_{i \in \calI_X}\) of initial \(\V\)-functors that coreflects \(\cC\)-density and \(X_i \in \calI\) for each \(i \in \calI_X\), then \(\cC\) characterizes initiality on \(\calJ\).
\end{proposition}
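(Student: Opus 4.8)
The plan is to unfold the definition of ``$\cC$ characterizes initiality on $\calJ$'': fix $X \in \calJ$ and a propositional algebra $A \subseteq \V_s^X$, and show that $A$ is $\cC$-dense iff $A$ is $\cFun$-dense. One implication is purely formal and uses none of the hypotheses: by Proposition~\ref{lem:closure-dense}\eqref{p:44}, $\cC$-density of $A$ implies that $A$ is initial, and for $\cC = \cFun$ initiality is \emph{equivalent} to $\cFun$-density; hence $\cC$-dense $\Rightarrow$ initial $\Rightarrow$ $\cFun$-dense. (Equivalently, $\cC \le \cFun$ since $\cFun$ is the greatest $\V_s$-closure operator.) So the content is entirely in the converse.

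For the converse, suppose $A$ is $\cFun$-dense, hence initial (again by Proposition~\ref{lem:closure-dense}\eqref{p:44}). Take the cocone $(i \colon X_i \to X)_{i \in \calI_X}$ of initial $\V$-functors provided by the hypothesis, with each $X_i \in \calI$ and the cocone coreflecting $\cC$-density. By the definition of coreflection of $\cC$-density, it suffices to prove that $A \cdot i$ is $\cC$-dense in $\V_s^{X_i}$ for every $i \in \calI_X$; coreflection then yields $\cC$-density of $A$, as required.

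So fix $i \in \calI_X$ and consider $A \cdot i \subseteq \V_s^{X_i}$. I would establish two facts. First, $A \cdot i$ is a propositional algebra on $X_i$: precomposition with the $\V$-functor $i$ sends $\V$-functors to $\V$-functors, carries the constant-$\top$ map on $X$ to the constant-$\top$ map on $X_i$, and commutes with $\wedge$, $\vee$, $\hom_s(u,-)$ and $u \otimes -$ because these operations are computed pointwise. Second, $A \cdot i$ is initial (equivalently, $\cFun$-dense): the $\V$-category structure on $X_i$ is the initial structure with respect to the single $\V$-functor $i$ into $X$, and $X$ carries the initial structure with respect to the cone $A$; since initial structures compose transitively, $X_i$ carries the initial structure with respect to the cone $A \cdot i$. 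Now $X_i \in \calI$ and $\cC$ characterizes initiality on $\calI$, so $\cFun$-density of the propositional algebra $A \cdot i$ gives $\cC$-density of $A \cdot i$. That is exactly what we needed to feed into coreflection of $\cC$-density, completing the proof.

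The only step requiring genuine care is the transitivity-of-initial-structures argument that promotes initiality of $A$ on $X$ to initiality of $A \cdot i$ on $X_i$; this is precisely where we use that the $i$ are \emph{initial} $\V$-functors rather than arbitrary ones. Everything else is routine verification against the definitions of propositional algebra, of $\cFun$, and of coreflection of $\cC$-density.
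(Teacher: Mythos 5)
Your proposal is correct and follows essentially the same route as the paper: show that $A\cdot i$ is a propositional algebra (the paper isolates this as Lemma~\ref{lem:pp-closed}), that it is initial because $A$ and $i$ are, deduce $\cC$-density of $A\cdot i$ from characterization of initiality on $\calI$, and conclude via coreflection of $\cC$-density. Your explicit treatment of the easy direction ($\cC$-dense $\Rightarrow$ $\cFun$-dense, via $\cFun$ being the greatest $\V_s$-closure operator) and of the composition of initial cones is just a more detailed spelling-out of steps the paper leaves implicit.
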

In particular, we obtain by Example~\ref{expl:corefl}.\ref{item:tb}
that for $\V=[0,1]_{\oplus}$,~$\cL$ characterizes initiality on
totally bounded pseudometric spaces, so we can, in this case, further
relax the assumptions of Corollary~\ref{cor:hm-L} as follows.
\begin{defn}
  A functor~$\ftF\colon\Cats{\V}_\sym\to\Cats{\V}_\sym$, for
  $\V=[0,1]_{\oplus}$, is \emph{totally bounded} if for every
  symmetric $\V$-category~$X$ and every $t\in\ftF X$, there exists a
  totally bounded $X_0\subseteq X$ and $t'\in\ftF X_0$ such that
  $t=\ftF i(t')$ where~$i$ is the inclusion $X_0\to X$.
\end{defn}
\begin{corollary}\label{cor:hm-L-tb}
Let \(\ftF \colon \Cats{[0,1]_\oplus}_\sym \to \Cats{[0,1]_\oplus}_\sym\) be a \(\Lambda\)-Kantorovich functor that admits an L-dense totally bounded subfunctor.
If every predicate lifting in \(\Lambda\) is \(\cL\)-continuous, then~\(\calL(\Lambda)\) is expressive.
\end{corollary}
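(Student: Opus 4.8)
The plan is to obtain the statement as an instance of Theorem~\ref{thm:main}, applied with the closure operator $\cC=\cL$ and with~$\calI$ the class of totally bounded symmetric $[0,1]_\oplus$-categories, i.e.\ totally bounded pseudometric spaces. Two of the five hypotheses of Theorem~\ref{thm:main} are given to us directly: $\ftF$ is $\Lambda$-Kantorovich, and every predicate lifting in~$\Lambda$ is $\cL$-continuous; moreover $\cL$ is a $\V_s$-closure operator (Example~\ref{p:27}). So it remains to verify (i) that $\cL$ characterizes initiality on~$\calI$, and (ii) that~$\calI$ coreflects initiality under~$\ftF$.

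For (i) I would first check that $\V=[0,1]_\oplus$ satisfies condition~\eqref{eq:k-decomp}. Here $k=\top$, tensor is truncated addition, and $\hom(u,v)=\max(v-u,0)$; unwinding the definition, and using that the quantalic order is the reverse of the numeric one, ``$\hom(u,v)$ is way above~$v$'' means that $\hom(u,v)$ is way below~$v$ in $([0,1],\le)$, i.e.\ $\hom(u,v)=0$ or $\hom(u,v)<v$. This holds for all $v$ exactly when $u>0$, and $\bigvee\{u\otimes u\mid u>0\}=\inf_{u>0}\min(2u,1)=0=k$, so \eqref{eq:k-decomp} holds. By Theorem~\ref{thm:sw-L}, $\cL$ then characterizes initiality on finite symmetric $\V$-categories. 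To extend this to~$\calI$ I would invoke Proposition~\ref{prop:charc-ini} with $\calJ=\calI$ and with~$\calI$ there taken to be the finite symmetric $\V$-categories: for a totally bounded pseudometric space~$X$ the cocone of embeddings of its finite subspaces consists of initial $\V$-functors with finite domains and, by the totally-bounded clause of Example~\ref{expl:corefl} (item~\ref{item:tb}, whose pseudometric version is the remark recorded after Proposition~\ref{prop:charc-ini}), coreflects $\cL$-density; hence $\cL$ characterizes initiality on~$\calI$.

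For (ii), let $j\colon\ftG\to\ftF$ exhibit the assumed L-dense totally bounded subfunctor, so that each component $j_X$ is an initial, L-dense $\V$-functor and~$\ftG$ is a totally bounded functor. By Proposition~\ref{prop:coref-ini-preserve} it suffices to show that~$\calI$ coreflects initiality under~$\ftG$. Given a symmetric $\V$-category~$X$, take $\calI_X$ to be the poset of totally bounded subspaces $X_0\subseteq X$, with the subspace inclusions $i\colon X_0\to X$ (which are initial) as connecting maps; this poset is directed, since a finite union of totally bounded subspaces is again totally bounded, and each $X_0$ lies in~$\calI$. Applying~$\ftG$ yields a directed cocone $(\ftG i\colon\ftG X_0\to\ftG X)_{X_0\in\calI_X}$, and the definition of a totally bounded functor says precisely that this cocone is jointly surjective on underlying sets, hence jointly L-dense. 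By the last clause of Example~\ref{expl:corefl}, a jointly L-dense directed cocone coreflects initiality; thus~$\calI$ coreflects initiality under~$\ftG$, and therefore, by Proposition~\ref{prop:coref-ini-preserve}, under~$\ftF$.

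Having established (i) and (ii), Theorem~\ref{thm:main} applies and gives that $\calL(\Lambda)$ is expressive. I expect the only non-routine step to be (ii): the heart of it is seeing that the directed cocone of $\ftG$-images of totally bounded subspaces is jointly L-dense and so coreflects initiality, which is exactly where total boundedness of~$\ftG$ enters, together with the reading of the ``L-dense subfunctor'' hypothesis as supplying a natural transformation with initial, L-dense components (as in Example~\ref{expl:expectation}). The verification of~\eqref{eq:k-decomp} for $[0,1]_\oplus$ in (i) is routine once the ``way above'' relation is translated into the familiar way-below relation on $[0,1]$.
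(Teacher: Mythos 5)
Your proof is correct and follows essentially the same route the paper intends: Theorem~\ref{thm:main} with $\cC=\cL$ and $\calI$ the totally bounded pseudometric spaces, where characterization of initiality comes from Theorem~\ref{thm:sw-L} (after checking~\eqref{eq:k-decomp} for $[0,1]_\oplus$) together with Proposition~\ref{prop:charc-ini} and Example~\ref{expl:corefl}.\ref{item:tb}, and coreflection of initiality under~$\ftF$ comes from the directed, jointly L-dense cocone of $\ftG$-images of totally bounded subspaces plus Proposition~\ref{prop:coref-ini-preserve}. The paper leaves this corollary's proof implicit in the surrounding text; your write-up fills in exactly the intended steps, including the (correct) verification of~\eqref{eq:k-decomp}.
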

\noindent We conclude with some variants employing order-theoretic
closure operators:

\begin{theorem}\label{thm:sw-cInfsup}
  Let \(\V\) be a quantale such that for every \(u \in \V\) the map \(u \otimes -\) preserves codirected infima.
  The closure operator \(\ccInf\) characterizes initiality on  finite symmetric \(\V\)-categories.
\end{theorem}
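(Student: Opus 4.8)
The plan is to prove the two implications of the \textquotedblleft iff\textquotedblright{} separately; the forward one is formal and the backward one contains the substance. For the forward direction, $\ccInf$ is a $\V_s$-closure operator (Example~\ref{p:27}), so by Proposition~\ref{lem:closure-dense} every $\ccInf$-dense set is initial, and an initial propositional algebra is $\cFun$-dense, again by Proposition~\ref{lem:closure-dense}. It therefore remains to show: if a propositional algebra $A\subseteq\V_s^X$ on a finite symmetric $\V$-category $(X,a)$ is initial, then $\ccInf(A)=\V_s^X$. Since $\V_s^X=\Cats{\V}_\sym(X,\V_s)$ is $\ccInf$-closed and contains $A$, it suffices to show that every $\V$-functor $g\colon X\to\V_s$ lies in $\ccInf(A)$.

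I would first record two stability properties of $\ccInf(A)$. \emph{(1)} It contains every small infimum of a family $(f_i)_i$ of elements of $A$: such an infimum equals the infimum of the finite sub-infima $\bigwedge_{i\in F}f_i$ ($F$ finite, nonempty), which form a codirected family and already belong to $A$, since a propositional algebra is closed under $\wedge$. \emph{(2)} It is closed under $u\otimes-$ for every $u\in\V$: setting $B=\{f\mid u\otimes f\in\ccInf(A)\text{ for all }u\in\V\}$, we have $A\subseteq B$ because $A$ is closed under $u\otimes-$, and $B$ is closed under finite suprema and codirected infima because $u\otimes-$, computed pointwise in $\V^X$, preserves suprema (being a left adjoint) and preserves codirected infima (by the hypothesis on $\V$); hence $\ccInf(A)\subseteq B$.

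Next, for each $x_0\in X$ consider the representable $r_{x_0}=a(x_0,-)\colon X\to\V$. Initiality of $A$ gives $a(x_0,y)=\bigwedge_{f\in A}\hom_s(f(x_0),f(y))$, so $r_{x_0}=\bigwedge_{f\in A}\hom_s(f(x_0),f)$; since $A$ is closed under $\hom_s(u,-)$, each $\hom_s(f(x_0),f)$ lies in $A$, whence $r_{x_0}\in\ccInf(A)$ by~(1). Finally, let $g\colon X\to\V_s$ be a $\V$-functor. Composing with the $\V$-functor $\V_s\to\V$ gives $g(x)\otimes a(x,y)\le g(y)$ for all $x,y$, while $g(y)\otimes a(y,y)\ge g(y)$ since $k\le a(y,y)$; together these yield the enriched co-Yoneda identity $g=\bigvee_{x\in X}\bigl(g(x)\otimes r_x\bigr)$. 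As $X$ is finite this supremum is finite; each $g(x)\otimes r_x$ lies in $\ccInf(A)$ by~(2), and $\ccInf(A)$ is closed under finite suprema, so $g\in\ccInf(A)$, as required.

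The main obstacle is property~(2): it is the only point at which the hypothesis that $u\otimes-$ preserves codirected infima is used, and without it the co-Yoneda decomposition cannot be pushed through $\ccInf(A)$. Note also that finiteness of $X$ enters exactly once and essentially, namely to keep the outer supremum of the co-Yoneda formula finite; the possibly infinite infima hidden in the representables are absorbed by the codirected-infimum part of $\ccInf$. This is, in turn, why for finite $\V$ all the infima in question become finite and one may take $\cC=\cId$ instead (Theorem~\ref{thm:sw-finite}).
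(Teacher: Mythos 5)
Your proof is correct and follows essentially the same route as the paper's: both rest on the co-Yoneda decomposition $f=\bigvee_{x\in X} f(x)\otimes a(x,-)$, with $a(x,-)=\bigwedge_{\psi\in A}\hom_s(\psi(x),\psi)$ exhibited via initiality as a codirected infimum of elements of $A$, and with finiteness of $X$ keeping the outer supremum finite. The only difference is organizational: the paper uses the hypothesis on $\otimes$ to push $f(x)\otimes-$ directly through that particular codirected infimum (Lemma~\ref{p:20}), whereas you invoke it once to show that $\ccInf(A)$ is stable under $u\otimes-$; both versions are sound.
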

Notice that the assumption on~$\V$ in the above theorem is satisfied
in particular when~$\V$ is a frame (Example~\ref{p:30}.\ref{item:frame}).

\begin{corollary}\label{cor:hm-frame}
  Let \(\V\) be a quantale such that for every \(u \in \V\) the map \(u \otimes -\) preserves codirected infima, and let \(\ftF \colon \Cats{\V}_\sym \to \Cats{\V}_\sym\) be a \(\Lambda\)-Kantorovich functor that admits as an L-dense subfunctor a lifting of a finitary \(\SET\)-functor.
  If every predicate lifting in \(\Lambda\) preserves codirected infima and finite suprema, then the coalgebraic logic \(\calL(\Lambda)\) is expressive.
\end{corollary}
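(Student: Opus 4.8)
The plan is to instantiate Theorem~\ref{thm:main}, taking for~$\calI$ the class of all finite symmetric $\V$-categories and for~$\cC$ the closure operator~$\ccInf$. Three things then need to be checked: that~$\ccInf$ characterizes initiality on~$\calI$, that~$\calI$ coreflects initiality under~$\ftF$, and that every $\lambda\in\Lambda$ is $\ccInf$-continuous. The first follows at once from Theorem~\ref{thm:sw-cInfsup}, whose hypothesis --- that $u\otimes-$ preserves codirected infima for all $u\in\V$ --- is exactly what we assume here. The third follows from Example~\ref{p:35}(3), by which a predicate lifting is $\ccInf$-continuous precisely when its components preserve codirected infima and finite suprema, which is again one of our assumptions.

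The substantive point is that~$\calI$ coreflects initiality under~$\ftF$. Here I would use the given L-dense subfunctor $\ftG\hookrightarrow\ftF$, where~$\ftG$ is a lifting of a finitary $\SET$-functor; write $j\colon\ftG\to\ftF$ for the associated natural transformation, whose components are initial and L-dense. One first shows that~$\calI$ coreflects initiality under~$\ftG$: for a $\V$-category~$X$ take the cocone of embeddings of its finite symmetric subspaces $i\colon X_i\to X$; these are initial $\V$-functors with $X_i\in\calI$, and since the underlying $\SET$-functor is finitary the images of the maps $\ftG i$ jointly cover $\ftG X$, from which one argues that $(\ftG i\colon \ftG X_i\to\ftG X)$ coreflects initiality. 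Applying Proposition~\ref{prop:coref-ini-preserve} to~$j$ then transports this property from~$\ftG$ to~$\ftF$. (This is the same coreflection argument already behind Corollary~\ref{cor:hm-finite}; only the choice of closure operator differs between the two results.) With all three hypotheses of Theorem~\ref{thm:main} verified, expressivity of $\calL(\Lambda)$ follows.

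I expect the coreflection step for~$\ftG$ to be the main obstacle: one must control the not-necessarily-Kantorovich lifting~$\ftG$ via the finite-subspace cocone, using that finitariness of the underlying $\SET$-functor together with initiality of finite subspace inclusions pins down~$\ftG$ on underlying sets, and that coreflection of initiality is a property visible already at that level. Everything else is routine bookkeeping, chaining together Theorems~\ref{thm:main} and~\ref{thm:sw-cInfsup}, Example~\ref{p:35}, and Proposition~\ref{prop:coref-ini-preserve}.
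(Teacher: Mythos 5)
Your proposal is correct and follows exactly the route the paper intends: Theorem~\ref{thm:main} instantiated with $\cC=\ccInf$ and $\calI$ the finite symmetric $\V$-categories, with characterization of initiality from Theorem~\ref{thm:sw-cInfsup}, $\ccInf$-continuity from Example~\ref{p:35}, and coreflection of initiality obtained from the finitary lifting via the directed cocone of finite subspaces (jointly L-dense by finitarity, cf.\ Example~\ref{expl:corefl}) and transported to $\ftF$ by Proposition~\ref{prop:coref-ini-preserve}. The paper gives no separate proof of this corollary, and your assembly of the ingredients matches its implicit one.
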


\begin{theorem}\label{thm:sw-cInf}
  Let \(\V\) be a completely distributive quantale such that for every \(u \in \V\) the map \(u \otimes -\) preserves codirected infima.
  The closure operator \(\cInf\) characterizes initiality on  finite symmetric \(\V\)-categories.
\end{theorem}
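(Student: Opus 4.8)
The plan is to obtain the statement from the previous Stone-Weierstraß theorem, Theorem~\ref{thm:sw-cInfsup}, whose hypothesis -- that every map \(u\otimes-\) preserves codirected infima -- is subsumed here. One direction is immediate: by Proposition~\ref{lem:closure-dense}.\ref{p:44}, \(\cInf\)-density always implies initiality, and hence \(\cFun\)-density. For the converse, let \(X\) be a finite symmetric \(\V\)-category and let \(A\subseteq\V_s^X\) be a propositional algebra that is \(\cFun\)-dense, equivalently initial. By Theorem~\ref{thm:sw-cInfsup}, \(A\) is then \(\ccInf\)-dense, i.e.\ \(\ccInf(A)=\V_s^X\). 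Thus it suffices to establish the purely order-theoretic inclusion \(\ccInf(A)\subseteq\cInf(A)\) for an arbitrary propositional algebra \(A\subseteq\V_s^X\), using complete distributivity of \(\V\); the case \(X=\emptyset\) aside, finiteness of \(X\) is then needed only for the appeal to Theorem~\ref{thm:sw-cInfsup}.

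Since \(\ccInf(A)\) is the least superset of \(A\) closed under codirected infima and finite suprema, it is enough to check that \(\cInf(A)\) enjoys both closure properties. Closure under codirected -- indeed arbitrary -- infima is immediate, \(\cInf\) being the closure operator for infima. For finite suprema, first note that \(\cInf(A)\supseteq A\) contains the constant maps \(\top\) and \(\bot=\bot\otimes\top\), hence the empty supremum. For binary suprema, take \(g_1=\bigwedge_{s\in S}a_s\) and \(g_2=\bigwedge_{t\in T}b_t\) with all \(a_s,b_t\in A\); applying the co-frame distributive law twice -- valid in the completely distributive lattice \(\V\), and hence pointwise in \(\V_s^X\), which is a complete sublattice of \(\V^X\) because \(\V\)-functors into \(\V_s\) are closed under pointwise infima and suprema (by adjointness, \(\hom(c,-)\) preserves infima and \(-\otimes c\) preserves suprema) -- yields \(g_1\vee g_2=\bigwedge_{(s,t)\in S\times T}(a_s\vee b_t)\). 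Each \(a_s\vee b_t\) lies in \(A\) since \(A\) is a propositional algebra, so \(g_1\vee g_2\in\cInf(A)\). Therefore \(\cInf(A)\supseteq\ccInf(A)\), and combined with the easy direction this shows that \(\cInf\) characterizes initiality on finite symmetric \(\V\)-categories.

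I do not anticipate a genuine obstacle: the substance of the argument is delegated to Theorem~\ref{thm:sw-cInfsup}, and complete distributivity is used only to pass from its closure operator \(\ccInf\) to \(\cInf\). The two minor points needing care are that \(\Cats{\V}_\sym(X,\V_s)\) is a complete sublattice of \(\V^X\) (so that the co-frame law applies to its elements as computed in \(\V_s^X\)), and the identification of a \(\V_s\)-closure operator with its reinterpretation on \(\Cats{\V}_\sym(X,\V_s)\); both are routine and follow directly from the definitions already in place.
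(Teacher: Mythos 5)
Your proof is correct, but it is organized differently from the paper's. The paper proves Theorem~\ref{thm:sw-cInf} directly: it invokes Lemma~\ref{p:20} to write an arbitrary \(\V\)-functor \(f\colon X\to\V_s\) as \(\bigvee_{x\in X}\bigwedge_{\psi\in A}\phi_{x,\psi}\) with \(\phi_{x,\psi}=f(x)\otimes\hom_s(\psi(x),\psi)\in A\), and then uses complete distributivity to rewrite this as \(\bigwedge_{g}\bigvee_{x\in X}\phi_{x,g(x)}\), an infimum of finite suprema of elements of \(A\), whence \(f\in\cInf(A)\). You instead treat Theorem~\ref{thm:sw-cInfsup} as a black box and supply the purely order-theoretic inclusion \(\ccInf(A)\subseteq\cInf(A)\) for propositional algebras \(A\) over a completely distributive \(\V\), obtained by checking that \(\cInf(A)\) is closed under finite suprema via the co-frame law \((\bigwedge_s a_s)\vee(\bigwedge_t b_t)=\bigwedge_{s,t}(a_s\vee b_t)\). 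Unfolding your argument recovers essentially the same rearrangement of \(f\) (for finite \(X\) the paper's distributivity step is exactly finite joins distributing over arbitrary meets, applied pointwise), so the mathematical content coincides; what your route buys is modularity — an explicit, reusable comparison of the two closure operators on propositional algebras, with finiteness of \(X\) isolated in the appeal to Theorem~\ref{thm:sw-cInfsup} — at the cost of the small extra checks you correctly flag (that pointwise infima and finite suprema of \(\V\)-functors into \(\V_s\) are again such, and the handling of the empty supremum \(\bot=\top\otimes\bot\in A\)).
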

\begin{corollary}
  Let \(\V\) be a   completely distributive quantale such that for every \(u \in \V\) the map \(u \otimes -\) preserves codirected infima, and let \(\ftF \colon \Cats{\V}_\sym \to \Cats{\V}_\sym\) be a \(\Lambda\)-Kantorovich functor that admits as an L-dense subfunctor a lifting of a finitary \(\SET\)-functor.
  If every predicate lifting in \(\Lambda\) preserves all infima, then the coalgebraic logic \(\calL(\Lambda)\) is expressive.
\end{corollary}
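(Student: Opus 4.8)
The plan is to obtain the statement as a direct instance of Theorem~\ref{thm:main}, choosing as the $\V_s$-closure operator $\cC=\cInf$ and as the class $\calI$ the class of all finite symmetric $\V$-categories. With this choice, $\cInf$ is a $\V_s$-closure operator by Example~\ref{p:27}, so three further hypotheses of Theorem~\ref{thm:main} remain to be verified: that $\cInf$ characterizes initiality on $\calI$; that $\calI$ coreflects initiality under $\ftF$; and that every $\lambda\in\Lambda$ is $\cInf$-continuous. Two of the three are immediate. Since $\V$ is completely distributive and $u \otimes -$ preserves codirected infima for every $u\in\V$, Theorem~\ref{thm:sw-cInf} applies verbatim and yields that $\cInf$ characterizes initiality on finite symmetric $\V$-categories. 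Moreover, by the clause of Example~\ref{p:35} concerning $\cInf$, a predicate lifting is $\cInf$-continuous precisely when each of its components preserves all infima; this is exactly the standing assumption on $\Lambda$, so every $\lambda\in\Lambda$ is $\cInf$-continuous.

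The remaining condition, coreflection of initiality under $\ftF$, is the one requiring a small argument. I would invoke the corollary of Proposition~\ref{prop:coref-ini-preserve}, which ensures that the class of finite symmetric $\V$-categories coreflects initiality under any $\Cats{\V}_\sym$-functor $\ftF$ for which there is a natural transformation $j\colon\ftG\to\ftF$ with each component initial and L-dense, where $\ftG$ is a Kantorovich lifting of a finitary set functor. By hypothesis $\ftF$ has an L-dense subfunctor $\ftG$ that is a lifting of a finitary set functor $\ftH$, with the inclusion $j\colon\ftG\to\ftF$ having initial and L-dense components. It then remains to recognize $\ftG$ as a Kantorovich lifting of $\ftH$: for each $\V$-category $X$, composing the initial inclusion $\ftG X\hookrightarrow\ftF X$ with the initial cone $(\lambda(f)\colon\ftF X\to\V_s)_{\lambda\in\Lambda,\,f\in\Cats{\V}_\sym(X,\V_s)}$ that witnesses the $\Lambda$-Kantorovich property of $\ftF$ exhibits $\ftG$ as $\Lambda$-Kantorovich (with respect to the liftings restricted along $j$); and a functor that is $\Lambda$-Kantorovich and at the same time a lifting of the finitary set functor $\ftH$ is, by the very definition of the Kantorovich lifting, the Kantorovich lifting of $\ftH$. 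Hence the corollary of Proposition~\ref{prop:coref-ini-preserve} applies, and $\calI$ coreflects initiality under $\ftF$.

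With all the hypotheses of Theorem~\ref{thm:main} in place, that theorem immediately gives that $\calL(\Lambda)$ is expressive. The only step I expect to require genuine care is the identification in the previous paragraph, namely confirming that ``$\ftF$ admits as an L-dense subfunctor a lifting of a finitary set functor'' really supplies the ``Kantorovich lifting of a finitary set functor'' demanded by the corollary of Proposition~\ref{prop:coref-ini-preserve}, including the bookkeeping that turns the $\ftF$-predicate liftings in $\Lambda$ into set-level predicate liftings for $\ftH$. Everything else is a routine instantiation, strictly parallel to the proofs of Corollaries~\ref{cor:hm-finite}, \ref{cor:hm-L} and~\ref{cor:hm-frame}, which invoke Theorems~\ref{thm:sw-finite}, \ref{thm:sw-L} and~\ref{thm:sw-cInfsup} in the role played here by Theorem~\ref{thm:sw-cInf}.
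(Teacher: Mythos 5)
Your overall strategy is exactly the intended one: the corollary is a direct instantiation of Theorem~\ref{thm:main} with $\cC=\cInf$ and $\calI$ the class of finite symmetric $\V$-categories, with Theorem~\ref{thm:sw-cInf} supplying characterization of initiality and Example~\ref{p:35} supplying $\cInf$-continuity of the modalities. The one step you yourself single out as delicate is, however, handled incorrectly. You claim that the L-dense subfunctor $\ftG$, being a lifting of a finitary set functor $\ftH$ and $\Lambda_j$-Kantorovich for the predicate liftings restricted along $j$, is ``by the very definition'' the Kantorovich lifting of $\ftH$. This does not follow: a Kantorovich lifting in the sense of the paper is built from \emph{set-level} predicate liftings, i.e.\ natural transformations defined on all maps $X\to\V$ and natural with respect to all functions, whereas the restricted liftings $\lambda(-)\cdot j_X$ are defined only on nonexpansive arguments and are natural only with respect to $\V$-functors. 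There is no canonical extension to arbitrary maps, so being $\Lambda_j$-Kantorovich as a $\Cats{\V}_\sym$-functor does not exhibit $\ftG$ as $\ftH^{\Lambda'}$ for any set-level $\Lambda'$, and the corollary of Proposition~\ref{prop:coref-ini-preserve} cannot be invoked in the form you describe.

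The gap is easy to close, because the Kantorovich property of $\ftG$ is not actually needed for coreflection of initiality; finitarity of $\ftH$ suffices. For any symmetric $\V$-category $X$, the embeddings of finite full sub-$\V$-categories form a directed cocone of initial $\V$-functors, and since $\ftH$ is finitary and $\ftII{\ftG i}=\ftH i$, the cocone $(\ftG i\colon\ftG X_i\to\ftG X)$ is jointly surjective, hence jointly L-dense, and directed; by the last clause of Example~\ref{expl:corefl} it therefore coreflects initiality. Thus the class of finite symmetric $\V$-categories coreflects initiality under \emph{any} lifting of a finitary set functor, and Proposition~\ref{prop:coref-ini-preserve} transfers this to $\ftF$ along the initial L-dense components of the inclusion $j$. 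With this replacement for your bridging argument, the remainder of your instantiation of Theorem~\ref{thm:main} goes through exactly as you describe.
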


\section{Examples}

Let us revisit the clauses mentioned in Example~\ref{exa:main} and their variants.

\ref{main:lts}. (\emph{Labelled/metric transition systems}): As such,
LTS are a purely qualitative system type, so we vary the setup and
instead consider a form of \emph{metric transition
  systems}. Specifically, we assume that every state is labelled with
a number in the unit interval, and for simplicity, we assume that
there is only one transition label. Moreover, we restrict to finite
branching. We thus consider coalgebras for the functor
$\ftF = [0,1]\times\pow_\omega$, where $\pow_\omega$ denotes the
finite powerset functor. Also, we aim for an example based on
\emph{ultrametrics}, so we make $[0,1]$ into a quantale
$\V=[0,1]_{\max}$. We define a nullary predicate lifting~$o$ (formally accommodated
as a unary predicate lifting that ignores its argument) by
$o_X(r,S)=r$, and a unary predicate lifting~$\Diamond$ by
\begin{equation*}
  \Diamond_X(f)(r,S)=\textstyle\bigvee_{x\in S} f(x).
\end{equation*}
We write $\bar\ftF$ for the Kantorovich lifting of $\ftF$ under the
set~$\Lambda$ of these modalities. We then obtain that the coalgebraic
logic $\calL(\Lambda)$ is expressive, via
Corollary~\ref{cor:hm-frame}: $\bar\ftF$ is itself a lifting of a
finitary set functor, $o$ trivially preserves all infima and suprema,
and one checks easily that~$\Diamond$ preserves codirected infima and
finite suprema.

\ref{main:para}. (\emph{$\rotatebox{45}{\scalebox{.75}{$\Box$}}$-valued powerset}): The set $\V=\{\bot,\mathsf{N},\mathsf{B},\top\}$ is an example
of a frame quantale (Example~\ref{p:30}). It is finite, and hence our general
Corollary~\ref{cor:hm-finite} applies to it, but not the previously existing
expressivity theorem for quantale-valued distances~\cite{WS21}, for this quantale is not a value-quantale. The
induced logic is a paraconsistent four-valued logic with \(\calL(\Lambda)\) instantiated
as follows:

\begin{flalign*}
  &&\phi\Coloneqq \top \mid \phi_1 \vee \phi_2 \mid \phi_1 \wedge \phi_2 \mid u  \mid \phi\equiv u \mid \lambda(\phi) && (u \in \V, \lambda \in \Lambda)
\end{flalign*}

where $\Lambda$ can be defined in different ways, and the expressivity result
remains true for any such choice. In practice, one usually considers further logical
connectives, which, of course, leaves expressivity intact. A natural extension of paraconsistent
logics with a box modality was previously considered by Rivieccio et al~\cite{RivieccioEA17},
who interpreted it over generalized Kripke frames, which are precisely $\paraP$-coalgebras.
They considered two ways of interpreting the box modality, which in our case
correspond to two predicate liftings: ${\Box_{\supset}}_X(f)(g) = \bigwedge_{x\in X}\hom(g(x),f(x))$
and ${\Box_{\to}}_X(f)(g) = \bigwedge_{x\in X}\hom(g(x),f(x))\land\hom(\neg f(x),\neg g(x))$
where $\neg$ is the paraconsistent negation: $\neg\top = \bot$, $\neg\bot = \top$,
$\neg\mathsf{B}=\mathsf{B}$ and $\neg\mathsf{N}=\mathsf{N}$.

\ref{main:prob} (\emph{Discrete probabilistic transition systems}):
As mentioned  already in Example~\ref{expl:expectation}, the functor \(\mathcal{D}^\mathbb{E}(1+-)^A\) is the Kantorovich lifting of the finitary functor \(\mathcal{D}(1+-)^A\) w.r.t\ \(\Lambda = \{\mathbb{E}^{a,+1} \mid a \in A\}\).
Moreover, it is easy to see that every predicate lifting in \(\Lambda\) gives rise to an \(\cL\)-continuous predicate lifting for \(\mathcal{D}^\mathbb{E}(1+-)^A\).
Hence, we obtain by Corollary~\ref{cor:hm-L} that quantitative probabilistic modal logic -- the coalgebraic modal logic generated by the expectation modality -- is expressive~\cite{DBLP:journals/tcs/BreugelHMW07, DBLP:journals/jcss/AdamekMMU15}.

\ref{main:weight}. (\emph{Weighted transition systems with negative weights}):

The functor \(\weighted\)

defining our variant of weighted transition systems is Kantorovich for
the set~$\Lambda$ of (\emph{non-monotone}) predicate liftings \(\langle a \rangle^{+r}\), for  \(a\in A\) and \(r \in \mathbb{R}\), defined by
\[
	\sem{\langle r\rangle}(f)(t) =\min\Bigl\{1, \max\Bigr\{0, r + \frac{1}{2}\sum_{x\in X} f(x)t(a)(x)\Bigr\}\Bigr\}
      \]
      Since~$\weighted$ is, moreover, a lifting of a finitary set
      functor, we obtain by Corollary~\ref{cor:hm-L} that the coalgebraic modal logic $\calL(\Lambda)$ is expressive.

\ref{main:kant}. (\emph{Continuous probabilistic transition systems}):
Our variant~$\kantorovich$ of the Kantorovich functor admits, by definition, a totally bounded L-dense subfunctor that assigns to a space~$X$ the set of all Borel distributions on~$X$ with totally bounded support.
Hence, as \(A\) is finite, it follows that the functor \(\kantorovich(1 + {-})^A\) admits a totally bounded L-dense subfunctor.
Furthermore, as noted already in Example~\ref{expl:expectation}, the functor \(\kantorovich(1+{-})^A\) is \(\Lambda\)-Kantorovich, where \(\Lambda = \{\mathbb{E}^{a,+1} \mid a \in A\}\), and it is easy to see that every predicate lifting in \(\Lambda\) is \(\cL\)-continuous.
Therefore, we obtain by Corollary~\ref{cor:hm-L-tb} that the coalgebraic modal logic \(\calL(\Lambda)\) is expressive, thus essentially recovering expressiveness of quantitative probabilistic modal logic on continuous probabilistic transition systems~\cite{DBLP:journals/tcs/BreugelW05,DBLP:journals/tcs/BreugelHMW07}.

\section{Conclusions and Further Work}

We have presented a quantitative Hennessy-Milner theorem in
coalgebraic and quantalic generality, covering behavioural distances
on a wide range of system types. Notably, our results apply to
functors on metric spaces that fail to be liftings of any set functor,
such as the (tight) Borel distribution functor. A key factor in the
technical development was the interplay between notions of density on
the one hand, and initiality of cones in the topological category of
generalized metric spaces taking values in a quantale~$\V$
(\emph{$\V$-categories}) on the other hand. We have illustrated how to
instantiate our results in several salient cases, in particular
continuous probabilistic transition systems and weighted transition
systems allowing negative weights.

For simplicity, we have worked exclusively with symmetric
$\V$-categories throughout; nevertheless, we stress that our results carry over
straightforwardly to the non-symmetric case, which covers
quantitative analogues of simulation preorders (indeed, some of the
existing quantitative coalgebraic Hennessy-Milner theorems already do
apply to non-symmetric distances~\cite{WS21,KKK+21,WS22}).
In fact, we expect our main expressivity theorem to be easily transported to
topological categories that admit an initial dense object (which takes the role of \(\V_s\)).
We leave this issue to future work. Another important direction is to develop a
general coalgebraic treatment of characteristic logics for
non-branching-time (e.g.\ linear-time) behavioural distances
(e.g.~\cite{AlfaroEA09,FahrenbergEA11}), possibly building on recent results in this direction~\cite{BeoharEA22}.

\clearpage
\appendix
\section{Appendix: Omitted Proofs}

\subsection{Proof of~\autoref{prop:Vs-closed}}
Since a split subobject of a separated \(\V\)-category is L-closed, we conclude
immediately that \(\Cats{\V}(X,V)\) is closed in \(\V^{|{X}|}\). Furthermore,
since the symmetrization functor \((-)_{s}\) preserves products and split
monomorphisms, the second assertion follows. \qed

\subsection{Proof of~\autoref{prop:form-sem-f}}
Just note that \(\mathcal{L}(\Lambda)\) is the initial algebra for the signature
\begin{displaymath}
  (\top,-\wedge-,-\vee-,u\otimes-,\hom_{s}(u,-),\lambda(-)),
\end{displaymath}
that \(\V_{s}^{X}\) becomes an algebra of this type with the canonical
interpretation of these symbols, and that
\(\Sema{-}\colon\mathcal{L}(\Lambda)\to \V_{s}^{X}\) is the unique algebra
homomorphism from the initial algebra to \(\V_{s}^{X}\). Since
\(-\cdot f \colon\V_{s}^{Y}\to\V_{s}^{X}\) is an algebra homomorphism, one
obtains \(\Sema{-}=\Semb{-}\cdot f\). \qed

\subsection{Proof of~\autoref{prop:bd-formula}}
  Let \(f \colon (X,a,\alpha) \to (Y,b,\beta)\) be an \(\ftF\)-coalgebra homomorphism.
  Consider the symmetric \(\V\)-category \((X,b_f^\triangleleft)\) consisting of the set \(X\) equipped with the initial structure w.r.t.\  the structured map \(f \colon X \to \ftII{(Y,b)}\).
  The following diagram of solid arrows commutes in \(\Cats{\V}_\sym\).

  \begin{displaymath}
    \begin{tikzcd}
      (X,a) & \ftF (X,a) \\
      (X,b_f^\triangleleft) & \ftF (X,b_f^\triangleleft) \\
      (Y,b) & \ftF (Y, b)
      \ar[from=1-1, to=1-2, "\alpha"]
      \ar[from=1-1, to=2-1, "1_X"']
      \ar[from=1-1, to=3-1, "f"', bend right=60]
      \ar[from=1-2, to=2-2, "\ftF 1_X"]
      \ar[from=1-2, to=3-2, "\ftF f", bend left=60]
      \ar[from=2-1, to=2-2, "\ftF 1_X \cdot \alpha"', dashed]
      \ar[from=2-1, to=3-1, "f"']
      \ar[from=2-2, to=3-2, "\ftF f"]
      \ar[from=3-1, to=3-2, "\beta"']
    \end{tikzcd}
  \end{displaymath}

  Hence, \(\ftII{\ftF f} \cdot \ftII{\ftF 1_X \cdot \alpha} = \ftII{\beta \cdot f}\).
  By definition, \(f \colon (X,b_f^\triangleleft) \to (Y,b)\) is initial, which, by hypothesis, implies that \(\ftF f \colon \ftF (X,b_f^\triangleleft) \to \ftF (Y,b)\) is initial.
  Thus, \(\ftF1_X \cdot \alpha \colon (X,b_f^\triangleleft) \to \ftF (X,b_f^\triangleleft)\) is a coalgebra homomorphism.
  Therefore, the claim follows from the fact that for all \(x,y \in X\), \(b(f(x),f(y)) = b_f^\triangleleft(x,y)\).
\qed

\subsection{Proof of~\autoref{prop:bd-stable}}
  Let \(x,y \in X\).
  Clearly, \(bd_\alpha^\ftF (x,y) \geq bd_\beta^\ftF (f(x),f(y))\), since coalgebra homomorphisms are closed under composition.
  To see the reverse inequality, let \(g \colon (X,a,\alpha) \to (Z, c, \gamma)\) be a coalgebra homomorphism.
  The category of coalgebras for \(\ftF\) is cocomplete.
  Hence, let \(i_1 \colon (Y,b,\beta) \to (P, p, \rho)\) and \(i_2 \colon (Z,c,\gamma) \to (P,p,\rho)\) be a pushout for the span consisting of \(f\) and \(g\).
  Then, \(i_1\) is a coalgebra homomorphism and \(p(i_1 \cdot f(x),i_1 \cdot f(y)) = p( i_2 \cdot g (x), i_2 \cdot g(y)) \geq c(g(x),g(y)\).
\qed

\subsection{Proof of~\autoref{p:25}}
  Let \(f \colon (X,a,\alpha) \to (Y,b,\beta)\) be a morphism of \(\ftF\)-coalgebras, and \(x,y\) be elements of~\(X\).
  Since formulas evaluate to $\V$-functors, \(b(f(x),f(y)) \leq ld_\beta^\Lambda(f(x),f(y))\).
  Therefore, the claim follows from the fact that logical distance is invariant under coalgebra homomorphisms.

\subsection{Proof of~\autoref{lem:ld-le-bd}}
By Remark~\ref{p:23}, it suffices to show that the map \(\ftF 1_X \cdot \alpha\) is a \(\V\)-functor from \((X,ld_\alpha^\Lambda)\) to \(\ftF (X,ld_\alpha^\Lambda)\).
Let \(\lambda \in \Lambda\) and \(\phi \in \calL(\Lambda)\).
The following diagram of solid arrows commutes in \(\Cats{\V}_\sym\).
\begin{center}
  \begin{tikzcd}[row sep=small, column sep=huge]
    (X,a) & (X,ld_\alpha^\Lambda) & \\
    & & \V_s. \\
    \ftF (X,a) & \ftF (X,ld_\alpha^\Lambda) & \V_s\\
    \ar[from=1-1, to=1-2, "1_X"]
    \ar[from=1-1, to=3-1, "\alpha"']
    \ar[from=1-1, to=2-3, "\Sema{\lambda(\phi)}", bend left=50]
    \ar[from=1-2, to=3-2, dashed, "\ftF 1_X \cdot \alpha"']
    \ar[from=1-2, to=2-3, "\Sema{\lambda(\phi)}", near start]
    \ar[from=3-1, to=3-2, "\ftF 1_X"']
    \ar[from=3-1, to=2-3, "\lambda(\Sema{\phi})"', bend right=50]
    \ar[from=3-2, to=2-3, "\lambda(\Sema{\phi})"', near start]
   \end{tikzcd}
\end{center}
Hence, \(\ftII{\lambda(\Sema{\phi})} \cdot \ftII{\ftF 1_X \cdot \alpha} = \ftII{\Sema{\lambda(\phi)}}\).
Therefore, the claim follows by the universal property of initial cones.
\qed

\subsection{Proof the $\cFun$ is the Greatest \(\V_s\)-closure Operator}
\begin{proposition}
  \label{p:5}
  Let \(\cC\) be a \(\V_s\)-closure operator.
  Then, for every set \(X\) and every set \(A \subseteq \V^X\), \(\cC(A) \subseteq \cFun(A)\).
\end{proposition}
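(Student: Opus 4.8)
The plan is to unfold the definition of $\cFun$ and then to invoke the two defining properties of a $\V_s$-closure operator: that each $\cC_X$ is a genuine closure operator on $\SET(X,\V)$ (extensiveness, monotonicity, idempotence), and that for every symmetric $\V$-category $(X,a)$ the hom-set $\Cats{\V}_\sym((X,a),\V_s)$, viewed as a subset of $\SET(X,\V)$, is $\cC_X$-closed. Everything then reduces to checking that $\cFun_X(A)$ is an instance of such a hom-set that contains $A$.

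First I would record that $A \subseteq \cFun_X(A)$. By construction, $X_A = (X,a_A)$ carries the initial structure with respect to the structured cone $(f\colon X \to \ftII{\V_s})_{f \in A}$, so $a_A(x,y) = \bigwedge_{g \in A}\hom_s(g(x),g(y))$. This is a \emph{symmetric} $\V$-category structure, since $\V_s$ is symmetric and a pointwise infimum of symmetric structures is again symmetric; hence $X_A$ is an object of $\Cats{\V}_\sym$. Moreover $a_A(x,y) \le \hom_s(f(x),f(y))$ for each fixed $f \in A$, so every such $f$ is a $\V$-functor $X_A \to \V_s$; therefore $A \subseteq \ftII{\Cats{\V}_\sym(X_A,\V_s)} = \cFun_X(A)$.

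Next I would observe that $\cFun_X(A)$ is $\cC_X$-closed: this is exactly the defining requirement of a $\V_s$-closure operator applied to the symmetric $\V$-category $X_A$, since $\cFun_X(A)$ is by definition the subset $\Cats{\V}_\sym(X_A,\V_s) \subseteq \SET(X,\V)$. Combining the two observations with monotonicity and idempotence of $\cC_X$ then yields $\cC_X(A) \subseteq \cC_X(\cFun_X(A)) = \cFun_X(A)$, which is the claim.

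There is no serious obstacle here; the argument is essentially a two-line consequence of the definitions once they are lined up correctly. The one point that needs a small amount of care is verifying that the initial structure $X_A$ is symmetric — this is what allows the defining axiom of a $\V_s$-closure operator, stated only for symmetric $\V$-categories, to apply — and, as noted, it follows because $\V_s$ is symmetric and symmetry is preserved by the pointwise infima that define initial structures in $\Cats{\V}_\sym$.
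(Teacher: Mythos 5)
Your proof is correct and follows essentially the same route as the paper's: establish $A \subseteq \cFun(A)$, note that $\cFun(A) = \ftII{\Cats{\V}_\sym(X_A,\V_s)}$ is $\cC$-closed by the defining axiom of a $\V_s$-closure operator, and conclude by monotonicity. Your additional check that the initial structure on $X_A$ is symmetric (so that the axiom, stated only for symmetric $\V$-categories, actually applies) is a worthwhile detail that the paper leaves implicit.
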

\begin{proof}
	Let \(A\) be a subset of \(\V^X\).
	Then, \(A \subseteq \cFun (A)\) and \(\cFun (A) = \Cats{\V}_\sym(X_A,\V_s)\) is \(\cC\)-closed because \(\cC\) is a \(\V_s\)-closure operator.
	Therefore, \(\cC(A) \subseteq \cFun (A)\).
\end{proof}

\subsection{Proof of~\autoref{lem:closure-dense}}
\begin{enumerate}
		\item Clearly, if \(A\) is initial then \(A\) is \(\cFun\)-dense. If \(A\) is \(\cC\)-dense
		then it is initial, since~\(\V_s\) is initial dense in \(\Cats{\V}_\sym\) and
		\(\cFun\) is the \emph{greatest} $\V_s$-closure operator.

		\item By the previous clause, if \(\cC(A)\) is initial then it is \(\cFun\)-dense;
		that is \(\Cats{\V}_\sym(X,\V_s) = \cFun(\cC(A))\). Using the fact that
		\(\cFun\) is the \emph{greatest} $\V_s$-closure operator,
		\[
			\Cats{\V}_\sym(X,\V_s) = \cFun(\cC(A)) \subseteq \cFun(\cFun(A)) = \cFun(A) \subseteq \Cats{\V}_\sym(X,\V_s).
		\]
		Therefore, by the previous clause, \(A\) is initial.\qed
\end{enumerate}

\subsection{Proof of~\autoref{prop:charc-ini}}
\begin{lemma}\label{lem:pp-closed}
  Let \(A \subseteq \V_s^X\) be a propositional algebra.
  For every \(\V\)-functor \(i \colon Y \to X\), \(A \cdot i \subseteq \V_s^Y\) is a propositional algebra.
\end{lemma}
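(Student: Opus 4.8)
The plan is to recognise that $A\cdot i$ is nothing but the image of $A$ under the map ${-}\cdot i\colon\V_s^X\to\V_s^Y$ given by precomposition with $i$, and that this map is a homomorphism for the propositional signature of Definition~\ref{def:pp-algebra} (constant $\top$, $\wedge$, $\vee$, $\hom_s(u,-)$ and $u\otimes-$); since the image of a subalgebra under a homomorphism is again a subalgebra, the claim follows. So there is no real work here beyond two entirely routine verifications.

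First I would check that ${-}\cdot i$ is well defined as a map into $\V_s^Y=\Cats{\V}_\sym(Y,\V_s)$: for $f\in A\subseteq\Cats{\V}_\sym(X,\V_s)$ and the $\V$-functor $i\colon Y\to X$, the composite $f\cdot i\colon Y\to\V_s$ is again a $\V$-functor, because $\V$-functors compose; hence $A\cdot i\subseteq\V_s^Y$. Next I would record that each operation of the propositional signature is, by the definition of the semantics, computed \emph{pointwise} on $\V_s^X$ and on $\V_s^Y$ from the lattice structure of $\V$ and the $\V$-categorical structure $\hom_s$ of $\V_s$. Precomposition with $i$ merely relabels the argument, so it commutes with every pointwise operation: writing $c_\top$ for the constant-$\top$ $\V$-functor, one has $c_\top\cdot i=c_\top$, $(f\wedge g)\cdot i=(f\cdot i)\wedge(g\cdot i)$, $(f\vee g)\cdot i=(f\cdot i)\vee(g\cdot i)$, $\hom_s(u,f)\cdot i=\hom_s(u,f\cdot i)$ and $(u\otimes f)\cdot i=u\otimes(f\cdot i)$, for all $u\in\V$ and $f,g\in\V_s^X$.

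Finally, combining these identities with the hypothesis that $A$ contains $c_\top$ and is closed under $\wedge$, $\vee$, $\hom_s(u,-)$ and $u\otimes-$, one reads off that $A\cdot i$ contains $c_\top$ and is likewise closed under all these operations, hence is a propositional algebra in $\V_s^Y$. I do not expect any genuine obstacle: the only two points requiring (trivial) care are that $A\cdot i$ consists of honest $\V$-functors $Y\to\V_s$ and that the pointwise operations commute with the substitution induced by $i$, and both are immediate from the definitions.
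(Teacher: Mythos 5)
Your proof is correct and follows essentially the same route as the paper's: both establish that precomposition ${-}\cdot i\colon\V_s^X\to\V_s^Y$ is a homomorphism for the propositional signature and then conclude that the image of a subalgebra under a homomorphism is again a subalgebra. The only cosmetic difference is that the paper justifies the preservation of the operations by observing that ${-}\cdot i$ is both a left and a right adjoint, whereas you verify the same identities directly from the pointwise definition of the operations; the content is the same.
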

\begin{proof}
  Being left and right adjoint, the \(\V\)-functor
  \(\V^{X}\to\V^{Y},\,\varphi\mapsto\varphi\cdot i\) preserves finite suprema
  and finite infima as well as the operations \(u\otimes-\) and \(\hom(u,-)\).
  Therefore the \(\V\)-functor \(-\cdot i \colon\V_{s}^{X}\to\V_{s}^{Y}\)
  preserves the operations \(\top\), \(\wedge\), \(\vee\), \(\hom_s(u,-)\), and
  \(u \otimes -\), for every \(u \in \V\). Consequently, the image \(A\cdot i\)
  of every propositional algebra \(A \subseteq \V_s^X\) is a propositional
  algebra.
\end{proof}
  Let \(A \subseteq \V_s^X\) be a propositional algebra that is also an initial cone, and let \((i \colon X_i \to X)_{i \in \calI_X}\) be a cocone of initial \(\V\)-functors that coreflects \(\cC\)-density and \(X_i \in \calI\) for each \(i \in \calI_X\).
  Then, for every \(i \in \calI\), \(A \cdot i\) is an initial algebra, by Lemma~\ref{lem:pp-closed}, that is also an initial cone, as \(A\) and \(i\) are initial.
  Thus, for every \(i \in \calI\), \(A \cdot i\) is \(\cC\)-dense, because \(\cC\) characterizes initiality on \(\calI\).
  Therefore, since the cone \((i \colon X_i \to X)_{i \in \calI}\) coreflects \(\cC\)-density, \(A\) is \(\cC\)-dense.

\qed

\subsection{Proof of~\autoref{prop:coref-ini-preserve}}
  Let \(X\) be a symmetric \(\V\)-category in \(\calI\) and \((i \colon X_i \to X)_{i \in \calI_X}\) be a cocone of initial \(\V\)-functors such that \(X_i \in \calI\) and the cocone \((\ftF i \colon \ftF X_i \to \ftF X)_{i \in \calI_X}\) coreflects initiality.
  We will show that the cocone \((\ftG i \colon \ftG X_i \to \ftG X)_{i \in \calI_X}\) also coreflects initiality.
  To this end, let \(A \subseteq \Cats{\V}_\sym(\ftG X, \V_s)\) be a set such that for every \(i \in \calI_X\), the cone \(A \cdot \ftG i\) is initial.
  Let \(i \in \calI_X\).
  Then, since \(j_{X_i}\) is initial, the cone \(A \cdot \ftG i \cdot j_{X_i}\) is initial.
  Thus, as \(j\) is a natural transformation, the cone \(A \cdot j_X \cdot \ftF i\) is initial.
  Consequently, by hypothesis, we obtain that the cone \(A \cdot j_X\) is initial.
  Therefore, since \(j_X\) is dense, by \cite[Lemma~1.10(4)]{HT10}, the cone \(A\) is initial.
\qed

\subsection{Proof of \autoref{thm:sw-L}}

We need the following
\begin{lemma}
  \label{lem:form-codir}
  Let \(X\) be a set, and \(A\) a non-empty subset of \(\SET(X,\V)\).
  Assume that \(A\) is closed under \(\wedge\) and \(\hom_s(u,\psi)\), for every \(u \in \V\).
  Then, for every \(x \in X\), the set
  \[
    \{\hom_s(\psi(x), \psi) \mid \psi \in A \}
  \]
  is codirected.
\end{lemma}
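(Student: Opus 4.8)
The plan is to prove directly that the set $S_x=\{\hom_s(\psi(x),\psi)\mid\psi\in A\}$ is nonempty and codirected with respect to the pointwise order on $\SET(X,\V)$. Nonemptiness is immediate from $A\neq\varnothing$, so the real task is: given any two elements of $S_x$, produce a common lower bound that again lies in $S_x$.

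The first thing I would record is that $S_x\subseteq A$, together with a computation of values at the point $x$. Since $A$ is closed under $\hom_s(u,-)$ for every $u\in\V$, instantiating $u=\psi(x)$ gives $\hom_s(\psi(x),\psi)\in A$ for each $\psi\in A$; and evaluating at $x$, using that $\hom_s$ is a $\V$-category structure on $\V_s$ (equivalently $\hom_s(u,u)=\hom(u,u)\wedge\hom(u,u)$ and $u\otimes k=u\le u$ translates through the adjunction into $k\le\hom(u,u)$), we get $\bigl(\hom_s(\psi(x),\psi)\bigr)(x)=\hom_s(\psi(x),\psi(x))\ge k$. Hence every member of $S_x$ has value $\ge k$ at~$x$.

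Now, given $\psi_1,\psi_2\in A$, set $\chi_i=\hom_s(\psi_i(x),\psi_i)$ for $i=1,2$ and $\psi_3=\chi_1\wedge\chi_2$. By the previous paragraph $\chi_1,\chi_2\in A$, hence $\psi_3\in A$ since $A$ is closed under $\wedge$, and therefore $\hom_s(\psi_3(x),\psi_3)\in S_x$. It remains to check that this element is pointwise below both $\chi_1$ and $\chi_2$. Here $\psi_3(x)=\chi_1(x)\wedge\chi_2(x)\ge k$ by the value computation; using that $\hom(-,v)$ is antitone in its first argument and that $\hom(k,v)=v$, we obtain $\hom(\psi_3(x),\psi_3(y))\le\hom(k,\psi_3(y))=\psi_3(y)$ for every $y\in X$, whence $\hom_s(\psi_3(x),\psi_3(y))\le\hom(\psi_3(x),\psi_3(y))\le\psi_3(y)=\chi_1(y)\wedge\chi_2(y)\le\chi_i(y)$. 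Thus $\hom_s(\psi_3(x),\psi_3)\le\chi_i$ for $i=1,2$, which is exactly the required common lower bound in $S_x$.

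The whole argument rests on a handful of routine quantale facts — antitonicity of $\hom(-,v)$ in its first argument, the identity $\hom(k,v)=v$, the inequality $\hom(u,u)\ge k$, and $\hom_s(u,v)=\hom(u,v)\wedge\hom(v,u)\le\hom(u,v)$, which is what transfers the one-sided bound to the symmetric hom — all of which follow immediately from the adjunction characterizing $\hom$ and the definition of the symmetrization $\V_s$. I do not anticipate a genuine obstacle; the only idea needed is the replacement of each $\psi_i$ by $\hom_s(\psi_i(x),\psi_i)$ before forming the meet, which forces the value at $x$ to lie above the unit~$k$ and so makes the bound $\hom(\psi_3(x),\psi_3(y))\le\psi_3(y)$ available.
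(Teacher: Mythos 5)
Your proof is correct and follows essentially the same route as the paper's: given two elements $\hom_s(\psi_1(x),\psi_1)$ and $\hom_s(\psi_2(x),\psi_2)$ of the set, both arguments form their meet $\gamma$ (your $\psi_3$), observe that $\gamma\in A$ and $\gamma(x)\ge k$, and deduce $\hom_s(\gamma(x),\gamma)\le\hom(k,\gamma)=\gamma$, which exhibits $\hom_s(\gamma(x),\gamma)$ as the required common lower bound in the set. Your write-up merely makes explicit the quantale facts ($\hom(u,u)\ge k$, antitonicity of $\hom(-,v)$, $\hom_s\le\hom$) that the paper uses silently.
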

\begin{proof}
  Let \(\phi,\psi \in A\) and \(x\in X\).
  Consider \(\gamma = \hom_s(u, \phi) \wedge \hom_s(v,\psi)\), where \(u = \phi(x)\) and \(v = \psi(x)\).
  Then, for every \(y \in X\),
\begin{align*}
    \hom_s(\gamma(x),\gamma(y)) \leq&\; \hom(\gamma(x),\gamma(y))\\
     \leq&\; \hom(k,\gamma(y))\\
        =&\; \hom_s(\phi(x),\phi(y)) \wedge \hom_s(\psi(x),\psi(y)).
\end{align*}
\end{proof}

\begin{lemma}
  \label{lem:sw-main}
  Let \((X,a)\) be a finite \(\V\)-category, \(A \subseteq \V_s^{(X,a)}\) be a propositional algebra,
  and~\(u\) be an element of \(\V\) such that for all \(v \in \V\), \(\hom(u,v)\) is way above \(v\).
  If \(A\) is initial, then for every \(f \colon X \to \V_s \in \Cats{\V}_\sym\) there is \(\psi_u \in A\) such that \(u \leq [\psi_u, f]\) and \(k \leq [f,\psi_u]\).
\end{lemma}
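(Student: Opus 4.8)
The plan is to run the classical Stone--Weierstraß lattice argument quantalically, isolating one place where the hypothesis on $u$ is used.

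\emph{Step 1 (reduction to a local approximation).} Since $X$ is finite and $A$, being a propositional algebra, is closed under finite $\wedge$ and $\vee$, it suffices to produce, for every $p\in X$, an element $\psi_p\in A$ with $f\le_\V\psi_p$ pointwise and $u\otimes\psi_p(p)\le_\V f(p)$. Indeed, putting $\psi_u:=\bigwedge_{p\in X}^{\V}\psi_p\in A$, we get $f(x)\le_\V\psi_p(x)$ for all $p$, hence $f(x)\le_\V\psi_u(x)$, i.e.\ $k\le[f,\psi_u]$; and $u\otimes\psi_u(x)\le_\V u\otimes\psi_x(x)\le_\V f(x)$, i.e.\ $u\le[\psi_u,f]$.

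\emph{Step 2 (the local approximant).} Fix $p$. For $\psi\in A$ set $d_\psi:=\hom_s(\psi(p),\psi)\in A$ (closure under $\hom_s(\psi(p),-)$). By Lemma~\ref{lem:form-codir} the family $(d_\psi(x))_{\psi\in A}$ is codirected for every $x$, and by initiality of $A$ its infimum is $a(p,x)$. The hypothesis now enters as follows: $\hom(u,a(p,x))$ is way above $a(p,x)=\bigwedge_{\psi\in A}^{\V}d_\psi(x)$, so there is $\psi_x\in A$ with $d_{\psi_x}(x)\le_\V\hom(u,a(p,x))$, equivalently $u\otimes d_{\psi_x}(x)\le_\V a(p,x)$. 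Choosing one such $\psi_x$ for each of the finitely many $x\in X$, put
\[
  \psi_p\ :=\ \bigvee_{x\in X}^{\V}\hom\bigl(d_{\psi_x},\,\hom(u,f(p))\bigr).
\]
This is a finite $\V$-join of elements of $A$; here one must check that $\hom(\varphi,c)\in A$ for $\varphi\in A$ and a constant $c\in\V$, which is a routine consequence of the propositional-algebra closure properties (e.g.\ $\hom(\varphi,c)=\hom_s(c,\,c\vee^{\V}\varphi)$, using that $A$ contains the constant $\V$-functors generated from $\top$). Hence $\psi_p\in A$.

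\emph{Step 3 (verification).} For $y\in X$, $\psi_p(y)\ge_\V\hom(d_{\psi_y}(y),\hom(u,f(p)))=\hom(u\otimes d_{\psi_y}(y),f(p))$ (currying). Since $u\otimes d_{\psi_y}(y)\le_\V a(p,y)\le_\V\hom(f(y),f(p))$ ($f$ a $\V$-functor, plus $\hom_s\le_\V\hom$) and $\hom(f(y),f(p))\otimes f(y)\le_\V f(p)$ (counit), we get $u\otimes d_{\psi_y}(y)\otimes f(y)\le_\V f(p)$, i.e.\ $f(y)\le_\V\hom(u\otimes d_{\psi_y}(y),f(p))\le_\V\psi_p(y)$; thus $f\le_\V\psi_p$ pointwise. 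Finally, as $\otimes$ preserves joins and $d_{\psi_x}(p)=\hom(\psi_x(p),\psi_x(p))\ge_\V k$,
\[
  u\otimes\psi_p(p)\ =\ \bigvee_{x\in X}^{\V}u\otimes\hom(d_{\psi_x}(p),\hom(u,f(p)))\ \le_\V\ u\otimes\hom(k,\hom(u,f(p)))\ =\ u\otimes\hom(u,f(p))\ \le_\V\ f(p),
\]
which is exactly what Step~1 asked for, completing the proof.

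\emph{Main obstacle.} The crux is Step~2: converting the codirected infimum $a(p,x)=\bigwedge_{\psi}^{\V}d_\psi(x)$ into a genuinely finite choice $\psi_x$ carrying a uniform slack $u$ — this is precisely where ``$\hom(u,v)$ is way above $v$ for all $v$'' is indispensable (in the real-valued case it just says $u>0$) — together with the bookkeeping that the resulting propositional combination still lies in $A$.
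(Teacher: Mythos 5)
Your overall architecture is sound and is in fact the mirror image of the paper's: where the paper builds $\psi_u=\bigvee_{x}\bigwedge_{y} f(x)\otimes\hom_s(\psi_{u,x,y}(x),\psi_{u,x,y})$, arranging the upper bound $u\otimes\psi_u\le f$ to hold term by term and the lower bound $f\le\psi_u$ only on the diagonal, you build a $\bigwedge_{p}\bigvee_{x}$ with the roles of the two bounds exchanged. Your Steps 1 and 3 are correct as written, and your use of the way-above hypothesis in Step 2 (via Lemma~\ref{lem:form-codir} and initiality of $A$) is exactly the paper's.

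The gap is the membership claim $\psi_p\in A$. Your building block $\hom(d_{\psi_x},\hom(u,f(p)))$ places the function $d_{\psi_x}$ in the \emph{contravariant} argument of $\hom$, whereas a propositional algebra is only required to be closed under $\top$, $\wedge$, $\vee$, $u\otimes-$ and $\hom_s(u,-)$, i.e.\ under operations with the function in the covariant slot. Your proposed reduction $\hom(\varphi,c)=\hom_s(c,c\vee\varphi)$ holds only when $\hom(c,c\vee\varphi)=\hom(c,c)=\top$, which is essentially the integral case; in general $\hom_s(c,c\vee\varphi)=\hom(c,c\vee\varphi)\wedge\hom(c,c)\wedge\hom(\varphi,c)$ can be strictly below $\hom(\varphi,c)$ (try the free quantale on a nontrivial monoid), and it moreover presupposes that the constant map with value $c$ lies in $A$, which is again only guaranteed when $c=c\otimes\top$, i.e.\ when $\V$ is integral. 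Since this lemma feeds Theorem~\ref{thm:sw-finite}, which must cover arbitrary finite (in particular non-integral) quantales, this is a genuine restriction rather than bookkeeping. The paper sidesteps the issue by using only terms of the form $f(x)\otimes\hom_s(\psi(x),\psi)$ --- covariant, permitted operations --- and this is precisely why it organizes the approximant as a join of meets rather than your meet of joins; repairing your argument in full generality essentially forces you back to that dual form.
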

\begin{proof}
Let \(f \colon (X,a) \to \V_s\) be a \(\V\)-functor.
  By hypothesis, for all \(x,y \in X\), \(\hom(u, a(x,y))\) is way above \(a(x,y) = \bigwedge_{\psi \in A} \hom_s(\psi(x),\psi(y))\).
  Hence, by Lemma~\ref{lem:form-codir}, for all \(x,y \in X\) there is \(\psi_{u,x,y} \in A\) such that \(u \otimes \hom_s(\psi_{u,x,y}(x),\psi_{u,x,y}(y)) \leq a(x,y)\).
  Thus, given that \(f \colon (X,a) \to \V_s\) is a \(\V\)-functor, for all \(x,y \in X\),
  \[
    u \otimes \hom_s(\psi_{u,x,y}(x),\psi_{u,x,y}(y)) \leq a(x,y) \leq \hom_s(f(x),f(y)) \leq \hom(f(x),f(y)).
  \]
  Now, since \(A\) is a propositional algebra, the map
  \[
    \psi_u = \bigvee_{x \in X}\bigwedge_{y \in X} f(x) \otimes \hom_s(\psi_{u,x,y}(x),\psi_{u,x,y})
  \]
  belongs to \(A\).
  Furthermore, for every \(z \in X\),
  \[
    u \otimes \psi_u(z) \leq \bigvee_{x \in X} u \otimes f(x) \otimes \hom_s(\psi_{u,x,z}(x),\psi_{u,x,z}(z)) \leq f(z),
  \]
  and,
  \[
    f(z) \leq \bigwedge_{y \in X} f(z) \otimes \hom_s(\psi_{u,z,y}(z),\psi_{u,z,y}(z)) \leq
  \psi_u(z).
  \]
  Therefore, \(u \leq [\psi_u,f]\) and \(k \leq [f,\psi_u]\).
\end{proof}

Let \((X,a)\) be a finite \(\V\)-category, \(A \subseteq \V_s^{(X,a)}\) be a propositional algebra that is also an initial cone,  and~\(u\) be an element of \(\V\) such that for all \(v \in \V\), \(\hom(u,v)\) is way above \(v\).
Furthermore, suppose that \(f \colon (X,a) \to \V_s\) is a \(\V\)-functor.
By Lemma~\ref{lem:sw-main}, there is \(\psi_u \in A\) such that \(u \leq [\psi_u, f]\) and \(k \leq [f,\psi_u]\).
Note that since \(\hom(u,k)\) is way above \(k\), \(u \leq k\).
Hence, \(u \leq [\psi_u, f]_s\).
Therefore, \(u \otimes u \leq [\psi_u, f] \otimes [\psi_u, f]\).
\qed

\subsection{Proof of~\autoref{thm:sw-finite}}

In a finite quantale every element is way above itself.
Therefore, the claim follows by using \(u = k\) in Lemma~\ref{lem:sw-main}.
\qed

\subsection{Proof of~\autoref{thm:main}}
First, we need to prove the following
\begin{lemma}
  \label{p:26}
  Let \(\ftF \colon \Cats{\V}_\sym \to \Cats{\V}_\sym\) be a \(\Lambda\)-Kantorovich functor that coreflects initiality over a class \(\calI\) of symmetric \(\V\)-categories, and let \(\cC\) be a \(\V_s\)-closure operator that characterizes initiality on \(\calI\).
  Furthermore, assume that every predicate lifting of \(\Lambda\) is \(\cC\)-continuous.
  For every propositional algebra \(A \subseteq \V_s^X\), if \(A\) is initial, then \(\Lambda(A)\) is initial.
\end{lemma}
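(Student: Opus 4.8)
The plan is to prove the conclusion first ``locally'', on the pieces supplied by the hypothesis that \(\calI\) coreflects initiality under \(\ftF\), and then globalize it. So fix, for the given \(\V\)-category \(X\), a cocone \((i\colon X_i\to X)_{i\in\calI_X}\) of initial \(\V\)-functors with \(X_i\in\calI\) for every \(i\), such that the cocone \((\ftF i\colon\ftF X_i\to\ftF X)_{i\in\calI_X}\) coreflects initiality. Since the latter means that \(\Lambda(A)\) is initial as soon as \(\Lambda(A)\cdot\ftF i\) is initial for every \(i\in\calI_X\), it suffices to fix \(i\) and show that \(\Lambda(A)\cdot\ftF i\) is initial.

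First I would use naturality of the predicate liftings (Remark~\ref{p:43}): for every \(\lambda\in\Lambda\) and \(f\in A\), \(\lambda(f)\cdot\ftF i=\lambda(f\cdot i)\), so \(\Lambda(A)\cdot\ftF i=\Lambda(A\cdot i)\), and the task reduces to showing that \(\Lambda(A\cdot i)\) is initial. The set \(A\cdot i\subseteq\V_s^{X_i}\) is again a propositional algebra by Lemma~\ref{lem:pp-closed}, and it is an initial cone: composing the initial cone \(A\) with the initial \(\V\)-functor \(i\) makes \(X_i\) carry the initial structure for \(A\cdot i\). As \(X_i\in\calI\) and \(\cC\) characterizes initiality on \(\calI\), the propositional algebra \(A\cdot i\), being initial (equivalently \(\cFun\)-dense, Proposition~\ref{lem:closure-dense}), is \(\cC\)-dense, i.e.\ \(\cC(A\cdot i)=\V_s^{X_i}\). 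Now \(\cC\)-continuity enters: for each \(\lambda\in\Lambda\), continuity of \(\lambda_{X_i}\) w.r.t.\ \(\cC_{X_i}\) and \(\cC_{\ftF X_i}\) yields \(\lambda_{X_i}(\V_s^{X_i})=\lambda_{X_i}(\cC(A\cdot i))\subseteq\cC(\lambda_{X_i}(A\cdot i))\subseteq\cC(\Lambda(A\cdot i))\); taking the union over \(\lambda\in\Lambda\) gives \(\Lambda(\V_s^{X_i})\subseteq\cC(\Lambda(A\cdot i))\).

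It remains to convert this inclusion into initiality of \(\Lambda(A\cdot i)\), and this is exactly where the \(\Lambda\)-Kantorovich property is used: it says that the cone \(\Lambda(\V_s^{X_i})\) of all \(\lambda(f)\) with \(\lambda\in\Lambda\) and \(f\in\Cats{\V}_\sym(X_i,\V_s)\) is already initial, hence \(\cFun\)-dense (Proposition~\ref{lem:closure-dense}). Since \(\cFun\) is the greatest \(\V_s\)-closure operator (Proposition~\ref{p:5}), \(\cC(\Lambda(A\cdot i))\subseteq\cFun(\Lambda(A\cdot i))\), so \(\Lambda(\V_s^{X_i})\subseteq\cFun(\Lambda(A\cdot i))\); applying \(\cFun\) and using monotonicity and idempotence, \(\V_s^{\ftF X_i}=\cFun(\Lambda(\V_s^{X_i}))\subseteq\cFun(\Lambda(A\cdot i))\subseteq\V_s^{\ftF X_i}\). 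Hence \(\Lambda(A\cdot i)=\Lambda(A)\cdot\ftF i\) is \(\cFun\)-dense, i.e.\ initial, and since \((\ftF i)_{i\in\calI_X}\) coreflects initiality, \(\Lambda(A)\) is initial.

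The main obstacle is the bookkeeping with the two distinct notions of density at play: \(\cC\)-density is the one that propagates through the predicate liftings (via \(\cC\)-continuity), whereas initiality --- i.e.\ \(\cFun\)-density --- is what we must conclude and what the Kantorovich property supplies at the level of full hom-sets. Bridging them relies on the inequality \(\cC\le\cFun\), but only after we have localized to \(X_i\in\calI\), which is precisely where the hypotheses ``\(\calI\) coreflects initiality under \(\ftF\)'' and ``\(\cC\) characterizes initiality on \(\calI\)'' do their work; some care is needed to verify that restriction along \(i\) preserves both the ``propositional algebra'' and the ``initial cone'' properties, so that characterization of initiality applies to \(A\cdot i\) in the first place.
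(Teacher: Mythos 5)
Your proof is correct and follows essentially the same route as the paper's: localize along the cocone $(i\colon X_i\to X)_{i\in\calI_X}$, use naturality to rewrite $\Lambda(A)\cdot\ftF i=\Lambda(A\cdot i)$, apply Lemma~\ref{lem:pp-closed} and characterization of initiality to get $\cC$-density of $A\cdot i$, push it through the liftings by $\cC$-continuity, and close with the Kantorovich property. The only cosmetic difference is that you inline the argument of Proposition~\ref{lem:closure-dense}(\ref{p:42}) via $\cC\le\cFun$ and idempotence of $\cFun$, where the paper simply cites that lemma.
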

\begin{proof}
  Let \(A \subseteq \V_s^X\) be a propositional algebra that is also an initial cone.
  Since \(\ftF\) coreflects initiality over \(\calI\), there is a cocone \((i \colon X_i \to X)_{i \in \calI_X}\) of initial \(\V\)-functors such that \(X_i \in \calI\) and the cocone \((\ftF i \colon \ftF X_i \to \ftF X)\) coreflects initiality.
  Hence, to show that \(\Lambda(A)\) is initial, it suffices to show that for every \(i \in \calI_X\), the cone \(\Lambda(A) \cdot \ftF i\) is initial.

  Let \(i \in \calI_X\).
  By Lemma~\ref{lem:pp-closed}, \(A \cdot i \subseteq \V_s^{X_i}\) is a propositional algebra; moreover, it is also an initial cone because \(A\) and \(i\) are initial.
  Thus, since \(\cC\) characterizes initiality on \(\calI\), \(A \cdot i\) is \(\cC\)-dense;
  that is, \(\cC(A\cdot i) = \Cats{\V}_\sym(X_i,\V_s)\).
  Then, given that every \(\lambda \in \Lambda\) is continuous,
  \(
    \cC(\Lambda(A \cdot i)) = \cC(\Lambda(\Cats{\V}_\sym(X_i,\V_s)))
  \).
  Consequently, as \(\ftF\) is \(\Lambda\)-Kantorovich, \(\cC(\Lambda(A \cdot i))\) is initial, which entails that \(\Lambda(A \cdot i)\) is initial by Lemma~\ref{lem:closure-dense}(\ref{p:42}).
  Therefore, because every predicate lifting is a natural transformation, \(\Lambda(A) \cdot \ftF i = \Lambda(A \cdot i)\) is initial.
\end{proof}
  Let \((X,a,\alpha)\) be an \(\ftF\)-coalgebra and \(\calI\) a class of symmetric \(\V\)-categories such that~\(\ftF\) coreflects initiality in \(\calI\) and \(\cC\) characterizes initiality on \(\calI\).
  By Lemma~\ref{lem:pp-closed}, the set \(\Sema{\calL(\Lambda)} \subseteq \V_s^{(X,ld_\alpha^\Lambda)}\) is a propositional algebra, which is also an initial cone by definition of logical distance.
  Hence, by Lemma~\ref{p:26}, the cone \(\Lambda(\Sema{\calL(\Lambda)})\) is initial.
  Therefore, by Lemma~\ref{lem:ld-le-bd}, \(ld_\alpha^\Lambda \leq \bd_\alpha^\ftF\).
\qed

\subsection{Proof of~\autoref{thm:kl-discrete}}
\begin{proposition}
  If \(\Lambda\) is a separating collection of predicate liftings for a functor \(\ftF \colon \SET \to \SET\), then the Kantorovich lifting \(\ftF^\Lambda \colon \EQ \to \EQ\) preserves discrete equivalence relations.
\end{proposition}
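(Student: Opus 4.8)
The plan is to unwind the definition of the Kantorovich lifting at a discrete equivalence relation and to recognise the resulting $\two$-category structure on $\ftF X$ as the kernel of the cone $(\lambda(g))_{\lambda\in\Lambda,\,g}$, which is trivial precisely because $\Lambda$ is separating. Throughout, recall that for $\V=\two$ we have $\Cats{\two}_\sym=\EQ$, so that $\ftF^\Lambda$ is directly a functor $\EQ\to\EQ$, and that $\two_s$ is the two-element discrete equivalence relation, i.e.\ $\hom_s(u,v)=\top$ iff $u=v$.

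First I would fix a set $X$ and write $\Delta_X$ for the discrete (equality) equivalence relation on $X$, and observe that every map $g\colon X\to\ftII{\two_s}$ is a $\two$-functor $(X,\Delta_X)\to\two_s$: the defining inequality $\Delta_X(x,y)\le\hom_s(g(x),g(y))$ holds automatically for $x=y$ and is vacuous for $x\neq y$. Under the isomorphism $\SET(X,\two)\cong\mathcal{P} X$ these maps are exactly the subsets of $X$. Hence the structured cone defining $\ftF^\Lambda(X,\Delta_X)$ consists of \emph{all} maps $\lambda(g)\colon\ftF X\to\ftII{\two_s}$ with $\lambda\in\Lambda$ and $g\colon X\to\two$, which is precisely the family appearing in the definition of separation.

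Then I would compute the initial structure $b$ on $\ftF X$ explicitly: for $t,t'\in\ftF X$,
\[
  b(t,t')=\bigwedge\{\hom_s(\lambda(g)(t),\lambda(g)(t'))\mid\lambda\in\Lambda,\ g\colon X\to\two\}.
\]
Since $\hom_s$ in $\two_s$ tests equality, $b(t,t')=\top$ iff $\lambda(g)(t)=\lambda(g)(t')$ for all $\lambda\in\Lambda$ and all $g\colon X\to\two$. As $\Lambda$ is separating, the cone $(\lambda(g)\colon\ftF X\to\two)_{\lambda\in\Lambda,\,g}$ is mono, hence jointly injective in $\SET$, so $b(t,t')=\top$ forces $t=t'$; and $b(t,t)=\top$ holds by reflexivity. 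Therefore $b=\Delta_{\ftF X}$, i.e.\ $\ftF^\Lambda(X,\Delta_X)$ is the discrete equivalence relation on $\ftF X$, as claimed.

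There is no genuinely hard step here: the argument is a definitional unwinding. The only point that needs a little care is the first one, namely checking that at the discrete structure the cone used to define $\ftF^\Lambda(X,\Delta_X)$ coincides with the full family $(\lambda(g))_{\lambda,\,g}$ witnessing separation; once this identification is in place, ``$\ftF^\Lambda$ preserves discreteness'' is just the statement that this family is jointly monic, restated relationally.
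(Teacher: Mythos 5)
Your proof is correct and takes essentially the same route as the paper's own (very terse) argument: the paper simply observes that the cone of all $\lambda(f)\colon\ftF^\Lambda(X,1_X)\to(2,1_2)$ is initial by definition of the Kantorovich lifting and mono by separation, hence the lifted structure is discrete. Your write-up merely makes explicit the two points the paper leaves implicit, namely that at the discrete structure every map $X\to\two$ is a $\two$-functor (so the defining cone coincides with the separation family) and that the initial structure computed via $\hom_s$ is exactly the kernel of that jointly injective family.
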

\begin{proof}
  By definition of Kantorovich lifting and separating set of predicate liftings, the cone of all maps \(\lambda(f) \colon \ftF^\Lambda (X, 1_X) \to (2,1_2)\) with \(\lambda \in \Lambda\) and \(f \colon X \to 2 \) is initial and mono.
  Therefore, \(\ftF^\Lambda(X,1_X)\) is the discrete equivalence relation on \(\ftF X\).
\end{proof}

\begin{lemma}
  \label{p:14}
  Let \(\ftbF \colon \EQ \to \EQ\) be a lifting of a functor \(\ftF \colon \SET \to \SET\) and \((X,\alpha)\) an \(\ftF\)-coalgebra.
If \(\ftbF (X,1_X) = (\ftF X, 1_{\ftF X})\) then \(bd_\alpha^{\ftbF} \leq beq_\alpha\).
\end{lemma}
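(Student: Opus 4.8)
The plan is to produce, from the $\ftbF$-coalgebra $(X,1_X,\alpha)$, a $\SET$-coalgebra quotient of $(X,\alpha)$ whose quotient map has kernel exactly $bd_\alpha^{\ftbF}$. Since $beq_\alpha$ is the union of the kernels of all $\SET$-coalgebra morphisms out of $(X,\alpha)$ (an equivalence relation, by cocompleteness of $\Coalg{\ftF}$, exactly as in the proof of \autoref{prop:bd-stable}), this yields $bd_\alpha^{\ftbF}\le beq_\alpha$; recall that in $\EQ$ the order on structures is containment of the underlying relations, so it is enough to treat pairs $(x,y)$ with $bd_\alpha^{\ftbF}(x,y)=\top$. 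Two preliminary observations: $(X,1_X,\alpha)$ really is an $\ftbF$-coalgebra, because every map out of a discrete $\two$-category is a $\two$-functor and $\ftbF(X,1_X)=(\ftF X,1_{\ftF X})$ by hypothesis; and, $\ftbF$ being a lifting, every $\ftbF$-coalgebra morphism underlies a $\SET$-coalgebra morphism of the induced $\ftF$-coalgebras.

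The structural input is \autoref{p:23}: as $\ftbF$ is a lifting, $R:=bd_\alpha^{\ftbF}$ is the greatest $\two$-categorical structure on $X$ (above $1_X$) for which $\alpha$ is a $\two$-functor $(X,R)\to\ftbF(X,R)$. Let $\rho\colon(X,R)\to(X/R,1_{X/R})$ be the separated reflection of \autoref{rem:sym-sep}, whose underlying map is the quotient of $X$ by $R$, so that $\ker\rho=R$. Then $\ftbF\rho\cdot\alpha\colon(X,R)\to\ftbF(X/R,1_{X/R})$ is a $\two$-functor; since $\ftbF$ sends the discrete equivalence relation $1_{X/R}$ to a discrete — hence separated — $\two$-category, this $\two$-functor factors through the reflection $\rho$, giving a $\two$-functor $\gamma\colon(X/R,1_{X/R})\to\ftbF(X/R,1_{X/R})$ with $\gamma\cdot\rho=\ftbF\rho\cdot\alpha$. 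Passing to underlying maps and using $\ftII{-}\cdot\ftbF=\ftF\cdot\ftII{-}$ turns this into the statement that $(X/R,\ftII{\gamma})$ is an $\ftF$-coalgebra and $\rho\colon(X,\alpha)\to(X/R,\ftII{\gamma})$ is a $\SET$-coalgebra morphism. Hence $R=\ker\rho\subseteq beq_\alpha$, which is the claim.

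The step I expect to be the crux — and the only point where the discreteness hypothesis on $\ftbF$ is actually used — is the factorization through $\rho$, which requires $\ftbF(X/R,1_{X/R})$ to be separated, i.e.\ $\ftbF$ to preserve discreteness of the quotient $X/R$. In the intended instances $\ftbF$ preserves all discrete equivalence relations (notably $\ftbF=\ftF^\Lambda$ for a separating $\Lambda$, as shown just above), so there is nothing to verify. In general one first applies invariance of $bd$ (\autoref{prop:bd-stable}) along the $\ftbF$-coalgebra morphism $(X,1_X,\alpha)\to(X/beq_\alpha,1_{X/beq_\alpha},\theta)$ onto the behaviour quotient — an $\ftbF$-coalgebra morphism because both carriers are discrete — to reduce to the case that $(X,\alpha)$ is fully abstract, and is then left to confirm that $\ftbF$ keeps the resulting quotient of the discrete carrier discrete. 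Everything else is bookkeeping with the topological structure of $\EQ$ and the universal property of the separated reflection.
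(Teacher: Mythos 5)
There is a genuine gap, and it sits exactly where you predicted: the factorization of $\ftbF\rho\cdot\alpha$ through the reflection $\rho$ needs $\ftbF(X/R,1_{X/R})$ to be discrete, but the lemma's hypothesis only asserts $\ftbF(X,1_X)=(\ftF X,1_{\ftF X})$ for the \emph{original} carrier, not for any quotient of it. Your fallback does not close this: reducing along $(X,1_X,\alpha)\to(X/beq_\alpha,1_{X/beq_\alpha},\theta)$ via \autoref{prop:bd-stable} merely relocates the problem, since you then need $\ftbF$ to send the discrete structure on a further quotient of $X/beq_\alpha$ to a discrete structure, which is again not among the hypotheses. As written, the argument only proves the lemma under the stronger assumption that $\ftbF$ preserves \emph{all} discrete equivalence relations (which suffices for \autoref{thm:kl-discrete} but not for the lemma as stated).

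The paper avoids the quotient carrier altogether: given $\sim$ with $\alpha\colon(X,\sim)\to\ftbF(X,\sim)$ in $\EQ$, it picks a section $s$ of the quotient map $q\colon X\to X/{\sim}$ and observes that $s\cdot q\colon(X,\sim)\to(X,1_X)$ is a $\two$-functor, so $\ftbF(s\cdot q)\cdot\alpha$ lands in $\ftbF(X,1_X)=(\ftF X,1_{\ftF X})$ and is therefore constant on $\sim$-classes; this yields an $\ftF$-coalgebra structure $\beta=\ftF(s\cdot q)\cdot\alpha$ on the set $X$ itself with $s\cdot q$ a coalgebra morphism $(X,\alpha)\to(X,\beta)$ whose kernel is $\sim$. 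Only discreteness of $\ftbF(X,1_X)$ is ever invoked. Your gap is closable by the same device: $(X/R,1_{X/R})$ is a retract of $(X,1_X)$ in $\EQ$ (via $q$ and $s$), hence $\ftbF(X/R,1_{X/R})$ is a retract of the discrete $(\ftF X,1_{\ftF X})$ and is itself discrete, after which your factorization goes through. But that observation is the entire content of the crux step, and it is missing from your proof.
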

\begin{proof}
  Let \(\sim\) be an equivalence relation on \(X\) such that \(\alpha \colon (X,\sim) \to \ftbF(X,\sim) \in \EQ\).
  Consider the quotient map \(q \colon X \to X/{\sim}\) and a section \(s \colon X/{\sim} \to X\) of \(q\).
  Then, both \(q \colon (X,\sim) \to (X/{\sim}, 1_{X/{\sim}})\) and \(s \colon (X/{\sim},1_{X/{\sim}}) \to (X,1_X)\) belong to \(\EQ\).
  Hence, \(\ftbF(s\cdot q) \cdot \alpha \colon (X,\sim) \to (\ftF X, 1_{\ftF X})\) belongs to \(\EQ\).
  Consequently, \(\ftF(s \cdot q) \cdot \alpha \cdot s \cdot q = \ftF(s\cdot q) \cdot \alpha\).
  That is, the map \(\ftF(s \cdot q) \cdot \alpha\) determines an \(\ftF\)-coalgebra \((X,\beta)\) such that \(s \cdot q \colon (X,\alpha) \to (X,\beta)\) is a morphism of \(\ftF\)-coalgebras, which implies \(\sim\, \leq beq_\alpha\).
  Therefore, \(bd^{\ftbF}_\alpha \leq beq_\alpha\).
\end{proof}

\autoref{thm:kl-discrete} is now and immediate consequence of Proposition~\ref{prop:bd-stable} and Lemma~\ref{p:14}.
\qed

We also record the following observation.
\begin{theorem}
  Let \(\ftbF \colon \EQ \to \EQ\) be a lifting that preserves initial morphisms of a functor \(\ftF \colon \SET \to \SET\) that admits a terminal coalgebra \((Z,\gamma)\).
  Then, for every \(\ftF\)-coalgebra \((X,\alpha)\), \(bd_\alpha^\ftbF = beq_\alpha\) iff \(\ftbF(Z,1_Z) = (\ftF Z, 1_{\ftF Z})\).
\end{theorem}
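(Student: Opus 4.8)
The plan is to reduce the biconditional to the special case of the terminal coalgebra and then to extract it from the fixed-point description of behavioural distance given in Remark~\ref{p:23}. Throughout I view an $\ftF$-coalgebra $(X,\alpha)$ as the $\ftbF$-coalgebra $(X,1_X,\alpha)$, i.e.\ equip $X$ with the discrete equivalence relation $1_X$ (this is consistent with Lemma~\ref{p:14}); note that $\alpha$ is then automatically a morphism in $\EQ$, as is any map out of a discrete object. I write $u_X\colon(X,\alpha)\to(Z,\gamma)$ for the unique $\ftF$-coalgebra morphism, so that $u_Z=\id_Z$, and recall from Lambek's Lemma that $\gamma\colon Z\to\ftF Z$ is an isomorphism, in particular surjective.

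First I would observe that, since $\ftbF$ is a lifting of $\ftF$, the coalgebra square for $u_X$ commutes already in $\EQ$: on underlying sets it is the square witnessing that $u_X$ is an $\ftF$-coalgebra morphism, and each of its four arrows is a morphism in $\EQ$ (three of them have discrete domain, the fourth is $\ftbF u_X$). Hence $u_X\colon(X,1_X,\alpha)\to(Z,1_Z,\gamma)$ is an $\ftbF$-coalgebra morphism, and Proposition~\ref{prop:bd-stable} (applied with $\V=\two$) gives $bd_\alpha^{\ftbF}(x,y)=bd_\gamma^{\ftbF}(u_X(x),u_X(y))$ for all $x,y\in X$. On the classical side, terminality forces any coalgebra morphism $f$ with $f(x)=f(y)$ to factor $u_X$ through it, so $beq_\alpha(x,y)=\top$ iff $u_X(x)=u_X(y)$; in particular $beq_\gamma=1_Z$. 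Combining these, ``$bd_\alpha^{\ftbF}=beq_\alpha$ for every $\ftF$-coalgebra'' is equivalent to $bd_\gamma^{\ftbF}=1_Z$: the forward direction is the instance $(X,\alpha)=(Z,\gamma)$, and conversely, if $bd_\gamma^{\ftbF}=1_Z$ then $bd_\alpha^{\ftbF}(x,y)=\top$ iff $u_X(x)=u_X(y)$ iff $beq_\alpha(x,y)=\top$, so $bd_\alpha^{\ftbF}=beq_\alpha$. It therefore remains to prove $bd_\gamma^{\ftbF}=1_Z$ iff $\ftbF(Z,1_Z)=(\ftF Z,1_{\ftF Z})$.

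The implication from right to left is then immediate from Lemma~\ref{p:14} applied with $X:=Z$: it yields $bd_\gamma^{\ftbF}\le beq_\gamma=1_Z$, and since $bd_\gamma^{\ftbF}$ is a $\two$-category structure it is reflexive, i.e.\ $1_Z\le bd_\gamma^{\ftbF}$, so equality holds. For the converse I would argue contrapositively via Remark~\ref{p:23}. As $\ftbF$ preserves initial morphisms, $bd_\gamma^{\ftbF}$ is the greatest fixed point of the monotone endofunction $\Phi$ on the (complete) lattice of equivalence relations on $Z$, ordered by inclusion, that sends $b$ to the initial structure along $\gamma\colon Z\to\ftII{\ftbF(Z,b)}=\ftF Z$; explicitly $\Phi(b)=\{(w,w')\mid \gamma(w)\approx_b\gamma(w')\}$, where $\approx_b$ is the equivalence relation underlying $\ftbF(Z,b)$. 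Now suppose $\ftbF(Z,1_Z)=(\ftF Z,\approx)$ with $\approx$ properly containing the diagonal (it contains it, natural orders being reflexive); pick $s\neq t$ with $s\approx t$ and, using surjectivity of $\gamma$, preimages $w\in\gamma^{-1}(s)$ and $w'\in\gamma^{-1}(t)$. Then $w\neq w'$ while $(w,w')\in\Phi(1_Z)$, so $\Phi(1_Z)$ properly contains $1_Z$. Since $1_Z$ is the bottom of the lattice and $\Phi$ is monotone, $\Phi(1_Z)\le\Phi(\Phi(1_Z))$, so $\Phi(1_Z)$ is a post-fixed point and hence $bd_\gamma^{\ftbF}\ge\Phi(1_Z)>1_Z$; in particular $bd_\gamma^{\ftbF}\neq 1_Z$, as required.

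The step I expect to take the most care is the reduction: checking precisely that $u_X$ lifts to an $\ftbF$-coalgebra morphism, that Proposition~\ref{prop:bd-stable} applies verbatim for $\V=\two$, and that the greatest-fixed-point description of $bd_\gamma^{\ftbF}$ from Remark~\ref{p:23} is the one anchored at the discrete coalgebra $(Z,1_Z,\gamma)$. Once this scaffolding is in place both implications are short, the only ingredients being Lemma~\ref{p:14} together with reflexivity in one direction, and the elementary fact that a post-fixed point of a monotone map lies below its greatest fixed point in the other. As a byproduct, the fixed-point argument also re-derives the conclusion of Lemma~\ref{p:14} for $(Z,\gamma)$, which may be worth remarking.
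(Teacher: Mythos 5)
Your proof is correct and follows essentially the same route as the paper: both reduce the statement to the terminal coalgebra (via Proposition~\ref{prop:bd-stable} and the fact that $beq_\alpha(x,y)$ holds iff $u_X(x)=u_X(y)$), and both obtain the right-to-left implication from Lemma~\ref{p:14} applied to $(Z,\gamma)$ together with reflexivity of $bd_\gamma^{\ftbF}$; your write-up merely makes explicit the lifting of $u_X$ to an $\ftbF$-coalgebra morphism, which the paper leaves implicit. The one genuine divergence is the left-to-right direction. The paper argues directly: by Remark~\ref{p:23} the behavioural distance is a fixed point of the monotone map $\Phi$, so $\gamma\colon(Z,bd_\gamma^{\ftbF})\to\ftbF(Z,bd_\gamma^{\ftbF})$ is initial and, by Lambek's lemma, bijective, hence an isomorphism in $\EQ$; thus $bd_\gamma^{\ftbF}=1_Z$ forces $\ftbF(Z,1_Z)$ to be isomorphic to the discrete object $(Z,1_Z)$ and therefore discrete. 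You instead argue the contrapositive, exhibiting $\Phi(1_Z)$ as a post-fixed point strictly above $1_Z$ and invoking the Knaster--Tarski description of the greatest fixed point. Both arguments rest on the same two ingredients (the fixed-point description of $bd_\gamma^{\ftbF}$ from Remark~\ref{p:23} and bijectivity of $\gamma$), and both are sound; the paper's version is slightly shorter, while yours avoids having to note that initial bijections in a topological category are isomorphisms.
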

\begin{proof}
  Note that, by definition of terminal coalgebra, \(beq_\gamma = 1_Z\).
  Hence, if \(bd_\gamma^{\ftbF} = beq_\gamma\) then \(\ftbF(Z,1_Z) = (\ftF Z, 1_{\ftF Z})\), since \(\gamma \colon (Z,bd_\gamma^\ftbF) \to \ftbF(Z,bd_\gamma^\ftbF)\) is an isomorphism in \(\EQ\).
  The converse statement follows by Proposition~\ref{prop:bd-stable} and Lemma~\ref{p:14}.
\end{proof}

\subsection{Proof of~\autoref{thm:sw-cInfsup}}
First, we show the following
\begin{lemma}
  \label{p:20}
  Suppose that \(\V\) satisfies the condition that for every \(u \in \V\) the map \(u \otimes -\) preserves codirected infima.
  Furthermore, let \(X\) be a symmetric \(\V\)-category and \(A \subseteq \V_s^X\) be an initial cone.
  For every \(f \colon X \to \V_s\),
  \[
    f = \bigvee_{x \in X} \bigwedge_{\psi \in A} \phi_{x,\psi},
  \]
  where, \(\phi_{x,\psi} \colon X \to \V_s\) is the \(\V\)-functor \(f(x) \otimes \hom_s(\psi(x), \psi)\).
\end{lemma}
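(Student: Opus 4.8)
The plan is to verify the identity pointwise, i.e.\ to fix $z\in X$ and show that $f(z)=\bigvee_{x\in X}\bigwedge_{\psi\in A}f(x)\otimes\hom_s(\psi(x),\psi(z))$. Write $a$ for the $\V$-category structure of $X$; since $A$ is an initial cone, $a(x,y)=\bigwedge_{\psi\in A}\hom_s(\psi(x),\psi(y))$ for all $x,y$. The inequality ``$\geq$'' is then immediate: instantiating the outer supremum at $x=z$ produces the term $\bigwedge_{\psi\in A}f(z)\otimes\hom_s(\psi(z),\psi(z))$, and since $k\leq\hom(v,v)=\hom_s(v,v)$ for every $v\in\V$ and $\otimes$ is monotone, each factor is $\geq f(z)\otimes k=f(z)$; hence this term, and a fortiori the supremum, is $\geq f(z)$.

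For ``$\leq$'' it suffices to show that $\bigwedge_{\psi\in A}f(x)\otimes\hom_s(\psi(x),\psi(z))\leq f(z)$ for each fixed $x\in X$. The crucial move is to pull $f(x)\otimes(-)$ out of the infimum over $\psi$. By \autoref{lem:form-codir} (which applies because $A$ is closed under $\wedge$ and under $\hom_s(u,-)$ for all $u$, as it is when $A$ is a propositional algebra), the family of $\V$-functors $\bigl(\hom_s(\psi(x),\psi(-))\bigr)_{\psi\in A}$ is codirected in $\V_s^X$; in particular $\bigl(\hom_s(\psi(x),\psi(z))\bigr)_{\psi\in A}$ is a codirected family in $\V$. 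Since by hypothesis $f(x)\otimes(-)$ preserves codirected infima, $\bigwedge_{\psi\in A}f(x)\otimes\hom_s(\psi(x),\psi(z))=f(x)\otimes\bigwedge_{\psi\in A}\hom_s(\psi(x),\psi(z))=f(x)\otimes a(x,z)$. Now $f$ is a $\V$-functor into $\V_s$, so $a(x,z)\leq\hom_s(f(x),f(z))\leq\hom(f(x),f(z))$, and the counit of the adjunction $f(x)\otimes(-)\dashv\hom(f(x),-)$ gives $f(x)\otimes\hom(f(x),f(z))\leq f(z)$; chaining these inequalities yields $f(x)\otimes a(x,z)\leq f(z)$, hence the desired bound. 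Taking the supremum over $x\in X$ completes ``$\leq$'', and together with ``$\geq$'' it gives the pointwise identity, hence the equality of functions.

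The only genuinely delicate point is the commutation step $\bigwedge_{\psi}f(x)\otimes\hom_s(\psi(x),\psi(z))=f(x)\otimes\bigwedge_{\psi}\hom_s(\psi(x),\psi(z))$: for an arbitrary indexing family this can fail, and it is precisely the codirectedness furnished by \autoref{lem:form-codir}---which is where the closure of $A$ under $\wedge$ and $\hom_s(u,-)$ is needed---together with the standing assumption that each $u\otimes(-)$ preserves codirected infima that makes it go through. Everything else is routine bookkeeping with the quantale adjunction and the definition of the initial structure.
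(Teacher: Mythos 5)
Your argument is correct and follows essentially the same route as the paper's proof: the pointwise Yoneda-type identity $f(z)=\bigvee_{x}f(x)\otimes a(x,z)$ (with ``$\geq$'' coming from functoriality of $f$ together with the adjunction $f(x)\otimes-\dashv\hom(f(x),-)$, and ``$\leq$'' from instantiating at $x=z$), followed by commuting $f(x)\otimes-$ past the infimum over $\psi$ using preservation of codirected infima. The one point worth flagging is the additional hypothesis you invoke: you justify codirectedness of the family $\bigl(\hom_s(\psi(x),\psi(z))\bigr)_{\psi\in A}$ via Lemma~\ref{lem:form-codir}, which requires $A$ to be closed under $\wedge$ and under $\hom_s(u,-)$ --- but the statement of Lemma~\ref{p:20} assumes only that $A$ is an initial cone, and for a bare initial cone this family need not be codirected, so the nontrivial direction of the exchange $\bigwedge_{\psi}(f(x)\otimes v_\psi)\leq f(x)\otimes\bigwedge_{\psi}v_\psi$ is not available from the stated hypotheses alone. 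The paper's own proof has exactly the same issue: it performs this exchange with the phrase ``by hypothesis on $\V$'' and never says where codirectedness comes from. Since the lemma is only ever applied (in the proofs of Theorems~\ref{thm:sw-cInfsup} and~\ref{thm:sw-cInf}) to propositional algebras $A$, your extra assumption is available in every case that matters and your proof is sound there; you have in effect correctly identified that the lemma's hypotheses, as literally stated, support neither your exchange step nor the paper's, and that the statement should really require $A$ to be closed under $\wedge$ and $\hom_s(u,-)$.
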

\begin{proof}

Now, since \(A\) is initial, for all \(x,y \in X\),

  \[
    \bigwedge_{\psi \in A} \hom_s(\psi(x),\psi(y)) = a(x,y) \leq \hom_s(f(x),f(y)) \leq \hom(f(x),f(y)).
  \]
  Hence, by hypothesis on \(\V\), for every \(y \in Y\),
  \begin{align*}
    f(y) &= \bigvee_{x \in X} \left ( f(x) \otimes \bigwedge_{\psi \in A} \hom_s(\psi(x),\psi(y)) \right) \\
         &= \bigvee_{x \in X} \bigwedge_{\psi \in A} f(x) \otimes \hom_s(\psi(x),\psi(y)).
  \end{align*}
\end{proof}

  Let \(X\) be a symmetric and finite \(\V\)-category and \(f \colon X \to \V_s\) be a \(\V\)-functor.
  Additionally, let \(A \subseteq \V_s^X\) be a propositional algebra that is also an initial cone.
  By Lemma~\ref{p:20},
  \[
    f = \bigvee_{x \in X} \bigwedge_{\psi \in A} \phi_{x,\psi},
  \]
  where, \(\phi_{x,\psi} \colon X \to \V_s\) is the \(\V\)-functor \(f(x) \otimes \hom_s(\psi(x), \psi)\).
  Since \(A\) is a propositional algebra, for every \(x \in X\) the set \(\{\phi_{x,\psi} \mid \psi \in A\}\) is contained in \(A\) and, by Lemma~\ref{lem:form-codir}, it is codirected.
  Therefore, \(f \in \ccInf(A)\).
\qed

\subsection{Proof of~\autoref{thm:sw-cInf}}
  Let \(X\) be a symmetric and finite \(\V\)-category and \(f \colon X \to \V_s\) be a \(\V\)-functor.
  Additionally, let \(A \subseteq \V_s^X\) be a propositional algebra that is also an initial cone.
  By Lemma~\ref{p:20},
  \[
  	f = \bigvee_{x \in X} \bigwedge_{\psi \in A} \phi_{x,\psi},
  \]
  where \(\phi_{x,\psi} \colon X \to \V_s\) is the \(\V\)-functor \(f(x) \otimes \hom_s(\psi(x),\psi)\).
  Hence, since \(\V\) is completely distributive,
   \[
  	f = \bigwedge_{g \in C} \bigvee_{x \in X} \phi_{x, g(x)},
  \]
  where, \(C\) is the set of choice functions that for every \(x \in X\) choose an index \(g(x) \in A\).
  Therefore, as \(X\) is finite and \(A\) is a propositional algebra, \(f \in \cInf(A)\).
\qed


\begin{thebibliography}{10}

\bibitem{AHS90}
Ji{\v{r}\'i} Ad{\'a}mek, Horst Herrlich, and George~E. Strecker.
\newblock {\em Abstract and concrete categories: {T}he joy of cats}.
\newblock John Wiley \& Sons Inc., 1990.
\newblock Republished in: Reprints in Theory and Applications of Categories,
  No. 17 (2006) pp.~1--507.
\newblock URL: \url{http://tac.mta.ca/tac/reprints/articles/17/tr17abs.html}.

\bibitem{DBLP:journals/jcss/AdamekMMU15}
Jir{\'{\i}} Ad{\'{a}}mek, Stefan Milius, Lawrence~S. Moss, and Henning Urbat.
\newblock On finitary functors and their presentations.
\newblock {\em J. Comput. Syst. Sci.}, 81(5):813--833, 2015.
\newblock \href {https://doi.org/10.1016/j.jcss.2014.12.002}
  {\path{doi:10.1016/j.jcss.2014.12.002}}.

\bibitem{AhnEA17}
Ki~Yung Ahn, Ross Horne, and Alwen Tiu.
\newblock A characterisation of open bisimilarity using an intuitionistic modal
  logic.
\newblock In Roland Meyer and Uwe Nestmann, editors, {\em Concurrency Theory,
  {CONCUR} 2017}, volume~85 of {\em LIPIcs}, pages 7:1--7:17. Schloss Dagstuhl
  -- Leibniz-Zentrum f{\"{u}}r Informatik, 2017.
\newblock \href {https://doi.org/10.4230/LIPIcs.CONCUR.2017.7}
  {\path{doi:10.4230/LIPIcs.CONCUR.2017.7}}.

\bibitem{AFS06}
Claudi Alsina, Maurice~J. Frank, and Berthold Schweizer.
\newblock {\em Associative functions. Triangular norms and copulas}.
\newblock World Scientific, 2006.
\newblock \href {https://doi.org/10.1142/9789812774200}
  {\path{doi:10.1142/9789812774200}}.

\bibitem{Awodey10}
Steve Awodey.
\newblock {\em Category Theory}.
\newblock Oxford University Press, 2nd edition, 2010.

\bibitem{BBKK18}
Paolo Baldan, Filippo Bonchi, Henning Kerstan, and Barbara K{\"{o}}nig.
\newblock Coalgebraic behavioral metrics.
\newblock {\em Log.\ Methods Comput.\ Sci.}, 14(3):1860--5974, 2018.
\newblock \href {https://doi.org/10.23638/lmcs-14(3:20)2018}
  {\path{doi:10.23638/lmcs-14(3:20)2018}}.

\bibitem{BeoharEA22}
Harsh Beohar, Sebastian Gurke, Barbara K{\"{o}}nig, and Karla Messing.
\newblock Hennessy-{M}ilner theorems via {G}alois connections.
\newblock {\em CoRR}, abs/2207.05407, 2022.
\newblock \href {http://arxiv.org/abs/2207.05407} {\path{arXiv:2207.05407}},
  \href {https://doi.org/10.48550/arXiv.2207.05407}
  {\path{doi:10.48550/arXiv.2207.05407}}.

\bibitem{BouyerEA08}
Patricia Bouyer, Ulrich Fahrenberg, Kim~Guldstrand Larsen, Nicolas Markey, and
  Jir{\'{\i}} Srba.
\newblock Infinite runs in weighted timed automata with energy constraints.
\newblock In Franck Cassez and Claude Jard, editors, {\em Formal Modeling and
  Analysis of Timed Systems, {FORMATS} 2008}, volume 5215 of {\em LNCS}, pages
  33--47. Springer, 2008.
\newblock \href {https://doi.org/10.1007/978-3-540-85778-5\_4}
  {\path{doi:10.1007/978-3-540-85778-5\_4}}.

\bibitem{CKP+11}
Corina C{\^{\i}}rstea, Alexander Kurz, Dirk Pattinson, Lutz Schr{\"{o}}der, and
  Yde Venema.
\newblock Modal logics are coalgebraic.
\newblock {\em Comput. J.}, 54(1):31--41, 2011.
\newblock \href {https://doi.org/10.1093/comjnl/bxp004}
  {\path{doi:10.1093/comjnl/bxp004}}.

\bibitem{AlfaroEA09}
Luca de~Alfaro, Marco Faella, and Mari{\"{e}}lle Stoelinga.
\newblock Linear and branching system metrics.
\newblock {\em {IEEE} Trans.\ Software Eng.}, 35(2):258--273, 2009.
\newblock \href {https://doi.org/10.1109/TSE.2008.106}
  {\path{doi:10.1109/TSE.2008.106}}.

\bibitem{DBLP:journals/iandc/DesharnaisEP02}
Jos{\'{e}}e Desharnais, Abbas Edalat, and Prakash Panangaden.
\newblock Bisimulation for labelled markov processes.
\newblock {\em Inf. Comput.}, 179(2):163--193, 2002.
\newblock \href {https://doi.org/10.1006/inco.2001.2962}
  {\path{doi:10.1006/inco.2001.2962}}.

\bibitem{FahrenbergEA11}
Uli Fahrenberg, Axel Legay, and Claus Thrane.
\newblock The quantitative linear-time--branching-time spectrum.
\newblock In Supratik Chakraborty and Amit Kumar, editors, {\em Foundations of
  Software Technology and Theoretical Computer Science, {FSTTCS} 2011},
  volume~13 of {\em LIPIcs}, pages 103--114. Schloss Dagstuhl --
  Leibniz-Zentrum für Informatik, 2011.
\newblock \href {https://doi.org/10.4230/LIPIcs.FSTTCS.2011.103}
  {\path{doi:10.4230/LIPIcs.FSTTCS.2011.103}}.

\bibitem{FilimonovEA19}
Ihor Filimonov, Ross Horne, Sjouke Mauw, and Zach Smith.
\newblock Breaking unlinkability of the {ICAO} 9303 standard for e-passports
  using bisimilarity.
\newblock In Kazue Sako, Steve~A. Schneider, and Peter Y.~A. Ryan, editors,
  {\em Computer Security,{ESORICS} 2019}, volume 11735 of {\em LNCS}, pages
  577--594. Springer, 2019.
\newblock \href {https://doi.org/10.1007/978-3-030-29959-0\_28}
  {\path{doi:10.1007/978-3-030-29959-0\_28}}.

\bibitem{Flagg97}
Robert Flagg.
\newblock Quantales and continuity spaces.
\newblock {\em Algebra Univ.}, 37(3):257--276, 1997.

\bibitem{GiacaloneEA90}
Alessandro Giacalone, Chi{-}Chang Jou, and Scott Smolka.
\newblock Algebraic reasoning for probabilistic concurrent systems.
\newblock In Manfred Broy, editor, {\em Programming concepts and methods, PCM
  1990}, pages 443--458. North-Holland, 1990.

\bibitem{HansenEA09}
Helle~Hvid Hansen, Clemens Kupke, and Eric Pacuit.
\newblock Neighbourhood structures: Bisimilarity and basic model theory.
\newblock {\em Log.\ Methods Comput.\ Sci.}, 5(2), 2009.
\newblock \href {https://doi.org/10.2168/LMCS-5(2:2)2009}
  {\path{doi:10.2168/LMCS-5(2:2)2009}}.

\bibitem{HennessyMilner85}
Matthew Hennessy and Robin Milner.
\newblock Algebraic laws for nondeterminism and concurrency.
\newblock {\em J.\ {ACM}}, 32(1):137--161, 1985.
\newblock \href {https://doi.org/10.1145/2455.2460}
  {\path{doi:10.1145/2455.2460}}.

\bibitem{HT10}
Dirk Hofmann and Walter Tholen.
\newblock {L}awvere completion and separation via closure.
\newblock {\em Applied Categorical Structures}, 18(3):259--287, November 2010.
\newblock \href {http://arxiv.org/abs/0801.0199} {\path{arXiv:0801.0199}},
  \href {https://doi.org/10.1007/s10485-008-9169-9}
  {\path{doi:10.1007/s10485-008-9169-9}}.

\bibitem{DBLP:conf/birthday/Klin09}
Bartek Klin.
\newblock Structural operational semantics for weighted transition systems.
\newblock In Jens Palsberg, editor, {\em Semantics and Algebraic Specification,
  Essays Dedicated to Peter D. Mosses on the Occasion of His 60th Birthday},
  volume 5700 of {\em LNCS}, pages 121--139. Springer, 2009.
\newblock \href {https://doi.org/10.1007/978-3-642-04164-8\_7}
  {\path{doi:10.1007/978-3-642-04164-8\_7}}.

\bibitem{KKK+21}
Yuichi Komorida, Shin{-}ya Katsumata, Clemens Kupke, Jurriaan Rot, and Ichiro
  Hasuo.
\newblock Expressivity of quantitative modal logics : Categorical foundations
  via codensity and approximation.
\newblock In {\em Logic in Computer Science, {LICS} 2021}, pages 1--14. {IEEE},
  2021.
\newblock \href {https://doi.org/10.1109/LICS52264.2021.9470656}
  {\path{doi:10.1109/LICS52264.2021.9470656}}.

\bibitem{KonigMikaMichalski18}
Barbara K{\"{o}}nig and Christina Mika{-}Michalski.
\newblock ({Metric}) bisimulation games and real-valued modal logics for
  coalgebras.
\newblock In Sven Schewe and Lijun Zhang, editors, {\em Concurrency Theory,
  {CONCUR} 2018}, volume 118 of {\em LIPIcs}, pages 37:1--37:17. Schloss
  Dagstuhl -- Leibniz-Zentrum für Informatik, 2018.
\newblock \href {https://doi.org/10.4230/LIPIcs.CONCUR.2018.37}
  {\path{doi:10.4230/LIPIcs.CONCUR.2018.37}}.

\bibitem{DBLP:journals/iandc/LarsenS91}
Kim~Guldstrand Larsen and Arne Skou.
\newblock Bisimulation through probabilistic testing.
\newblock {\em Inf.\ Comput.}, 94(1):1--28, 1991.
\newblock \href {https://doi.org/10.1016/0890-5401(91)90030-6}
  {\path{doi:10.1016/0890-5401(91)90030-6}}.

\bibitem{Law73}
F.~William Lawvere.
\newblock Metric spaces, generalized logic, and closed categories.
\newblock {\em Rendiconti del Seminario Matemàtico e Fisico di Milano},
  43(1):135--166, 1973.
\newblock {Republished in: Reprints in Theory and Applications of Categories,
  No. 1 (2002), 1--37}.
\newblock \href {https://doi.org/10.1007/bf02924844}
  {\path{doi:10.1007/bf02924844}}.

\bibitem{MV15}
Johannes Marti and Yde Venema.
\newblock Lax extensions of coalgebra functors and their logic.
\newblock {\em J.\ Comput.\ System Sci.}, 81(5):880--900, 2015.
\newblock \href {https://doi.org/10.1016/j.jcss.2014.12.006}
  {\path{doi:10.1016/j.jcss.2014.12.006}}.

\bibitem{Pat04}
Dirk Pattinson.
\newblock Expressive logics for coalgebras via terminal sequence induction.
\newblock {\em Notre Dame Journal of Formal Logic}, 45(1):19--33, 2004.
\newblock \href {https://doi.org/10.1305/ndjfl/1094155277}
  {\path{doi:10.1305/ndjfl/1094155277}}.

\bibitem{RivieccioEA17}
Umberto Rivieccio, Achim Jung, and Ramon Jansana.
\newblock Four-valued modal logic: Kripke semantics and duality.
\newblock {\em J.\ Log.\ Comput.}, 27(1):155--199, 2017.
\newblock \href {https://doi.org/10.1093/logcom/exv038}
  {\path{doi:10.1093/logcom/exv038}}.

\bibitem{Rutten00}
J.~Rutten.
\newblock Universal coalgebra: A theory of systems.
\newblock {\em Theor.\ Comput.\ Sci.}, 249:3--80, 2000.
\newblock \href {https://doi.org/10.1016/S0304-3975(00)00056-6}
  {\path{doi:10.1016/S0304-3975(00)00056-6}}.

\bibitem{Sch08}
Lutz Schr{\"{o}}der.
\newblock Expressivity of coalgebraic modal logic: The limits and beyond.
\newblock {\em Theor.\ Comput.\ Sci.}, 390(2-3):230--247, 2008.
\newblock \href {https://doi.org/10.1016/j.tcs.2007.09.023}
  {\path{doi:10.1016/j.tcs.2007.09.023}}.

\bibitem{DBLP:journals/tcs/BreugelHMW07}
Franck van Breugel, Claudio Hermida, Michael Makkai, and James Worrell.
\newblock Recursively defined metric spaces without contraction.
\newblock {\em Theor. Comput. Sci.}, 380(1-2):143--163, 2007.
\newblock \href {https://doi.org/10.1016/j.tcs.2007.02.059}
  {\path{doi:10.1016/j.tcs.2007.02.059}}.

\bibitem{DBLP:journals/tcs/BreugelW05}
Franck van Breugel and James Worrell.
\newblock A behavioural pseudometric for probabilistic transition systems.
\newblock {\em Theor. Comput. Sci.}, 331(1):115--142, 2005.
\newblock \href {https://doi.org/10.1016/j.tcs.2004.09.035}
  {\path{doi:10.1016/j.tcs.2004.09.035}}.

\bibitem{WS20}
Paul Wild and Lutz Schr{\"{o}}der.
\newblock Characteristic logics for behavioural metrics via fuzzy lax
  extensions.
\newblock In Igor Konnov and Laura Kov{\'{a}}cs, editors, {\em Concurrency
  Theory, {CONCUR} 2020}, volume 171 of {\em LIPIcs}, pages 27:1--27:23.
  Schloss Dagstuhl -- Leibniz-Zentrum f{\"{u}}r Informatik, 2020.
\newblock \href {http://arxiv.org/abs/2007.01033} {\path{arXiv:2007.01033}},
  \href {https://doi.org/10.4230/LIPIcs.CONCUR.2020.27}
  {\path{doi:10.4230/LIPIcs.CONCUR.2020.27}}.

\bibitem{WS21}
Paul Wild and Lutz Schr{\"{o}}der.
\newblock A quantified coalgebraic van benthem theorem.
\newblock In Stefan Kiefer and Christine Tasson, editors, {\em Foundations of
  Software Science and Computation Structures, {FOSSACS} 2021}, volume 12650 of
  {\em LNCS}, pages 551--571. Springer, 2021.
\newblock \href {https://doi.org/10.1007/978-3-030-71995-1\_28}
  {\path{doi:10.1007/978-3-030-71995-1\_28}}.

\bibitem{WS22}
Paul Wild and Lutz Schröder.
\newblock Characteristic logics for behavioural hemimetrics via fuzzy lax
  extensions.
\newblock {\em Log.\ Methods Comput.\ Sci.}, 18(2), 2022.
\newblock \href {https://doi.org/10.46298/lmcs-18(2:19)2022}
  {\path{doi:10.46298/lmcs-18(2:19)2022}}.

\end{thebibliography}
\end{document}